\documentclass{article}

\usepackage{fullpage}
\usepackage{amsmath}
\usepackage{amssymb}

\usepackage{amsthm}
\usepackage{stmaryrd}
\usepackage{nameref,hyperref,cleveref}
\usepackage[normalem]{ulem}
\usepackage[utf8]{inputenc} 
\usepackage[T1]{fontenc}
\usepackage{shuffle}


\usepackage{pxfonts}
\usepackage{proof}

\usepackage{graphicx}
\usepackage[all,2cell]{xy}
\UseTwocells


\usepackage{enumitem}
\setlist[enumerate]{label=\arabic*)}

\newtheorem{theorem}{Theorem}[section]
\crefname{theorem}{thm.}{theorems}
\newtheorem{lemma}[theorem]{Lemma}
\newtheorem{corollary}[theorem]{Corollary}
\crefname{corollary}{cor.}{corollaries}

\newtheorem{proposition}[theorem]{Proposition}
\crefname{proposition}{prop.}{propositions}

\newtheorem{remark}[theorem]{Remark}

\theoremstyle{definition}
\newtheorem{definition}[theorem]{Definition}
\crefname{definition}{defn.}{definitions}

\newtheorem{example}{Example}


\newcommand\definand[1]{{\bf #1}}
\newcommand\defeq{\stackrel{\text{\tiny def}}{=}}

\def\cc#1{{\mathcal{#1}}}

\def\b{\mathbf{b}}
\def\t{\mathbf{t}}
\def\rs{\mathbf{t}}
\newcommand\dom{\mathop{\rm dom}}
\newcommand\cod{\mathop{\rm cod}}
\newcommand\id{{\rm id}}
\newcommand\op{{\rm op}}

\newcommand\Set{ {\mathbf{Set} } }
\newcommand\Cat{ {\mathbf{Cat} } }

\newcommand\Psh{\mathbf{Psh}}

\newcommand\refs{\sqsubset}
\newcommand\seq[1]{\underset{#1}\Longrightarrow}

\newcommand\nseq{\Longrightarrow}
\newcommand\seqd[2]{\overset{\displaystyle #1}{\underset{#2}{\vphantom{X}\Longrightarrow\vphantom{X}}}}
\newcommand\viso{\equiv}
\newcommand\visod[1]{\overset{\displaystyle #1}{\vphantom{X}\viso}}

\let\G=\Gamma
\let\D=\Delta
\newcommand\vd{\vdash}

\newcommand\conv{\sim}
\newcommand\deq{=}

\newcommand\pull[1]{{#1}^*\,}
\newcommand\push[1]{{#1}\,}

\newcommand\ImpL{\multimap}
\newcommand\ImpR{\mathbin{\text{\reflectbox{$\multimap$}}}}

\newcommand\impL[3][]{{#3} \mathbin{\ImpL}_{#1} {#2}}
\newcommand\impR[3][]{{#2} \mathbin{{}_{#1}\ImpR} {#3}}
\newcommand\resL[2][\bot]{{#2} \mathbin{\setminus} {#1}}
\newcommand\resR[2][\bot]{{#1} \mathbin{/} {#2}}

\newcommand\lc[1]{\lambda[{#1}]}
\newcommand\rc[1]{\rho[{#1}]}
\newcommand\plugL{leval}
\newcommand\plugR{reval}

\providecommand\oast{\mathop{\circledast}}
\newcommand\mul{\mathbin{\bullet}}

\newcommand\perpL[2][\bot]{{#2}^{\bot#1}}
\newcommand\perpR[2][\bot]{{}^{\bot#1}{#2}}

\newcommand\negoL[2][]{\perpL[]{#2}}
\newcommand\negoR[2][]{\perpR[]{#2}}

\newcommand\triple[3]{\{#1\}#2\{#3\}}

\newcommand\set[1]{\{\,#1\,\}}

\newcommand\upc{\mathbf{u}}
\newcommand\obc{\mathbf{s}}
\newcommand\commacat[2]{{#1}\downarrow{#2}}

\newcommand\pt[2][]{#2^{+#1}}
\newcommand\ptfun[1]{+#1}
\newcommand\nt[2][]{#2^{-#1}}
\newcommand\ntfun[1]{-#1}

\newcommand\pgro[1]{\partial^+{#1}}
\newcommand\ngro[1]{\partial^-{#1}}

\newcommand\str{m}
\newcommand\costr{a}
\newcommand\costrr{a'}

\newcommand\Typ[1]{\cc{T}^{\sharp{#1}}}
\newcommand\Der[1]{\cc{D}^{\sharp{#1}}}

\newcommand\cut[1][]{{\textstyle\oast_{#1}}}

\newcommand\unit{\iota}
\newcommand\coun{o}

\def\hid#1{}

\newcommand\simplex{\Delta}

\newcommand\fin[1]{{\bf #1}}
\newcommand\linFm{\cc{F}}
\newcommand\linCtx{{\cc{W}}}
\newcommand\FMon[1]{\mathbf{M}[#1]}
\newcommand\FSMon[1]{{\mathbf{SM}[#1]}}
\newcommand\Fin{\mathbf{Fin}}

\newcommand\fctx[2]{#1[#2]}

\newcommand\Diagrams[1]{\Cat \sslash #1}

\newcommand\pairing{p}


\hyphenation{op-tical net-works semi-conduc-tor}

\usepackage{eso-pic}

\title{An Isbell Duality Theorem for Type Refinement Systems}

\author{Paul-André Melliès and Noam Zeilberger}

\begin{document}

\maketitle

\begin{abstract}
Any refinement system (= functor) has a fully faithful representation in the refinement system of presheaves, by interpreting types as relative slice categories, and refinement types as presheaves over those categories.
Motivated by an analogy between side effects in programming and \emph{context effects} in linear logic, we study logical aspects of this ``positive'' (covariant) representation, as well as of an associated ``negative'' (contravariant) representation.
We establish several preservation properties for these representations, including a generalization of Day's embedding theorem for monoidal closed categories.
Then we establish that the positive and negative representations satisfy an Isbell-style duality.
As corollaries, we derive two different formulas for the positive representation of a pushforward (inspired by the classical negative translations of proof theory), which express it either as the dual of a pullback of a dual, or as the double dual of a pushforward.
Besides explaining how these constructions on refinement systems generalize familiar category-theoretic ones (by viewing categories as special refinement systems), our main running examples involve representations of Hoare Logic and linear sequent calculus.
\end{abstract}



\tableofcontents

\section{Introduction}
\label{sec:intro}

This paper continues the study of type systems from the perspective outlined in \cite{mz15popl}.
There, we suggested that it is useful to view a type system as a functor from a category of typing derivations to a category of underlying terms, and that this can even serve as a working \emph{definition} of ``type system'' (or what we call a \emph{refinement system}), as being (in the most general case) simply an arbitrary functor.
\begin{definition}
A \definand{(type) refinement system} is a functor $\rs : \cc{D} \to \cc{T}$.
\end{definition}
\begin{definition}\label{defn:refine}
We say that an object $P \in \cc{D}$ \definand{refines} an object $A \in \cc{T}$ (notated $P \refs A$) if $\rs(P) = A$.
\end{definition}
\begin{definition}\label{defn:judge}
A \definand{typing judgment} is a triple $(P,c,Q)$, where $c$ is a morphism of $\cc{T}$ such that $P \refs \dom(c)$ and $Q \refs \cod(c)$ (notated $P \seq{c} Q$). In the special case where $P$ and $Q$  refine the same object $P,Q \refs A$ and $c$ is the identity morphism $c = \id_A$, the typing judgment $(P,c,Q)$ is also called a \definand{subtyping judgment} (notated $P \nseq Q$).
\end{definition}
\begin{definition}\label{defn:derive}
A \definand{derivation} of a (sub)typing judgment $(P,c,Q)$ is a morphism $\alpha : P \to Q$ in $\cc{D}$ such that $\rs(\alpha) = c$ (notated $\deduce{P \seq{c} Q}{\alpha}$).
\end{definition}
\noindent
It is useful to think of any category $\cc{C}$ as a trivial example of a refinement system in at least two ways -- either as the identity functor $\id_{\cc{C}} : \cc{C} \to \cc{C}$ or as the terminal functor $!_{\cc{C}} : \cc{C} \to 1$ -- and in the sequel we will describe several general constructions on refinement systems, which reduce to classical constructions on categories seen as such degenerate refinement systems.
Another important (and more guiding) example of a refinement system is Hoare Logic \cite{hoare69}:
\begin{itemize}
\item Take $\cc{T}$ to be a category with a single object $W$ representing the state space and morphisms $c : W \to W$ corresponding to state transformers.
\item Take $\cc{D}$ to be a category whose objects $P,Q \in \cc{D}$ are predicates over the state space $W$ and whose morphisms $(c,\alpha) : P \to Q$ are commands $c : W \to W$ equipped with a verification $\alpha$ that $c$ will take any state satisfying $P$ to a state satisfying $Q$.
\item Take $\rs : \cc{D} \to \cc{T}$ to be the evident forgetful functor.
\end{itemize}
In this case, a typing judgment is nothing but a \emph{Hoare triple} $\triple{P}c{Q}$, and what the example highlights is that a typing judgment can describe not just a logical entailment but also a \emph{side effect} (here the transformation $c$ upon the state).

In fact, one of our original motivations for studying this framework came from apparent connections between side effects and linear logic, and in particular its \emph{proof theory} \cite{girard87linear,andreoli92}.
Let us illustrate this idea by considering the right-rule for multiplicative conjunction (``tensor'') in intuitionistic linear logic:
$$
\infer[{\otimes}R]{\G,\D \vd A \otimes B}{\G \vd A & \D \vd B}
$$
Following the tradition of \cite{gentzen1935thesis}, it is common to call $A \otimes B$ the \emph{principal formula} of the ${\otimes}R$ rule, and $A$ and $B$ its \emph{side formulas}.
The letters $\Gamma$ and $\Delta$ then stand for arbitrary sequences of formulas -- often called \emph{contexts} -- which are carried through from the premises into the conclusion.
Now, one can try to internalize the fact that the inference rule is \emph{parametric} in $\G$ and $\D$ by first organizing contexts into some category $\linCtx$.
Assuming a reasonable definition of morphism (between contexts) in $\linCtx$, any formula $C$ then induces a presheaf $\pt{C} : \linCtx^\op \to \Set$ by considering all the proofs of $C$ in a given context:
$$
\pt{C} = \G \mapsto \set{\pi \mid \deduce{\G \vd C}{\pi}}
$$
For example, we could take $\linCtx$ as a category whose objects are lists (or multisets) of linear logic formulas and whose morphisms are \emph{linear substitutions}, i.e., where a morphism $\D \to \G$ is given by a list of proofs
$$
\deduce{\D_1 \vdash A_1}{\pi_1}
\quad\cdots\quad
\deduce{\D_n \vdash A_n}{\pi_n}
$$
such that $\D = \D_1,\dots,\D_n$ and $\G = A_1,\dots,A_n$.
With that definition of $\linCtx$, the functorial action of $\pt{C}$ is just to perform a multicut: given a proof $\pi$ of $\G \vdash C$ and a linear substitution $\sigma : \D \to \G$, one obtains a proof $\pt{C}(\sigma)(\pi)$ of $\D \vdash C$ by cutting the proofs $\sigma = (\pi_1,\dots,\pi_n)$ for the assumptions $\G = A_1,\dots,A_n$ in $\pi$.

Next, noting that for any pair of presheaves
$$\phi_1 : \cc{C}_1^\op \to \Set \quad\text{and}\quad \phi_2 : \cc{C}_2^\op \to \Set$$
one can construct their \emph{external tensor product} as the presheaf $\phi_1 \mul \phi_2 : (\cc{C}_1 \times \cc{C}_2)^\op \to \Set$ defined by
$$(\phi_1\mul \phi_2)(x_1,x_2) = \phi_1(x_1) \times \phi_2(x_2),$$
we might hope to represent the fact that the ${\otimes}R$ rule is parametric in $\G$ and $\D$ by interpreting ${\otimes}R$ as a \emph{natural transformation} from the external tensor product $\pt A \mathbin{\mul} \pt B$ to the ``internal'' tensor product $\pt{(A\otimes B)}$.
The difficulty is that this is not well-typed!
The point is that $\pt A \mathbin{\mul} \pt B$ is a presheaf over the product category $\linCtx \times \linCtx$, whereas $\pt{(A\otimes B)}$ is a presheaf over $\linCtx$, and so, literally interpreted, it does not make sense to speak of natural transformations between them.
What is missing is that the ${\otimes}R$ rule also has an implicit ``context effect'', namely the operation of concatenating (or taking a multiset union of) $\Gamma$ and $\Delta$.
If we make this operation explicit as a \emph{functor}
$$m : \linCtx \times \linCtx \to \linCtx$$
then we can literally interpret the ${\otimes}R$ rule as an honest natural transformation: not directly between the two presheaves $\pt A \mul \pt B$ to $\pt{(A\otimes B)}$, but rather from $\pt A \mul \pt B$ to the presheaf $\pt{(A\otimes B)}$ \emph{precomposed} with the functor $m$.

What this example exposes is the danger of limiting one's attention to a single presheaf category, and it suggests taking an alternative approach: to use the language of type refinement to speak directly about presheaves living in different presheaf categories.
Concretely, there is a refinement system defined as the forgetful functor $\upc : \Psh \to \Cat$, which sends a pair $(\cc{C},\phi)$ of a category $\cc{C}$ equipped with a presheaf $\phi : \cc{C}^\op \to \Set$ to the underlying category $\cc{C}$.
Our tentative interpretation of the ${\otimes}R$ rule of linear logic as a ``natural transformation with side effects'' can now be given a concise formulation, simply stating that ${\otimes}R$ can be interpreted as a derivation of the typing judgment
$$\pt A \mul \pt B \seq{m} \pt{(A\otimes B)}$$
in the refinement system $\upc: \Psh \to \Cat$.

Moreover, it turns out that this presheaf interpretation of the sequent calculus of linear logic may be vastly generalized: in fact, \emph{any refinement system} $\rs : \cc{D} \to \cc{T}$ can be given a presheaf interpretation, as a morphism of refinement systems $\rs \to \upc$ which is fully faithful in an appropriate sense.
The idea of representing logical formulas as presheaves over context categories was one of our original motivations for studying the notion of type refinement, and we believe that this embedding theorem sending any refinement system into $\upc: \Psh \to \Cat$ justifies that point of view.
After presenting some background in \Cref{sec:prelim}, we describe the ``positive'' representation $\pt{(-)} : \rs \to \upc$ of a refinement system together with an associated ``negative'' representation $\nt{(-)} : \rs^\op \to \upc$ in \Cref{sec:representing}.
We also establish there a few basic properties of these representations and consider various examples.
Then, in \Cref{sec:duality} we show that the two presheaf representations $\pt{P}$ and $\nt{P}$ of a refinement type $P\refs A$ satisfy a form of duality generalizing \emph{Isbell duality} (i.e., the duality between the covariant and contravariant representable presheaves associated to an object of a category under the Yoneda embedding).
Finally, by combining this duality theorem with preservation properties of the two presheaf representations, we show that the positive representation $\pt{(\push{c}{P})}$ of a pushforward of a refinement $P\refs A$ (in $\rs:\cc{D}\to\cc{T}$) along a morphism $c:A\to B$ can be explicitly computed (in $\upc:\Psh\to\Cat$) using either of two ``negative translation''-style formulas, which express $\pt{(\push{c}{P})}$ both as the dual of a pullback of a dual and as the double dual of a pushforward.

\section{Preliminaries}
\label{sec:prelim}

\subsection{Basic conventions and definitions}
\label{sec:basic}

We recall some conventions from \cite{mz15popl} for working with functors as type refinement systems.
Given a fixed functor $\rs : \cc{D} \to \cc{T}$, we refer to the objects of $\cc{T}$ as \emph{types}, to the morphisms of $\cc{T}$ as \emph{terms}, and to the objects of $\cc{D}$ as \emph{refinement types} (or \emph{refinements} for short).
Since these notions are relative to a functor $\rs$, to avoid ambiguity one can speak of $\rs$-types, $\rs$-refinements, and so on.
We say that a judgment $(P,c,Q)$ is \emph{valid} in a given refinement system $\rs$ if it has a derivation in the sense of \Cref{defn:derive}, i.e., there exists a morphism $\alpha:P\to Q$ in $\cc{D}$ which is mapped to $c$ by the functor $\rs$.
More generally, we say that a \emph{typing rule} is valid when there is an operation transforming derivations of the premises into a derivation of the conclusion.
For example, the rule
$$
\infer[;]{P \seq{c;d} R}{P \seq{c} Q & Q \seq{d} R}
$$
is valid for any refinement system as an immediate consequence of functoriality: given a morphism $\alpha : P \to Q$ such that $\rs(\alpha) = c$ and a morphism $\beta : Q \to R$ such that $\rs(\beta) = d$, there is a morphism $(\alpha;\beta) : P \to R$ and moreover $\rs(\alpha;\beta) = (\rs(\alpha);\rs(\beta)) = (c;d)$.

We consider $\rs$-typing judgments modulo equality of terms (i.e., equality of morphisms in $\cc{T}$), but often we mark applications of an equality by an explicit conversion step (which can be seen as admitting the possibility that $\cc{T}$ is a higher-dimensional category, although we will not pursue that idea rigorously here).
For example, the rule of ``covariant subsumption'' of subtyping
(also called ``post-strengthening'' in Hoare Logic)
$$
\infer{P \seq{c} R}{P \seq{c} Q & Q \nseq R}
$$
can be derived from the composition typing rule ($;$) just above by
$$
\infer[\conv]{P \seq{c} R}{
\infer[;]{P \seq{c;\id} R}{
 P \seq{c} Q & Q \nseq R}}
$$
where at $\conv$ we have applied the axiom $c = (c;\id)$ which is valid in any category.

The notions of a \emph{cartesian morphism} and of a \emph{fibration of categories} \cite{borceux2} may be naturally expressed in the language of refinement systems by first defining a \definand{pullback of $Q$ along $c$} as a refinement $\pull{c}Q$
$$
\infer{\pull{c}Q \refs A}{c : A \to B & Q \refs B}
$$
equipped with a pair of typing rules
$$
\infer[L\pull{c}]{\pull{c}Q \seq{c} Q}{}
\quad
\infer[R\pull{c}]{P \seq{d} \pull{c}Q}{P \seq{d;c} Q}
$$
satisfying equations
$$
\small
\infer[;]{P \seq{d;c} Q}{
 \infer[R\pull{c}]{P \seq{d} \pull{c}Q}{P \seqd{\beta}{d;c} Q} &
 \infer[L\pull{c}]{\pull{c}Q \seq{c} Q}{}
}
\ \ \deq\ \ 
P \seqd{\beta}{d;c} Q
$$
and
$$
\small
P \seqd{\eta}{d} \pull{c}Q
\ \ \deq\ \ 
\infer[R\pull{c}]{P \seq{d} \pull{c}Q}
{\infer[;]{P \seq{d;c} Q}
{P \seqd{\eta}{d} \pull{c}Q & \infer[L\pull{c}]{\pull{c}Q \seq{c} Q}{}}}
$$
Dually, a \definand{pushforward of $P$ along $c$} is defined as a refinement $\push{c}P$
$$
\infer{\push{c}P \refs B}{P \refs A & c : A \to B}
$$
equipped with a pair of typing rules
$$
\infer[Lc]{\push{c}P \seq{d} Q}{P \seq{c;d} Q}
\qquad
\infer[Rc]{P \seq{c} \push{c}P}{}
$$
satisfying a similar pair of equations.
Note that pullbacks and pushforwards are always determined up to \definand{vertical isomorphism}, where we say that two refinements $P,Q \refs A$ of a common type are vertically isomorphic (written $P \viso Q$) when there exists a pair of \emph{subtyping} derivations
$$
\deduce{P \nseq Q}\alpha
\qquad
\deduce{Q \nseq P}\beta
$$
which compose to the identities on $P$ and $Q$. 
We record the following type-theoretic transcriptions
of standard facts in the categorical literature:
\begin{proposition}\label{prop:pullpushprops} Whenever the corresponding pullbacks and/or pushforwards exist:
\begin{enumerate}
\item the following subtyping rules are valid:
$$
\infer{\pull{c}Q_1 \nseq \pull{c}Q_2}{Q_1 \nseq Q_2}
\qquad
\infer{\push{c}P_1 \nseq \push{c}P_2}{P_1 \nseq P_2}
$$
\item we have vertical isomorphisms
$$
\pull{(d;c)}Q \viso \pull{d}\pull{c}Q
\qquad
\push{(c;d)}P \viso \push{d}\push{c}P
\qquad
\pull{\id}Q \viso Q
\qquad
\push{\id}P \viso P
$$
\end{enumerate}
\end{proposition}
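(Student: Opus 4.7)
The plan is to derive both parts directly from the typing rules $L\pull{c}, R\pull{c}, Lc, Rc$ for pullbacks and pushforwards together with their two defining equations. All four statements in the proposition are mutually dual, so I describe the pullback arguments in detail and leave the pushforward cases as formal duals.

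For part 1, given a subtyping derivation $\alpha : Q_1 \nseq Q_2$, I compose the canonical projection $L\pull{c} : \pull{c}Q_1 \seq{c} Q_1$ with $\alpha$ to obtain $\pull{c}Q_1 \seq{c;\id} Q_2$, which after the conversion $c;\id = c = \id;c$ may be read as $\pull{c}Q_1 \seq{\id;c} Q_2$; a single application of $R\pull{c}$ then yields the desired $\pull{c}Q_1 \seq{\id} \pull{c}Q_2$. The pushforward case is dual: compose $\alpha : P_1 \nseq P_2$ with $Rc : P_2 \seq{c} \push{c}P_2$ and apply $Lc$.

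For part 2, consider the isomorphism $\pull{(d;c)}Q \viso \pull{d}\pull{c}Q$. The left-to-right subtyping is obtained by applying $R\pull{d}$ (and then $R\pull{c}$) to the projection $L\pull{d;c} : \pull{(d;c)}Q \seq{d;c} Q$, while the right-to-left subtyping is obtained by applying $R\pull{d;c}$ to the composite of $L\pull{d}$ and $L\pull{c}$, namely $\pull{d}\pull{c}Q \seq{d;c} Q$. To check these two subtypings compose to the identities on $\pull{(d;c)}Q$ and $\pull{d}\pull{c}Q$, one performs a direct diagram chase using the two equational laws for pullbacks, which together express a bijection between derivations of $P \seq{d} \pull{c}Q$ and derivations of $P \seq{d;c} Q$. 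The identity case $\pull{\id}Q \viso Q$ is the simplest instance: one direction is $L\pull{\id} : \pull{\id}Q \seq{\id} Q$ itself, the other is $R\pull{\id}$ applied to the identity derivation $Q \seq{\id} Q$, and mutual inversibility is once again forced by the two equational laws. The pushforward statements are entirely dual.

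The only real obstacle is bookkeeping: one must carefully distinguish the two defining equations between \emph{derivations} (needed to establish mutual inversibility of the subtypings constructed for part 2) from mere conversions between \emph{terms} in $\cc{T}$ (such as $c;\id = c$) used to align the endpoints of judgments at intermediate steps. No new constructions or hypotheses beyond those already packaged into the universal properties of pullbacks and pushforwards are required.
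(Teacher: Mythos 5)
Your proposal is correct and is exactly the standard argument the paper has in mind: the paper records \Cref{prop:pullpushprops} without proof as a transcription of well-known categorical facts, and your derivations from the $L/R$ rules plus the $\beta$/$\eta$ equations (with term-level conversions like $c;\id = \id;c$ used only to align endpoints) fill in precisely those details. One small reading note: in the composite case the rule applied closest to the premise must be $R\pull{c}$ and the one at the root $R\pull{d}$, which is what your derivation actually does even though the phrase ``applying $R\pull{d}$ (and then $R\pull{c}$)'' could be misread in the opposite order.
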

\noindent
A functor $\rs : \cc{D} \to \cc{T}$ is a \definand{fibration} (respectively~\definand{opfibration}) if and only if a pullback $\pull{c}Q$ (pushforward $\push{c}P$) exists for all compatible $c$ and $Q$ ($c$ and $P$).
The definition of a fibration (originally due to Grothendieck) is to a large extent motivated by the fact that there is an equivalence between fibrations $\cc{D} \to \cc{T}$ and (pseudo)functors $\cc{T}^\op \to \Cat$, and similarly between opfibrations $\cc{D} \to \cc{T}$ and (pseudo)functors $\cc{T} \to \Cat$.
The reader will observe that one direction of these equivalences
is contained in \Cref{prop:pullpushprops}.
As the definitions make plain, though, it is possible to speak of \emph{specific} pullbacks and pushforwards, even if $\rs$ is not necessarily a fibration and/or opfibration.

Recall that a category is \emph{monoidal} if it is equipped with a tensor product and unit operation
$$
\mul : \cc{C} \times \cc{C} \to \cc{C}
\qquad
I : 1 \to \cc{C}
$$
which are associative and unital up to coherent isomorphism, and that it is \emph{closed} if it is additionally equipped with left and right residuation operations
$$
\setminus : \cc{C}^\op  \times \cc{C} \to \cc{C}
\qquad
/ : \cc{C}  \times \cc{C}^\op \to \cc{C}
$$
which are right adjoint to tensor product in each component:
$$
\cc{C}(Y, \resL[Z]{X}) \cong 
\cc{C}(X \mul Y, Z) \cong
\cc{C}(X, \resR[Z]{Y})
$$
A \definand{(closed) monoidal refinement system} is a refinement system $\rs : \cc{D} \to \cc{T}$ such that $\cc{D}$ and $\cc{T}$ are (closed) monoidal, and $\rs$ strictly preserves tensor products (and residuals) and the unit.
By our conventions, a monoidal refinement system thus admits the following refinement rules and typing rules
$$
\small
\infer{P_1 \mul P_2 \refs A_1 \mul A_2}{P_1 \refs A_1 & P_2 \refs A_2}
\quad
\infer{I \refs I}{}
\qquad
\infer[\mul]{P_1 \mul P_2 \seq{c_1 \mul c_2} Q_1 \mul Q_2}
{P_1 \seq{c_1} Q_1 & P_2 \seq{c_2} Q_2}
\quad
\infer[I]{I \seq{I} I}{}
$$
(we are overloading notation for the monoidal structure on $\cc{D}$ and $\cc{T}$) 
while a closed monoidal refinement system admits the following additional rules:
$$
\small
\infer{\resL[R]{P} \refs \resL[C]{A}}{P \refs A & R \refs C}
\qquad
\infer{\resR[R]{Q} \refs \resR[C]{B}}{R \refs C & Q \refs B}
$$
$$
\small
\infer[\plugL]{P \mul (\resL[R]{P}) \seq{\plugL} R}{}
\quad
\infer[\lambda]{Q \seq{\lc{m}} \resL[R]{P}}{P \mul Q \seq{m} R}
\qquad
\infer[\plugR]{(\resR[R]{Q}) \mul Q \seq{\plugR} R}{}
\quad
\infer[\rho]{P \seq{\rc{m}} \resR[R]{Q}}{P \mul Q \seq{m} R}
$$
Moreover, derivations built using these typing rules satisfy a few equations, which we elide here.
Finally, we remark that the notion of a closed monoidal refinement system can be generalized by allowing the residuation operations to be partial, i.e., by weakening the requirement that $\cc{D}$ and $\cc{T}$ be closed, while maintaining the requirement that the functor $\rs : \cc{D} \to \cc{T}$ preserves any residuals which may exist in $\cc{D}$.
We call such a functor a \definand{logical refinement system}.
Whenever the corresponding residuals exist, a logical refinement system can be treated in essentially the same way as a closed monoidal refinement system, and in particular all of the above rules are valid.


\subsection{Morphisms of refinement systems}
\label{sec:morphisms}

Given a pair of refinement systems $\rs : \cc{D} \to \cc{T}$ and $\b : \cc{E} \to \cc{B}$, by a \definand{morphism of refinement systems} from $\rs$ to $\b$ we mean a pair $F = (F_{\cc{D}},F_{\cc{T}})$ of functors $F_{\cc{D}} : \cc{D} \to \cc{E}$ and $F_{\cc{T}} : \cc{T} \to \cc{B}$ such that the square
$$
\xymatrix{\cc{D} \ar[r]^{F_{\cc{D}}}\ar[d]_\rs & \cc{E} \ar[d]^\b \\ \cc{T} \ar[r]_{F_{\cc{T}}} & \cc{B}}
$$
commutes strictly.
Omitting subscripts on the functors $F$, a morphism from $\rs$ to $\b$ thus induces a pair of rules
$$
\infer{F[P] \refs F[A]}{P \refs A}
\qquad
\infer[F]{F[P] \seq{F[c]} F[Q]}{P \seq{c} Q}
$$
transporting $\rs$-refinements to $\b$-refinements and derivations of $\rs$-judgments to derivations of $\b$-judgments.

Given a pair of morphisms of refinement systems $F = (F_{\cc{D}},F_{\cc{T}}) : \rs \to \b$ and $G = (G_{\cc{D}},G_{\cc{T}}) : \b \to \rs$, an \definand{adjunction of refinement systems} $F \dashv G$ consists of a pair of adjunctions of categories $F_{\cc{D}} \dashv G_{\cc{D}}$ and $F_{\cc{T}} \dashv G_{\cc{T}}$ such that the unit and counit of the adjunction $F_{\cc{D}} \dashv G_{\cc{D}}$ are mapped by $\rs$ and $\b$ onto the unit and counit of $F_{\cc{T}} \dashv G_{\cc{T}}$.
$$
\xymatrixcolsep{3pc}
\xymatrixrowsep{2.5pc}
\xymatrix{\cc{D} \rtwocell^{F_{\cc{D}}}_{G_{\cc{D}}}{'{\bot}}\ar[d]_\rs & \cc{E} \ar[d]^\b \\
 \cc{T} \rtwocell^{F_{\cc{T}}}_{G_{\cc{T}}}{'{\bot}} & \cc{B}}
$$
Writing $\unit$ and $\coun$ (without subscripts) for the unit and counit of both adjunctions $F_{\cc{D}} \dashv G_{\cc{D}}$ and $F_{\cc{T}} \dashv G_{\cc{T}}$, an adjunction of refinement systems thus induces a pair of typing rules
$$
\infer[\unit]{P \seq{\unit} GF[P]}{}\qquad
\infer[\coun]{FG[Q] \seq{\coun} Q}{}
$$
in addition to the typing rules $F$ and $G$, and we remark moreover that typing derivations constructed using these four rules are subject to various equations implied by the definition of an adjunction of categories, such as the triangle laws.

Finally, we say that a morphism of refinement systems $F : \rs \to \b$ is \definand{fully faithful} if the induced typing rule 
$$
\infer[F]{F[P] \seq{F[c]} F[Q]}{P \seq{c} Q}
$$
is \emph{invertible}, in the sense that to any $\b$-derivation
$$
\deduce{F[P] \seq{F[c]} F[Q]}{\beta}
$$
there is a unique $\rs$-derivation
$$
\deduce{P \seq{c} Q}{\pull{F}\beta}
$$
such that
$$
\deduce{F[P] \seq{F[c]} F[Q]}{\beta} \quad=\quad
\infer[F]{F[P] \seq{F[c]} F[Q]}{
\deduce{P \seq{c} Q}{\pull{F}\beta}}.
$$
Equivalently, a morphism of refinement systems $F : \rs \to \b$ 
is fully faithful when the induced functor $\cc{D}\to\cc{E}\times_{\cc{B}}\cc{T}$ to the pullback of $F_{\cc{T}}$ and $\b$ is fully faithful in
the traditional categorical sense.

\subsection{Right adjoints preserve pullbacks}
\label{sec:rapp}

We begin by establishing a basic result about adjunctions of refinement systems, analogous to the well-known fact that in an adjunction of categories the right adjoint functor preserves limits.
Although a similar lemma was proved in \cite{fumex-et-al} for the specific case of two fibrations over the same base, we do not know whether this elementary observation on general functors has already appeared somewhere in the literature. 
In particular, it is worth pointing out that \Cref{prop:rapp} becomes vacuous if stated in terms of traditional \emph{fibred adjunctions} \cite{borceux2,hermida93thesis}, since \emph{any} morphism in the usual 2-category of fibrations must (by definition) preserve cartesian arrows.
\begin{proposition}
\label{prop:rapp}
If $G : \b \to \rs$ is a right adjoint, then $G$ sends $\b$-pullbacks to $\rs$-pullbacks, i.e., for all $c : A \to B$ and $Q \refs B$, whenever the $\b$-pullback $\pull{c}Q$ exists, then the $\rs$-pullback $\pull{G[c]}G[Q]$ exists, and moreover we have that $G[\pull{c}Q] \viso \pull{G[c]}G[Q]$.
\end{proposition}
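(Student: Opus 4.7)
The plan is to verify that $G[\pull{c}Q]$ itself satisfies the universal property of a pullback $\pull{G[c]}G[Q]$ in $\rs$; uniqueness of pullbacks up to vertical isomorphism then yields the stated equivalence $G[\pull{c}Q] \viso \pull{G[c]}G[Q]$. Strict commutativity of the defining square for $G$ gives $\rs(G[\pull{c}Q]) = G_{\cc{T}}(\b(\pull{c}Q)) = G_{\cc{T}}(A) = G[A]$, so $G[\pull{c}Q]$ correctly refines the domain of $G[c]$.

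For the $L$-rule, I would simply apply the $G$-functoriality rule to the $\b$-derivation $L\pull{c} : \pull{c}Q \seq{c} Q$, obtaining $G[L\pull{c}] : G[\pull{c}Q] \seq{G[c]} G[Q]$. For the $R$-rule, given an $\rs$-derivation $\pi : P \seq{d;G[c]} G[Q]$, I transpose it across the adjunction to obtain a $\b$-derivation $\widetilde \pi : F[P] \to Q$. Using the requirement that the unit and counit of $F_{\cc{D}} \dashv G_{\cc{D}}$ project to those of $F_{\cc{T}} \dashv G_{\cc{T}}$, together with naturality of the counit, the underlying term of $\widetilde \pi$ in $\cc{B}$ simplifies to $\widetilde d ; c$, where $\widetilde d$ is the adjunct of $d$. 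Applying $R\pull{c}$ in $\b$ yields $F[P] \seq{\widetilde d} \pull{c}Q$, which I then transport back through the adjunction to the desired $\rs$-derivation $P \seq{d} G[\pull{c}Q]$; a triangle identity confirms that the underlying term is exactly $d$, not merely a term equal to $d$ up to unit–counit insertion.

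Verification of the two pullback equations then reduces, via transposition across the adjunction together with the triangle identities, to the corresponding pullback equations for $\pull{c}Q$ in $\b$: the first equation (composition of the derived $R$-rule with $L\pull{G[c]}$ recovers the original derivation) transports to the first pullback equation in $\b$, while the second equation (uniqueness of the $R$-factorization through $G[\pull{c}Q]$) transports to the second pullback equation in $\b$.

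The main obstacle is almost purely bookkeeping: one must keep track of which terms live in $\cc{T}$, in $\cc{B}$, and of how these are related via $F_{\cc{T}}$, $G_{\cc{T}}$, and their unit and counit, so that the transposition construction produces a derivation sitting literally over the term $d$. This is exactly what the compatibility condition in the definition of an adjunction of refinement systems is designed to ensure, so once invoked explicitly, the remaining work consists of straightforward triangle-identity and naturality manipulations, essentially reproducing in the refinement-system setting the classical fact that right adjoints preserve limits.
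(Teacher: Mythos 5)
Your proposal is correct and follows essentially the same route as the paper: the left rule is obtained by applying $G$ to $L\pull{c}$, the right rule by transposing the given derivation across the adjunction, applying $R\pull{c}$ in $\b$, and transposing back (with naturality of the counit and a triangle law ensuring the underlying term is literally $d$), and the $\beta$/$\eta$ equations are checked by the same kind of transposition/triangle-identity bookkeeping that the paper carries out mechanically in its appendix.
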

\begin{proof}
We need to show that $G[\pull{c}Q]$ is a pullback of $G[Q]$ along $G[c]$.
By definition, this means constructing a pair of typing rules
$$
\infer[L\pull{G[c]}]{G[\pull{c}Q] \seq{G[c]} G[Q]}{}
\qquad
\infer[R\pull{G[c]}]{P \seq{d} G[\pull{c}Q]}{P \seq{d;G[c]} G[Q]}
$$
satisfying the $\beta$ and $\eta$ equations.
The left-rule is derived immediately from the left-rule for $\pull{c}Q$ by applying $G$:
$$
\infer[G]{G[\pull{c}Q] \seq{G[c]} G[Q]}{\infer[L\pull{c}]{\pull{c}Q \seq{c} Q}{}}
$$
The right-rule can be derived in a few more steps from the right-rule for $\pull{c}Q$, assuming the existence of an $F$ such that $F \dashv G$:
$$
\infer[\conv_2]{P \seq{d} G[\pull{c}Q]}{
\infer[;]{P \seq{\unit\hid{_X};GF[d];G[\coun\hid{_A}]} G[\pull{c}Q]}{
 \infer[\unit]{P \seq{\unit\hid{_X}} GF[P]}{} &
 \infer[G]{GF[P] \seq{GF[d];G[\coun\hid{_A}]} G[\pull{c}Q]}{
 \infer[R\pull{c}]{F[P] \seq{F[d];\coun\hid{_A}} \pull{c}Q}{
 \infer[\conv_1]{F[P] \seq{F[d];\coun\hid{_A};c} Q}{
 \infer[;]{F[P] \seq{F[d];FG[c];\coun\hid{_B}} Q}{
  \infer[F]{F[P] \seq{F[d];FG[c]} FG[Q]}{P \seq{d;G[c]} G[Q]} &
  \infer[\coun]{FG[Q] \seq{\coun\hid{_B}} Q}{}
}}}}}}
$$
Here at $\conv_1$ and $\conv_2$ we invoke, respectively, naturality of the counit and a triangle law for $F_{\cc{T}} \dashv G_{\cc{T}}$.
Finally, the fact that these typing rules satisfy the $\beta$ and $\eta$ equations can be verified by a long but mechanical calculation (see \Cref{appendix}).
\end{proof}
\noindent
By duality, we also immediately obtain the following:
\begin{proposition}\label{prop:lapp}
If $F : \rs \to \b$ is a left adjoint, then $F$ sends $\rs$-pushforwards to $\b$-pushforwards, i.e., for all $c : A \to B$ and $P \refs A$, whenever the $\rs$-pushforward $\push{c}P$ exists, then the $\b$-pushforward $\push{F[c]}F[P]$ exists, and moreover we have that $\push{F[c]}F[P] \viso F[\push{c}P]$.
\end{proposition}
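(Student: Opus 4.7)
The plan is to reduce \Cref{prop:lapp} to \Cref{prop:rapp} by passing to opposite categories. The key observation is that every refinement system $\rs : \cc{D} \to \cc{T}$ gives rise to an \emph{opposite} refinement system $\rs^{\op} : \cc{D}^{\op} \to \cc{T}^{\op}$, and that under this operation the notions of pullback and pushforward are interchanged. Indeed, the left rule, right rule, and $\beta$/$\eta$ equations defining a pushforward $\push{c}P$ in $\rs$ become, when all morphisms are reversed, the right rule, left rule, and corresponding equations defining a pullback $\pull{c^{\op}}P$ in $\rs^{\op}$; the refinement $P$ that originally lived over $\dom(c)$ now lives over $\cod(c^{\op})$, exactly as required. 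Vertical isomorphisms likewise transport along the opposite construction, since a pair of mutually inverse subtyping derivations in $\rs$ is the same data as a pair in $\rs^{\op}$.

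Next, any adjunction $F \dashv G$ of refinement systems from $\rs$ to $\b$ dualises to an adjunction $G^{\op} \dashv F^{\op}$ from $\b^{\op}$ to $\rs^{\op}$, in which $F^{\op} : \b^{\op} \to \rs^{\op}$ now serves as the \emph{right} adjoint. The unit of the dual adjunction is the opposite of the original counit (and vice versa), and the compatibility condition in the definition of an adjunction of refinement systems—that units and counits in the total categories are sent by $\rs$ and $\b$ to units and counits in the bases—is symmetric in unit/counit, so it carries over unchanged.

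With these pieces in place, the proof of \Cref{prop:lapp} is immediate: applying \Cref{prop:rapp} to the right adjoint $F^{\op}$ of the opposite adjunction yields that $F^{\op}$ sends $\b^{\op}$-pullbacks to $\rs^{\op}$-pullbacks, with a vertical isomorphism $F^{\op}[\pull{c^{\op}}P] \viso \pull{F^{\op}[c^{\op}]}F^{\op}[P]$. Reinterpreting this statement along the identifications of the first paragraph yields exactly the claim that $F$ sends $\rs$-pushforwards to $\b$-pushforwards, together with the vertical isomorphism $\push{F[c]}F[P] \viso F[\push{c}P]$. I do not expect any genuine obstacle; the only step requiring care is the bookkeeping needed to confirm that opposite-ing really does commute with the formation of pullbacks, pushforwards, adjunctions, and vertical isomorphisms, but each of these checks is routine and amounts to reversing arrows in the typing rules and equations set out in \Cref{sec:basic} and \Cref{sec:morphisms}.
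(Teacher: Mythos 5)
Your proposal is correct and is exactly the paper's argument: the paper justifies \Cref{prop:lapp} simply with the phrase ``by duality,'' i.e., by applying \Cref{prop:rapp} to the right adjoint $F^{\op}$ of the dual adjunction $G^{\op} \dashv F^{\op}$ and using that an $\rs^{\op}$-pullback is the same thing as an $\rs$-pushforward (a fact the paper itself invokes later, in the proof of \Cref{prop:logdist}). The only slip is a harmless piece of bookkeeping: since $F : \rs \to \b$, the right adjoint $F^{\op}$ goes $\rs^{\op} \to \b^{\op}$ (not $\b^{\op} \to \rs^{\op}$), so \Cref{prop:rapp} gives that it sends $\rs^{\op}$-pullbacks to $\b^{\op}$-pullbacks --- which is what your final sentence in fact uses to land on $\push{F[c]}F[P] \viso F[\push{c}P]$.
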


In passing, we note that \Cref{prop:rapp,prop:lapp} also imply the classical result about ordinary adjunctions of categories.
To see this, begin by observing that for any category $\cc{C}$, one can consider the forgetful functor $\Diagrams{\cc{C}} \to \Cat$ as the \definand{refinement system of diagrams in $\cc{C}$}.
Here $\Diagrams{\cc{C}}$ is the category whose objects are pairs of an indexing category $\cc{I}$ together with a functor $\phi : \cc{I} \to \cc{C}$ (hence, ``diagrams in $\cc{C}$''), and whose morphisms $(\cc{I},\phi) \to (\cc{J},\psi)$ consist of a reindexing functor $F : \cc{I} \to \cc{J}$ together with a natural transformation $\theta : \phi \Rightarrow (F;\psi)$.
The important point is that with respect to this refinement system, a pushforward of a diagram $(\cc{I},\phi)$ along the unique functor $!_{\cc{I}} : \cc{I} \to 1$ corresponds precisely to a colimit of the diagram $\phi : \cc{I} \to \cc{C}$ (and more generally, the pushforward along a functor $F : \cc{I} \to \cc{J}$ corresponds to a left Kan extension of $\phi$ along $F$).
Since an adjunction 
$$
\xymatrix{\cc{C} \rtwocell^L_R{'{\bot}} & \cc{D}}
$$
between two categories $\cc{C}$ and $\cc{D}$ lifts to a \emph{vertical} adjunction
$$
\xymatrixcolsep{1pc}
\xymatrixrowsep{2.5pc}
\xymatrix{\Diagrams{\cc{C}} \rrtwocell^{\Diagrams{L}}_{\Diagrams{R}}{'{\bot}}\ar[dr] && \Diagrams{\cc{D}} \ar[ld] \\
 &\Cat &}
$$
between the respective refinement systems of diagrams, \Cref{prop:lapp} then implies that the left adjoint functor $L$ sends colimits in $\cc{C}$ to colimits in $\cc{D}$ (and more generally, it preserves left Kan extensions in $\Cat$).
By a similar argument, one can use \Cref{prop:rapp} to derive that the right adjoint functor $R$ sends limits in $\cc{D}$ to limits in $\cc{C}$.

Our main application of \Cref{prop:rapp,prop:lapp} in this paper will be the following corollary, about the distributivity properties of pullbacks and pushforwards with respect to tensors and residuals.
\begin{proposition}
\label{prop:logdist}
If $\rs : \cc{D} \to \cc{T}$ is a closed monoidal refinement system, then whenever the corresponding pullbacks and pushforwards exist we have vertical isomorphisms
\begin{align}
\push{(c\mul d)}(P \mul Q) &\viso \push{c}P \mul \push{d}Q \label{eqn:monpf}\tag{a} \\
\resL[\pull{d}R]{\push{c}P} &\viso \pull{(\resL[d]{c})}(\resL[R]{P}) \label{eqn:respp} \tag{b}\\
\resR[\pull{d}R]{\push{c}Q} &\viso \pull{(\resR[d]{c})}(\resR[R]{Q}) \label{eqn:resppr} \tag{c}
\end{align}
\end{proposition}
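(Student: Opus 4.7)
My plan is to derive each of the three isomorphisms from \Cref{prop:rapp} and \Cref{prop:lapp}, applied to adjoint morphisms of refinement systems induced by the closed monoidal structure. Because $\rs$ strictly preserves tensors and residuals, for every $X \refs E$ the category-level adjunctions $(- \mul X) \dashv (X \setminus -)$ and $(X \mul -) \dashv (X / -)$ lift to adjunctions of refinement systems
\[
(- \mul X, - \mul E) \dashv (X \setminus -, E \setminus -) : \rs \to \rs,
\qquad
(X \mul -, E \mul -) \dashv (X / -, E / -) : \rs \to \rs.
\]
In addition, viewing residuation in its contravariant argument, the morphism $(- \setminus R, - \setminus D) : \rs \to \rs^{\op}$ (where $R \refs D$) will play the role of a left adjoint, with right adjoint $(R / -, D / -) : \rs^{\op} \to \rs$, and symmetrically for $/$.

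For (a), I would apply \Cref{prop:lapp} twice: once to the left-adjoint $(- \mul Q, - \mul C)$ to obtain $\push{c \mul C}(P \mul Q) \viso \push{c}P \mul Q$, and once to $(\push{c}P \mul -, B \mul -)$ to obtain $\push{B \mul d}(\push{c}P \mul Q) \viso \push{c}P \mul \push{d}Q$. Chaining these and invoking \Cref{prop:pullpushprops}(2) on the factorization $c \mul d = (c \mul C);(B \mul d)$ then yields the stated isomorphism.

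For (b), I would combine two applications. First, \Cref{prop:rapp} applied to the right-adjoint $((\push{c}P) \setminus -, B \setminus -) : \rs \to \rs$ gives $(\push{c}P) \setminus (\pull{d}R) \viso \pull{B \setminus d}((\push{c}P) \setminus R)$. Second, \Cref{prop:lapp} applied to the left-adjoint $(- \setminus R, - \setminus D) : \rs \to \rs^{\op}$ sends the $\rs$-pushforward $\push{c}P$ to an $\rs^{\op}$-pushforward, which by definition is an $\rs$-pullback, yielding $(\push{c}P) \setminus R \viso \pull{c \setminus D}(P \setminus R)$. Chaining these and invoking \Cref{prop:pullpushprops}(2) with the bifunctoriality identity $(B \setminus d);(c \setminus D) = c \setminus d$ produces the desired vertical isomorphism. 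Case (c) is entirely symmetric, using the adjunctions for $/$ in place of those for $\setminus$ and the identity $(d / B);(D / c) = d / c$.

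The main obstacle is the clean handling of the contravariant residuation $- \setminus R$: establishing that it lifts to a left-adjoint morphism $\rs \to \rs^{\op}$ requires formalizing the opposite refinement system and verifying that the units and counits (definable purely from the closed monoidal structure) transport along $\rs$. A viable alternative, if one prefers to avoid $\rs^{\op}$, is to prove the second-step isomorphism $(\push{c}P) \setminus R \viso \pull{c \setminus D}(P \setminus R)$ directly by a routine comparison of universal properties, using the typing rules for $\push{c}(-)$, $\pull{c}(-)$, and the $\lambda/\rho$ abstraction rules for residuation.
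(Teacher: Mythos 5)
Your proposal is correct and follows essentially the same route as the paper: factor $c\mul d$ and $c\setminus d$ through identities, apply \Cref{prop:lapp} to the left adjoints $-\mul Q$ and $\push{c}P\mul -$ for (a), and for (b) combine \Cref{prop:rapp} for the covariant adjunction $\push{c}P\mul - \dashv \resL[-]{\push{c}P}$ with the contravariant adjunction between $\resL[R]{-}$ and $\resR[R]{-}$ (the paper phrases this as a right adjoint into $\rs$ preserving $\rs^\op$-pullbacks, i.e.\ $\rs$-pushforwards, which is the mirror image of your left-adjoint phrasing), then chain with \Cref{prop:pullpushprops}. The paper handles your worry about the contravariant case exactly as you suggest, by treating it as an adjunction involving $\rs^\op$ and using the observation that an $\rs^\op$-pullback is an $\rs$-pushforward.
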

\begin{proof}
The subtyping judgments in the left-to-right direction are easy to derive, e.g., (\ref{eqn:monpf}) can be derived in any monoidal refinement system:
$$
\small
\infer[L\push{(c\mul d)}]{\push{(c\mul d)}(P \mul Q) \nseq \push{c}P \mul \push{d}Q}{
 \infer[\mul]{P\mul Q \seq{c\mul d} \push{c}P \mul \push{d}Q}{
  \infer[R\push{c}]{P \seq{c} \push{c}P}{} &
  \infer[R\push{d}]{Q \seq{d} \push{d}Q}{}}}
$$
The assumption of monoidal closure implies that these are vertical isomorphisms, using the fact that $\rs$ comes equipped with a family of adjunctions
$$
\xymatrixcolsep{3pc}
\xymatrixrowsep{2.5pc}
\xymatrix{\cc{D} \rtwocell^{P\mul-}_{\resL[-]{P}}{'{\bot}}\ar[d]_\rs & \cc{D} \ar[d]^\rs \\
 \cc{T} \rtwocell^{A\mul-}_{\resL[-]{A}}{'{\bot}} & \cc{T}}
\qquad
\xymatrixcolsep{3pc}
\xymatrixrowsep{2.5pc}
\xymatrix{\cc{D} \rtwocell^{-\mul Q}_{\resR[-]{Q}}{'{\bot}}\ar[d]_\rs & \cc{D} \ar[d]^\rs \\
 \cc{T} \rtwocell^{-\mul B}_{\resR[-]{B}}{'{\bot}} & \cc{T}}
$$
as well as a family of contravariant adjunctions
$$
\xymatrixcolsep{3pc}
\xymatrixrowsep{2.5pc}
\xymatrix{\cc{D} \rtwocell^{\resR[R]{-}}_{\resL[R]{-}}{'{\bot}}\ar[d]_\rs & \cc{D}^\op \ar[d]^{\rs^\op} \\
 \cc{T} \rtwocell^{\resR[C]{-}}_{\resL[C]{-}}{'{\bot}} & \cc{T}^\op}
\qquad
\xymatrixcolsep{3pc}
\xymatrixrowsep{2.5pc}
\xymatrix{\cc{D} \rtwocell^{\resL[R]{-}}_{\resR[R]{-}}{'{\bot}}\ar[d]_\rs & \cc{D}^\op \ar[d]^{\rs^\op} \\
 \cc{T} \rtwocell^{\resL[C]{-}}_{\resR[C]{-}}{'{\bot}} & \cc{T}^\op}
$$
for all $P \refs A$, $Q \refs B$, $R \refs C$.
Explicitly, we have (\ref{eqn:monpf}) by 
\begin{align*}
\push{(c\mul d)}(P \mul Q) 
 &\viso \push{(\id\mul d)}\push{(c\mul \id)}(P \mul Q) \tag{\Cref{prop:pullpushprops}}\\
 &\viso \push{(\id\mul d)}(\push{c}P \mul Q) \tag{$-\mul Q \dashv \resR[-]Q$}\\
 &\viso \push{c}P \mul \push{d}Q \tag{$\push{c}P\mul - \dashv \resL[-]{\push{c}P}$}
\end{align*}
and (\ref{eqn:respp}) (and similarly (\ref{eqn:resppr}))
by
\begin{align*}
\resL[\pull{d}R]{\push{c}P} 
 &\viso \pull{(\resL[d]{\id})} (\resL[R]{\push{c}P}) \tag{$\push{c}P\mul - \dashv \resL[-]{\push{c}P}$} \\
 &\viso \pull{(\resL[d]{\id})} \pull{(\resL[\id]{c})} (\resL[R]{P}) \tag{$\resR[R]{-} \dashv \resL[R]{-}$}\\
 &\viso \pull{(\resL[d]{c})}(\resL[R]{P}) \tag{\Cref{prop:pullpushprops}}
\end{align*}
where in the second-to-last step we use the fact that a $\rs^\op$-pullback is the same thing as a $\rs$-pushforward.
\end{proof}

\section{Representing refinement systems}
\label{sec:representing}

\subsection{The refinement systems of presheaves and of pointed categories}
\label{sec:presheaf}

The \definand{refinement system of presheaves} $\upc : \Psh \to \Cat$ is defined as follows:
\begin{itemize}
\item $\Cat$ is the category whose objects are categories and whose morphisms are functors.
\item Objects of $\Psh$ are pairs $(\cc{A},\phi)$, where $\cc{A}$ is a category and $\phi : \cc{A}^\op \to \Set$ is a contravariant presheaf over that category.
\item Morphisms $(\cc{A},\phi) \to (\cc{B},\psi)$ of $\Psh$ are pairs $(F,\theta)$, where $F : \cc{A} \to \cc{B}$ is a functor and $\theta : \phi \Rightarrow (F;\psi)$ is a natural transformation.
\item $\upc : \Psh \to \Cat$ is the evident forgetful functor.
\end{itemize}
We typically write $\phi \refs \cc{A}$ to indicate that $\phi$ is a presheaf over $\cc{A}$, rather than the more verbose $(\cc{A},\phi) \refs \cc{A}$.
This convention is unproblematic so long as we understand that there is an implicit coercion to view $\phi$ as an object of $\Psh$.
\begin{proposition}
\label{prop:upclogical}
$\upc$ is a closed monoidal refinement system with all pullbacks and pushforwards.
\end{proposition}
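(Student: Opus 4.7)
The plan is to construct all the required structure on $\Psh$ explicitly, in each case lifting the corresponding Cartesian closed structure on $\Cat$ so that $\upc$ preserves everything on the nose by construction. Since the monoidal structure on $\Cat$ is Cartesian (hence symmetric), the left and right residuals on $\Psh$ will coincide, so only one residual needs to be built.

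For the monoidal structure, I take the tensor of $\phi \refs \cc{A}$ and $\psi \refs \cc{B}$ to be the external product $\phi \mul \psi \refs \cc{A} \times \cc{B}$ defined by $(\phi \mul \psi)(a,b) = \phi(a) \times \psi(b)$, with unit the terminal singleton presheaf on the terminal category. The associator and unitors of $\Psh$ can be taken to be identities on the underlying sets, lying over those of $\Cat$, so the coherence axioms lift trivially. For the residuals, I define $(\phi \setminus \chi) \refs [\cc{A}, \cc{C}]$ at each functor $F : \cc{A} \to \cc{C}$ by $(\phi \setminus \chi)(F) = \Nat(\phi, \chi \circ F^{\op})$, with action on morphisms of the functor category $[\cc{A}, \cc{C}]$ (i.e., natural transformations) given by postcomposition. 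The required bijection between morphisms $\phi \mul \psi \to \chi$ and morphisms $\psi \to \phi \setminus \chi$ in $\Psh$, sitting over the Cartesian closure bijection of $\Cat$, is obtained by exchanging the two quantifiers of a natural family indexed over $a$ and $b$; this is routine.

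For pullbacks along $F : \cc{A} \to \cc{B}$, I take $\pull{F}\psi := \psi \circ F^{\op}$: the $L\pull{F}$ rule is witnessed by the identity natural transformation, and the $R\pull{F}$ rule is literally the identity bijection on natural transformations (since $\pull{F}\psi \circ G^{\op} = \psi \circ (G;F)^{\op}$ definitionally). For pushforwards, I take $\push{F}\phi$ to be the left Kan extension $\mathrm{Lan}_{F^{\op}}\phi$, which exists and is given pointwise by the coend $(\push{F}\phi)(b) = \int^{a} \cc{B}(b, Fa) \times \phi(a)$ since $\Set$ is cocomplete; the $R\push{F}$ rule is the unit of this Kan extension, the $L\push{F}$ rule is its universal property, and the $\beta$ and $\eta$ equations then hold by construction.

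I do not expect a deep obstacle: the whole argument is standard bookkeeping inherited from Cartesian closure of $\Cat$ and cocompleteness of $\Set$. The one point requiring attention is that $\upc$ must strictly (not merely up to coherent isomorphism) preserve the structure, but this is automatic here since every piece has been defined fiberwise over the corresponding piece of $\Cat$.
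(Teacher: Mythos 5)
Your construction is correct and matches the paper's proof essentially verbatim: the paper likewise defines the tensor as the external product over the product category, the residuals as the end/natural-transformation presheaves over the functor categories, pullbacks by precomposition, and pushforwards by the pointwise coend formula for left Kan extension. Your added remark that symmetry of the cartesian structure on $\Cat$ makes the two residuals coincide is a harmless simplification the paper does not bother to make explicit (it just writes both formulas).
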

\begin{proof}
Pullbacks are defined by precomposition, pushforwards as coends:\footnote{In this paper we adopt the logical notation $\forall x.\Phi(x,x)$ and $\exists y.\Psi(y,y)$ to denote ends and coends, respectively, rather than the more traditional $\int_x \Phi(x,x)$ and $\int^y \Psi(y,y)$ of category theory \cite{kelly}.}
$$
\infer{\pull{F}\psi \refs \cc{A}}{F : \cc{A} \to \cc{B} & \psi \refs \cc{B}}
\qquad \pull{F}\psi \defeq a \mapsto \psi(F a)
$$
$$
\infer{\push{F}\phi \refs \cc{B}}{\phi \refs \cc{A} & F : \cc{A} \to \cc{B}}
\qquad \push{F}\phi \defeq b \mapsto \exists a. \cc{B}(b,Fa) \times \phi(a)
$$
Tensor products and residuals in $\Cat$ are defined using its usual cartesian closed structure (i.e., by building product categories and functor categories), and lifted to $\Psh$ as follows (we show only the definitions of presheaves, not the structural maps):
$$
\infer{I \refs 1}{}\quad
\infer{\phi\mul \psi \refs \cc{A} \times \cc{B}}{\phi \refs \cc{A} & \psi \refs \cc{B}}
\qquad
\phi\mul \psi \defeq (a,b) \mapsto \phi(a) \times \psi(b)
\quad
I \defeq * \mapsto \set{*}
$$
$$
\infer{\resL[\omega]{\phi} \refs [\cc{A},\cc{C}]}{\phi \refs \cc{A} & \omega \refs \cc{C}}
\qquad
\resL[\omega]{\phi} \defeq F  \mapsto \forall a. \phi(a) \to \omega(Fa)
$$
$$
\infer{\resR[\omega]{\psi} \refs [\cc{B},\cc{C}]}{\omega \refs \cc{C} & \psi \refs \cc{B}}
\qquad
\resR[\omega]{\psi} \defeq F  \mapsto \forall a. \psi(a) \to \omega(Fa)
$$
Note that these definitions may be naturally generalized to the refinement system of $\cc{V}$-valued presheaves over $\cc{V}$-enriched categories, but for concreteness we only work with ordinary categories here.
\end{proof}
\noindent
We can also identify a \emph{subsystem} of $\upc$ that will play an important analytical role later on.
Let $\Cat_\bullet$ be the category whose objects consist of categories $\cc{A}$ together with a chosen object $a \in \cc{A}$, and whose morphisms $(\cc{A},a) \to (\cc{B},b)$ are pairs $(F,h)$ consisting of a functor $F : \cc{A} \to \cc{B}$ together with a morphism $h : F(a) \to b$ of $\cc{B}$.
The \definand{refinement system of pointed categories} is defined as the evident forgetful functor $\obc : \Cat_\bullet \to \Cat$.
This is a ``subsystem'' of $\upc$ in the sense that there is a \emph{vertical} morphism of refinement systems $y : \obc \to \upc$, corresponding to the classical Yoneda embedding:
$$
\xymatrix{\Cat_\bullet \ar[d]_\obc\ar[r]^y & \Psh \ar[d]^\upc \\ \Cat \ar@{=}[r] & \Cat}
$$
Read as a vertical morphism of refinement systems, the Yoneda embedding interprets an object $a\in \cc{A}$ as the contravariant presheaf $\cc{A}(-,a)$ over the same category $\cc{A}$.
Finally, since any object $a \in \cc{A}$ can also be seen as a functor $a : 1 \to \cc{A}$ in $\Cat$, let us observe that $y : \obc \to \upc$ may equivalently be defined in terms of pushforward of the unit presheaf:
\begin{proposition}\label{prop:reps}
For all $a \in \cc{A}$, we have $\cc{A}(-,a) \viso \push{a}I$ in $\upc$.
\end{proposition}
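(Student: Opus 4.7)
My plan is to prove this by directly unfolding the coend formula for pushforwards in $\upc$ established in the proof of \Cref{prop:upclogical}, and observing that the coend trivially collapses when the indexing category is terminal.

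Specifically, viewing $a \in \cc{A}$ as a functor $a : 1 \to \cc{A}$ picking out the object, and taking $I \refs 1$ to be the unit presheaf (i.e., the constant singleton $* \mapsto \{*\}$ on the terminal category), the formula
$$\push{F}\phi \defeq b \mapsto \exists x.\, \cc{B}(b, F x) \times \phi(x)$$
instantiated at $F = a$ and $\phi = I$ yields
$$\push{a}I(b) \;=\; \exists x \in 1.\, \cc{A}(b, a(x)) \times I(x).$$
Since the category $1$ has a single object $*$ and only the identity morphism, the coend reduces to the value of the integrand at $*$, giving an isomorphism
$$\push{a}I(b) \;\cong\; \cc{A}(b, a) \times \{*\} \;\cong\; \cc{A}(b, a)$$
which is evidently natural in $b \in \cc{A}^{\op}$.

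Since both $\cc{A}(-, a)$ and $\push{a}I$ are presheaves on $\cc{A}$, this natural isomorphism of presheaves is exactly a vertical isomorphism $\cc{A}(-, a) \viso \push{a}I$ in $\upc$, as required. The only step to verify carefully is the collapse of the coend over $1$, which is essentially immediate: there are no non-identity morphisms to impose the coend's coequalizer relation, so the coend is just the value of $(x, y) \mapsto \cc{A}(b, a(x)) \times I(y)$ on the diagonal at the unique object. No real obstacle arises; the statement is largely a sanity check confirming that the Yoneda embedding $y : \obc \to \upc$ fits into the general framework of pushforwards, which will be useful in later sections.
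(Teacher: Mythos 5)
Your proposal is correct and takes exactly the approach the paper intends: the paper's proof is simply ``Immediate from the definition of pushforwards in $\upc$,'' and you have spelled out that same unfolding of the coend formula over the terminal category. Nothing further is needed.
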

\begin{proof}
Immediate from the definition of pushforwards in $\upc$ (see \Cref{prop:upclogical}).
\end{proof}

\subsection{The positive representation of a refinement system}
\label{sec:funrep}

In this section we show that any refinement system $\rs : \cc{D} \to \cc{T}$ has a sound and complete presheaf interpretation, in the sense of a fully faithful morphism of refinement systems $\rs \to \upc$.
We give a direct description of this representation here, as well as some examples, and we will provide some further motivation of the definitions in \Cref{sec:factorpos}.
\begin{definition}For any $\rs$-type $B$, the category $\pt[\rs]{B}$ 
is defined as follows:
\begin{itemize}
\item Objects are pairs $(P \refs A, c : A \to B)$
\item Morphisms $(P_1,c_1) \to (P_2,c_2)$ 
are derivations $\deduce{P_1 \seq{e} P_2}{\alpha}$ such that $c_1 = e;c_2$.
\end{itemize}
\end{definition}
\begin{proposition}\label{prop:slicepos}The assignment $B \mapsto \pt[\rs]{B}$ extends to a functor $\pt[\rs]{(-)} : \cc{T} \to \Cat$.
\end{proposition}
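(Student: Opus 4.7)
The plan is to prove this essentially by the same argument that makes the ordinary slice category $\cc{T}\downarrow B$ into a functor of $B$, observing that the refinement data rides along without interference. I first verify that $\pt[\rs]{B}$ is a category, then define the functorial action on morphisms of $\cc{T}$, and finally check functoriality.

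First I would check that $\pt[\rs]{B}$ is a category. The identity on an object $(P \refs A, c : A \to B)$ is the derivation $\id_P : P \seq{\id_A} P$, which satisfies the compatibility equation $c = \id_A ; c$. Given composable morphisms $\alpha : (P_1,c_1) \to (P_2,c_2)$ and $\beta : (P_2,c_2) \to (P_3,c_3)$ with underlying terms $e_1,e_2$ (so $c_1 = e_1;c_2$ and $c_2 = e_2;c_3$), the composite $\alpha;\beta : P_1 \seq{e_1;e_2} P_3$ is a valid candidate morphism $(P_1,c_1) \to (P_3,c_3)$ because $c_1 = e_1;c_2 = e_1;(e_2;c_3) = (e_1;e_2);c_3$ by associativity in $\cc{T}$. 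The associativity and unit axioms of $\pt[\rs]{B}$ then follow directly from those of $\cc{D}$.

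Next I would define, for a morphism $f : B \to B'$ in $\cc{T}$, the functor $\pt[\rs]{f} : \pt[\rs]{B} \to \pt[\rs]{B'}$ by postcomposition with $f$ on the underlying-term component, i.e., $(P,c) \mapsto (P,c;f)$ on objects, and by sending a morphism $\alpha : P_1 \seq{e} P_2$ (thought of as a morphism $(P_1,c_1)\to(P_2,c_2)$ in $\pt[\rs]{B}$) to the same derivation $\alpha$ now viewed as a morphism $(P_1,c_1;f)\to(P_2,c_2;f)$ in $\pt[\rs]{B'}$. The new compatibility condition $c_1;f = e;(c_2;f)$ follows from $c_1 = e;c_2$ by postcomposing with $f$ and reassociating. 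Preservation of identities and composition by $\pt[\rs]{f}$ is immediate, since on the level of the underlying $\cc{D}$-morphism nothing changes.

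Finally I would verify that $\pt[\rs]{(-)}$ itself is functorial. For $\id_B$, we have $\pt[\rs]{\id_B}(P,c) = (P, c;\id_B) = (P,c)$ and the morphism action is the identity, so $\pt[\rs]{\id_B} = \id_{\pt[\rs]{B}}$. For composable $f : B \to B'$ and $g : B' \to B''$, associativity of composition in $\cc{T}$ gives $\pt[\rs]{f;g}(P,c) = (P, c;(f;g)) = (P, (c;f);g) = \pt[\rs]{g}(\pt[\rs]{f}(P,c))$, and the same reasoning applies on morphisms. There is no real obstacle here: everything reduces to associativity and unitality in $\cc{T}$ together with the fact that refinements and derivations are preserved unchanged under the reindexing. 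The construction is literally the slice $\cc{T}\downarrow B$ decorated with refinements, and as with the ordinary slice, the whole thing is a pseudofunctor-free, strict functor into $\Cat$.
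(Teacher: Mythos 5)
Your proof is correct and is precisely the direct verification that the paper leaves implicit (the proposition is stated without proof): the category structure on $\pt[\rs]{B}$ comes from composition in $\cc{D}$ together with associativity in $\cc{T}$, the action on a term $f : B \to B'$ is postcomposition on the second component, and strict functoriality reduces to strict associativity and unitality of composition in $\cc{T}$. For comparison, the paper's \Cref{sec:factorpos} offers a more conceptual route to the same fact, factoring $\pt[\rs]{(-)} : \cc{T} \to \Cat$ as the nerve $B \mapsto \cc{T}(\rs-,B)$ followed by the category-of-elements construction, but your elementary argument is entirely adequate.
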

\begin{definition}
For any $\rs$-refinement $Q \refs B$, the presheaf $\pt[\rs]{Q} \refs \pt[\rs]{B}$ is defined on objects by
$$(P,c) \quad \mapsto\quad \set{\alpha \mid \deduce{P \seq{c} Q}{\alpha}}$$
and with the contravariant functorial action transforming any morphism
$(P_1,c_1) \to (P_2,c_2)$
given as a derivation
$$
\deduce{P_1 \seq{e} P_2}{\alpha}
$$
such that $c_1 = e;c_2$ into a typing rule (parametric in $Q$)
$$
\infer{P_1 \seq{c_1} Q}{P_2 \seq{c_2} Q}
$$
derived as
$$\infer[\conv]{P_1 \seq{c_1} Q}{\infer[;]{P_1 \seq{e;c_2} Q}{\deduce{P_1 \seq{e} P_2}{\alpha} & P_2 \seq{c_2} Q}}$$
\end{definition}
\begin{proposition}\label{prop:yonfun}
The assignment $(Q \refs B) \mapsto (\pt[\rs]{B},\pt[\rs]{Q})$ extends to a functor $\pt[\rs]{(-)} : \cc{D} \to \Psh$.
\end{proposition}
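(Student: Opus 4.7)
The plan is to extend the assignment on objects to a functorial assignment on morphisms by combining the slice functor $\pt[\rs]{(-)} : \cc{T} \to \Cat$ of \Cref{prop:slicepos} with postcomposition by $\beta$. Concretely, given a derivation $\deduce{Q_1 \seq{d} Q_2}{\beta}$ in $\cc{D}$, I would define $\pt[\rs]{\beta} \defeq (\pt[\rs]{d}, \theta_\beta)$, where $\pt[\rs]{d} : \pt[\rs]{B_1} \to \pt[\rs]{B_2}$ is the functor sending $(P, c : A \to B_1)$ to $(P, c;d : A \to B_2)$, and $\theta_\beta : \pt[\rs]{Q_1} \Rightarrow (\pt[\rs]{d} ; \pt[\rs]{Q_2})$ is the natural transformation whose component at $(P,c)$ sends a derivation $\deduce{P \seq{c} Q_1}{\alpha}$ to the composite derivation
$$\infer[;]{P \seq{c;d} Q_2}{\deduce{P \seq{c} Q_1}{\alpha} & \deduce{Q_1 \seq{d} Q_2}{\beta}}$$

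Next, I would discharge three routine verifications. First, the naturality of $\theta_\beta$: on a morphism $(P_1, c_1) \to (P_2, c_2)$ of $\pt[\rs]{B_1}$ presented by $\deduce{P_1 \seq{e} P_2}{\gamma}$ with $c_1 = e;c_2$, both legs of the naturality square applied to $\alpha' : P_2 \seq{c_2} Q_1$ produce the same derivation $\gamma;\alpha';\beta$ of $P_1 \seq{c_1;d} Q_2$, modulo associativity of composition in $\cc{D}$ and the conversion $c_1;d = e;(c_2;d)$. Second, $\pt[\rs]{\id_Q} = \id$: this follows from the right unit law $\alpha;\id_Q = \alpha$ combined with $\pt[\rs]{\id_B} = \id_{\pt[\rs]{B}}$, the latter being part of \Cref{prop:slicepos}. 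Third, $\pt[\rs]{\beta_1;\beta_2} = \pt[\rs]{\beta_1};\pt[\rs]{\beta_2}$: the base-functor equality is functoriality of $\pt[\rs]{(-)}$ on $\cc{T}$, and the equality of natural-transformation components reduces to associativity $(\alpha;\beta_1);\beta_2 = \alpha;(\beta_1;\beta_2)$.

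Finally, I would note that by construction $\pt[\rs]{(-)}$ projects under $\upc$ to the functor $\rs; \pt[\rs]{(-)}$, so the pair $(\pt[\rs]{d}, \theta_\beta)$ really does live in $\Psh$ over the correct base. The ``main obstacle'' here is genuinely just bookkeeping: everything reduces to associativity and unitality of composition in $\cc{D}$ together with the already-established \Cref{prop:slicepos}, so I expect no conceptual difficulty, only the care needed to track the conversion steps witnessing $c_1 = e;c_2$ correctly when verifying naturality.
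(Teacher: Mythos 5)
Your construction is correct and is exactly the (omitted) argument the paper intends: the paper states \Cref{prop:yonfun} without proof, leaving the morphism part to be read off from the definitions, and the evident choice is precisely yours --- send a derivation $\beta$ of $Q_1 \seq{d} Q_2$ to the pair consisting of the relative-slice functor $\pt[\rs]{d}$ and the natural transformation given by postcomposition with $\beta$, with functoriality reducing to associativity and unitality in $\cc{D}$. Nothing is missing.
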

\begin{proposition}\label{prop:posembed}
The pair of functors $\pt[\rs]{(-)} : \cc{T} \to \Cat$ and $\pt[\rs]{(-)} : \cc{D} \to \Psh$ define a morphism of refinement systems from $\rs$ to $\upc$, i.e., a commuting square
$$
\xymatrix{\cc{D} \ar[r]^{\pt[\rs]{(-)}}\ar[d]_\rs & \Psh \ar[d]^\upc \\ \cc{T} \ar[r]_{\pt[\rs]{(-)}} & \Cat}
$$
\end{proposition}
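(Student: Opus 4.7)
The plan is to verify commutativity of the square at the level of both objects and morphisms, since the functoriality of the two components $\pt[\rs]{(-)} : \cc{T} \to \Cat$ and $\pt[\rs]{(-)} : \cc{D} \to \Psh$ is already recorded in \Cref{prop:slicepos,prop:yonfun}. On objects, the check is immediate from the definitions: for any $Q \refs B$, the presheaf $\pt[\rs]{Q}$ was built by construction as living over the category $\pt[\rs]{B}$, so $\upc(\pt[\rs]{Q}) = \pt[\rs]{B} = \pt[\rs]{\rs(Q)}$.

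On morphisms, the key task is to make explicit the action of $\pt[\rs]{(-)} : \cc{D} \to \Psh$ (only asserted in \Cref{prop:yonfun}) in a way that is visibly compatible with that of $\pt[\rs]{(-)} : \cc{T} \to \Cat$. Given $\alpha : Q_1 \to Q_2$ in $\cc{D}$ with $\rs(\alpha) = c : B_1 \to B_2$, I would define $\pt[\rs]{\alpha}$ to be the pair $(\pt[\rs]{c}, \theta_\alpha)$ in $\Psh$, where $\pt[\rs]{c}$ is the post-composition functor supplied by \Cref{prop:slicepos}, and $\theta_\alpha : \pt[\rs]{Q_1} \Rightarrow (\pt[\rs]{c};\pt[\rs]{Q_2})$ is the natural transformation whose component at an object $(P, d : X \to B_1)$ of $\pt[\rs]{B_1}$ sends a derivation $\deduce{P \seq{d} Q_1}{\beta}$ to the derivation of $P \seq{d;c} Q_2$ obtained by composing $\beta$ with $\alpha$. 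With this definition in place, $\upc(\pt[\rs]{\alpha}) = \pt[\rs]{c} = \pt[\rs]{\rs(\alpha)}$ by inspection, which is precisely the commutativity required on morphisms.

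The remaining bookkeeping---that $\theta_\alpha$ really is a natural transformation, and that the assignment $\alpha \mapsto \pt[\rs]{\alpha}$ respects composition and identities in $\cc{D}$---follows directly from associativity and unitality of composition in $\cc{D}$, and reuses the exact same rewriting used in the definition of the functorial action of $\pt[\rs]{Q}$ on morphisms of $\pt[\rs]{B}$. I do not expect any serious obstacle: the entire proposition is essentially a bookkeeping exercise, and the only substantive choice is picking the correct underlying functor for $\pt[\rs]{\alpha}$, which as above must be $\pt[\rs]{\rs(\alpha)}$ in order for the square to commute on the nose.
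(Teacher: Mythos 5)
Your proposal is correct and matches the argument the paper leaves implicit (the paper states this proposition without proof, as immediate from the definitions): the underlying functor of $\pt[\rs]{\alpha}$ must be $\pt[\rs]{\rs(\alpha)}$, the natural transformation is post-composition with $\alpha$, and naturality plus functoriality reduce to associativity and unitality in $\cc{D}$. Nothing is missing.
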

\noindent
As we discussed in \Cref{sec:morphisms}, any morphism of refinement systems induces a pair of refinement rules and typing rules, and in this case in particular we have rules
$$
\infer{\pt[\rs]{P} \refs \pt[\rs]{A}}{P \refs A}
\qquad
\infer[\ptfun{\rs}]{\pt[\rs]{P} \seq{\pt[\rs]{c}} \pt[\rs]{Q}}{P \seq{c} Q}
$$
which we call the \definand{positive representation} of the refinement system $\rs$ in the refinement system of presheaves.
\begin{proposition}\label{prop:posff}
The positive representation of $\rs$ is sound and complete, in the sense that the morphism of refinement systems $\pt[\rs]{(-)} : \rs \to \upc$ is fully faithful.
\end{proposition}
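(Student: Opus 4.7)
The plan is to reduce the claim to the Yoneda lemma by observing that $\pt[\rs]{P}$ is a representable presheaf. Indeed, unwinding the definition of the category $\pt[\rs]{A}$, a morphism $(R,d) \to (P,\id_A)$ is precisely a derivation $\deduce{R \seq{e} P}{f}$ with $d = e;\id_A = e$, and therefore amounts to an element of the set $\pt[\rs]{P}(R,d) = \set{f \mid R \seq{d} P}$. Hence $\pt[\rs]{P}$ is the representable presheaf at the object $(P,\id_A) \in \pt[\rs]{A}$; equivalently, by \Cref{prop:reps} and viewing $(P,\id_A)$ as a functor $1 \to \pt[\rs]{A}$, we have $\pt[\rs]{P} \viso \push{(P,\id_A)}I$ in $\upc$.

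Next I would unfold what a derivation in $\upc$ of the judgment $\pt[\rs]{P} \seq{\pt[\rs]{c}} \pt[\rs]{Q}$ really is: a natural transformation $\theta : \pt[\rs]{P} \Rightarrow \pull{(\pt[\rs]{c})}\pt[\rs]{Q}$, i.e., a family of maps
$$\theta_{(R,d)} : \set{f \mid R \seq{d} P} \to \set{g \mid R \seq{d;c} Q}$$
natural in $(R,d) \in \pt[\rs]{A}$. Applying the Yoneda lemma to the representable $\pt[\rs]{P}$ and the presheaf $\pull{(\pt[\rs]{c})}\pt[\rs]{Q}$ yields a bijection between such natural transformations and elements of $\pull{(\pt[\rs]{c})}\pt[\rs]{Q}(P,\id_A) = \pt[\rs]{Q}(P,c) = \set{\alpha \mid P \seq{c} Q}$, namely $\theta \mapsto \pull{\ptfun{\rs}}\theta \defeq \theta_{(P,\id_A)}(\id_P)$.

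It remains to check that this Yoneda bijection coincides with the action of the morphism of refinement systems, i.e., that $\pt[\rs]{(\pull{\ptfun{\rs}}\theta)} = \theta$ and $\pull{\ptfun{\rs}}(\pt[\rs]{\alpha}) = \alpha$. The second equation is immediate: $\pt[\rs]{\alpha}_{(P,\id_A)}(\id_P) = \id_P;\alpha = \alpha$. For the first, given any $(R,d)$ and any derivation $\alpha$ of $R \seq{d} P$ (which is simultaneously a morphism $\alpha : (R,d) \to (P,\id_A)$ of $\pt[\rs]{A}$), naturality of $\theta$ at $\alpha$ applied to $\id_P$ gives
$$\theta_{(R,d)}(\alpha) = \theta_{(R,d)}\bigl(\pt[\rs]{P}(\alpha)(\id_P)\bigr) = \pull{(\pt[\rs]{c})}\pt[\rs]{Q}(\alpha)\bigl(\theta_{(P,\id_A)}(\id_P)\bigr) = \alpha;\pull{\ptfun{\rs}}\theta,$$
which by the formula in \Cref{prop:yonfun} is exactly $\pt[\rs]{(\pull{\ptfun{\rs}}\theta)}_{(R,d)}(\alpha)$. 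There is no real obstacle: the content of the proposition is essentially just the observation that $\pt[\rs]{P}$ is represented by $(P,\id_A)$, after which fullness-and-faithfulness is the Yoneda lemma.
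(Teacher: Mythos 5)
Your proof is correct and takes essentially the same route as the paper: the crux in both cases is that $\pt[\rs]{P}$ is representable by the object $(P,\id_A)$ of $\pt[\rs]{A}$, after which fully faithfulness is the Yoneda lemma. The only difference is organizational — the paper factors the morphism through the refinement system of pointed categories $\obc$ (\Cref{prop:pyonrep}) and verifies each factor separately, establishing the Yoneda step via the characterization of representables as pushforwards of the unit presheaf, whereas you apply the classical Yoneda lemma directly to the composite and check the two round trips by hand.
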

\noindent 
Remember that we say a morphism of refinement systems is fully faithful when the induced typing rule is invertible, in this case meaning that to any natural transformation
$$
\deduce{\pt[\rs]{P} \seq{\pt[\rs]{c}} \pt[\rs]{Q}}{\theta}
$$
in $\upc$ there exists a unique $\rs$-derivation
$$
\deduce{P \seq{c} Q}{\pull{(\ptfun{\rs})}\theta}
$$
such that
$$
\deduce{\pt[\rs]{P} \seq{\pt[\rs]{c}} \pt[\rs]{Q}}{\theta} \quad=\quad
\infer[\ptfun{\rs}]{\pt[\rs]{P} \seq{\pt[\rs]{c}} \pt[\rs]{Q}}{
\deduce{P \seq{c} Q}{\pull{(\ptfun{\rs})}\theta}}
$$
To prove \Cref{prop:posff}, we first observe that the presheaves $\pt[\rs]{P}$ are representable (in the classical sense \cite[III.2]{maclane}), so that the positive representation factors via the refinement system of pointed categories.
\begin{proposition}\label{prop:pyonrep}
The morphism $\pt[\rs]{(-)} : \rs \to \upc$ factors as a morphism $\pt[\rs]{(-)} : \rs \to \obc$ followed by the Yoneda embedding,
$$
\vcenter{\xymatrix{\cc{D} \ar[d]_\rs\ar[r]^{\pt[\rs]{(-)}} & \Psh\ar[d]^\upc \\ \cc{T} \ar[r]_{\pt[\rs]{(-)}} & \Cat}}
\quad=\quad
\vcenter{\xymatrix{\cc{D} \ar[d]_\rs\ar[r]^{\pt[\rs]{(-)}} & \Cat_\bullet\ar[d]^\obc\ar[r]^y & \Psh\ar[d]^\upc \\ \cc{T} \ar[r]_{\pt[\rs]{(-)}} & \Cat \ar@{=}[r] & \Cat}}
$$
where $\pt[\rs]{(-)} : \cc{D} \to \Cat_\bullet$ is defined by $\pt[\rs]{P} \defeq (P,\id_A) \in \pt[\rs]{A}$ for all $P \refs A$.
\end{proposition}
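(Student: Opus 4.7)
The plan is to reduce the proof to the single observation that, for $Q \refs B$, the presheaf $\pt[\rs]{Q}$ is \emph{representable} inside $\pt[\rs]{B}$ with representing object $(Q, \id_B)$. Once that is noted, the factorization claim becomes formal: the functor $\pt[\rs]{(-)} : \cc{D} \to \Cat_\bullet$ with the stated object rule is essentially forced, and postcomposing with $y$ then recovers the representable presheaf defining the original $\pt[\rs]{(-)} : \cc{D} \to \Psh$.

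First I would complete the definition of $\pt[\rs]{(-)} : \cc{D} \to \Cat_\bullet$ on morphisms. Given $\alpha : P \to Q$ in $\cc{D}$ with $c = \rs(\alpha) : A \to B$, I take the pair consisting of the functor $\pt[\rs]{c} : \pt[\rs]{A} \to \pt[\rs]{B}$ supplied by \Cref{prop:slicepos} together with the morphism $\pt[\rs]{c}(P, \id_A) = (P, c) \to (Q, \id_B)$ in $\pt[\rs]{B}$ given by $\alpha$ itself (using the strict identity $c = c; \id_B$ in $\cc{T}$ to see that $\alpha$ qualifies as such a morphism). Functoriality of this assignment, and strict commutation with the projection $\obc$ down to $\Cat$, are both immediate from the definitions together with the functoriality of $\pt[\rs]{(-)} : \cc{T} \to \Cat$: identities go to identities because $\pt[\rs]{\id_A}$ is the identity functor on $\pt[\rs]{A}$ and $\id_P$ is an identity in $\pt[\rs]{A}$, and composition is preserved component-wise.

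For the factorization through the Yoneda embedding $y$, it remains to verify that the composite $y \circ \pt[\rs]{(-)}$ recovers the presheaf representation of \Cref{prop:yonfun} on the nose. At the level of objects this is precisely the representability observation: by direct inspection, the hom-set $\pt[\rs]{A}((P', c'), (P, \id_A))$ consists of derivations $P' \seq{e} P$ satisfying $c' = e; \id_A$, and the strict equation $e; \id_A = e$ in $\cc{T}$ makes this set literally equal to $\pt[\rs]{P}(P', c')$, the set of derivations of $P' \seq{c'} P$; naturality in $(P', c')$ is then immediate, since both sides implement the same $\conv$-style postcomposition rule. On morphisms, the two candidate natural transformations agree by the same identification, both sending a derivation of $P' \seq{c'} P$ to its composite with $\alpha$.

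No step of the argument poses a real obstacle; the only thing to watch is the strict identity $e; \id_A = e$ in $\cc{T}$, which is what makes the two presheaves literally equal rather than merely vertically isomorphic and so underwrites the diagrammatic equality (rather than just $\viso$) asserted in the statement.
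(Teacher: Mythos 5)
Your proposal is correct and takes essentially the same approach as the paper, whose proof is simply ``immediate from the definitions'' resting on the observation (stated just before the proposition) that $\pt[\rs]{Q}$ is representable by the object $(Q,\id_B)$ of $\pt[\rs]{B}$. You have merely spelled out the routine verifications (the action on morphisms, functoriality, and the literal identification of hom-sets via $e;\id_B=e$) that the paper leaves implicit.
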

\begin{proof}
Immediate from the definitions.
(We overload the notation $\pt[\rs]{P}$ to stand both for the presheaf on $\pt[\rs]{A}$ and for its representing object, but this is harmless since the aspect of $\pt[\rs]{P}$ we are referring to will always be deducible from context.)
\end{proof}
\begin{proof}[Proof of \Cref{prop:posff}]
By \Cref{prop:pyonrep}, it suffices to show separately that each factor $\pt[\rs]{(-)} : \rs \to \obc$ and $y : \obc \to \upc$ is a fully faithful morphism of refinement systems:
\begin{description}
\item[{($\pt[\rs]{(-)} : \rs \to \obc$ is fully faithful.)}] 
Suppose given a derivation of $\pt[\rs]{P} \seq{\pt[\rs]{c}} \pt[\rs]{Q}$ in $\obc$.
By definition of the refinement system $\obc : \Cat_\bullet \to \Cat$ and of the functor $\pt[\rs]{c} : \pt[\rs]{A} \to \pt[\rs]{B}$, such a derivation is the same thing as a morphism $(P,c) \to (Q,\id_B)$ in $\pt[\rs]{B}$.
In turn, it is easy to check from the definition of the category $\pt[\rs]{B}$ that such a morphism is nothing but a $\rs$-derivation of $P \seq{c} Q$.
\item[{($y : \obc \to \upc$ is fully faithful.)}] 
This may of course be reduced to the usual Yoneda lemma, but we can also establish it directly by using the characterization (\Cref{prop:reps}) of representable presheaves as pushforwards of the unit presheaf $\cc{A}(-,a) \viso \push{a}I$.
Consider a $\obc$-typing judgment $a \seq{F} b$ given by a pair of objects $a \in \cc{A}$ and $b \in \cc{B}$ together with a functor $F : \cc{A} \to \cc{B}$.
By the universal property of the pushforward, $\upc$-derivations of $\push{a}I \seq{F} \push{b}I$ are in bijective correspondence with $\upc$-derivations of $I \seq{a;F} \push{b}I$.
The latter correspond exactly to elements of $\cc{B}(F(a),b)$, which by definition are the same thing as derivations of $a \seq{F} b$ in $\obc$.
\end{description}
\end{proof}

\begin{example}
\label{ex:funhoare}
Recall from the Introduction that Hoare Logic can be viewed as a refinement system over a category with one object $W$. 
With respect to this refinement system, the category $\pt{W}$ has pairs $(P,c)$ of a state predicate $P$ and a command $c$ as objects, while a morphism
$$(P_1,c_1) \to (P_2,c_2)$$
corresponds to a derivation of a triple $\triple{P_1}{e}{P_2}$ for some $e$ such that $c_1 = e;c_2$.
Traditionally Hoare Logic is seen through a ``proof irrelevant'' lens, so that a Hoare triple is either valid or invalid, and not much attention is paid to the derivation itself.
If we adopt that simplifying assumption, then the positive embedding is essentially just a set of guarded commands:
$$\pt{Q} = \set{(P,c) \mid \ \vd \triple{P}{c}{Q}}$$
In other words, $(P,c) \in \pt{Q}$ just in case it is possible to run $c$ in a state satisfying the precondition $P$ to obtain a state satisfying $Q$.
\end{example}
\begin{example}
\label{ex:funlinear}
We will formulate the example of sequent calculus for linear logic in a bit more abstract terms as follows.
To any multicategory $\linFm$, there is associated a \emph{free monoidal category} $\FMon{\linFm}$, whose objects (and morphisms) are lists of objects (and morphisms) of $\linFm$, and where the monoidal structure on $\FMon{\linFm}$ is given by concatenation.
Moreover this category is equipped with a forgetful functor
$|{-}| : \FMon{\linFm} \to \simplex$
into the (augmented) simplex category (whose objects are finite ordinals and monotone maps), which interprets a list of objects by its length, and a list of morphisms as a monotone function.

Similarly, there is an analogous construction of the free symmetric monoidal category $\FSMon{\linFm}$ on a symmetric multicategory $\linFm$, where one simply replaces lists by multisets, and the forgetful functor
$$|{-}| : \FSMon{\linFm} \to \Fin$$
maps into the category of finite sets and functions (which contains $\simplex$ as a subcategory).

To model intuitionistic linear sequent calculus along the lines suggested in the Introduction, we will assume given a symmetric multicategory $\linFm$ whose objects are linear logic formulas, and whose multimorphisms are sequent calculus proofs.
Then, we take the category of contexts to be $\linCtx = \FSMon{\linFm}$ and consider the forgetful functor $|{-}| : \linCtx \to \Fin$ as a refinement system.
Since the one-point set $\fin{1}$ is a terminal object in $\Fin$, the category $\pt{\fin{1}}$ is equivalent to $\linCtx$, and the positive embedding of a linear logic formula $C$ (seen as a singleton context $C \refs \fin{1}$) is the presheaf $\pt{C}$ on $\linCtx$ defined exactly as in the Introduction (here we write $\linFm(\G;C)$ for the set of multimorphisms from $\G$ to $C$):
$$
\pt{C} = \G \mapsto \linCtx(\G,C) = \linFm(\G;C) = \set{\pi \mid \deduce{\G \vdash C}{\pi}}
$$\qed
\end{example}

\subsection{Factorization via the free opfibration}
\label{sec:factorpos}

In \Cref{prop:pyonrep} we explained that the positive representation $\pt[\rs]{(-)} : \rs \to \upc$ can be factored as a morphism $\pt[\rs]{(-)} : \rs \to \obc$ followed by the Yoneda embedding $y : \obc \to \upc$ of pointed categories into presheaves.
For the interested (and categorically-minded) reader, in this section we provide some further discussion and motivation of the embedding into pointed categories, explaining its relationship to the \emph{free opfibration} over a functor, as well as the role played by $\obc : \Cat_\bullet \to \Cat$ as a ``universal'' opfibration.

We begin by remarking that the category $\pt[\rs]{B}$ can be seen as an analogue of the slice category over $B$, and reduces to the ordinary slice of $\cc{T}$ over $B$ in the case where $\rs = \id_{\cc{T}} : \cc{T} \to \cc{T}$.
As such, we will sometimes refer to $\pt[\rs]{B}$ as the \definand{$\rs$-slice} (or ``relative slice'') of $B$.
Note that the relative slice construction also appears in \cite[\S1.1.2]{maltsiniotis}, where the notation $\cc{D}/B$ is used instead of $\pt[\rs]{B}$.
\begin{remark}
\label{rem:pyoneda}
In the case where $\rs ={} !_{\cc{D}} : \cc{D} \to 1$, the relative slice over the unique object $*$ of $1$ is $\cc{D}$ itself, and the positive representation $\pt[\rs]{Q}$ of an object $Q \in \cc{D}$ is just $Q$ itself when viewed as a refinement in $\obc$, or the ordinary Yoneda embedding of $Q$ when viewed as a refinement in $\upc$.
More generally, for any $\rs : \cc{D} \to \cc{T}$ and $\rs$-refinement $Q \refs B$, if $B$ is a terminal object in $\cc{T}$ then $\pt[\rs]{B} \viso \cc{D}$, and $\pt[\rs]{Q}$ is represented by the object $Q$ itself.
\end{remark}
\noindent
The fact that the relative slice functor $\pt[\rs]{(-)} : \cc{T} \to \Cat$ reduces to the ordinary slice functor in the case where $\rs$ is the identity can also be understood in terms of the following decomposition:
$$
\xymatrix{\cc{T} \ar[r]^{\pt[\rs]{(-)}} & \Cat}
\quad=\quad
\xymatrix{
\cc{T} \ar[rr]^-{B \mapsto \cc{T}(\rs-,B)} && [\cc{D}^\op,\Set] \ar[r]^-\int & \Cat}
$$
That is, the relative slice functor factors as the \emph{nerve} of $\rs$ followed by the \emph{category of elements} construction.
Seen as a covariant indexed category encoding an opfibration, this composite is just the \definand{free opfibration on $\rs$}, in the sense that the (covariant) category of elements of $\pt[\rs]{(-)} : \cc{T} \to \Cat$ is the comma category $\commacat{\rs}{\cc{T}}$, which has the property that
\begin{enumerate}
\item The projection functor $\cod_\rs : \commacat{\rs}{\cc{T}} \to \cc{T}$ is an opfibration.
\item There is a (vertical) morphism of refinement systems from $\rs$ to $\cod_\rs$,
$$
\xymatrixcolsep{3pc}
\xymatrix{\cc{D} \ar[d]_\rs\ar[r]^{(\id,\rs)} & \commacat{\rs}{\cc{T}} \ar[d]^{\cod_\rs} \\ \cc{T} \ar@{=}[r] & \cc{T}}
$$
where $(\id,\rs) : \cc{D} \to \commacat{\rs}{\cc{T}}$ is the functor sending any $Q\refs B$ to the object $(Q,\id_B,B)$.
\item Any morphism of refinement systems $F : \rs \to \b$ from $\rs$ into an opfibration $\b$ factors uniquely as a morphism $\tilde{F} : \cod_\t \to \b$ composed with the morphism $(\id,\rs) : \rs \to \cod_\rs$.
\end{enumerate}
Next, we can observe that any opfibration has a representation (what one might call the covariant ``Grothendieck representation'') in the refinement system of pointed categories,
$$
\xymatrixcolsep{3pc}
\xymatrix{\cc{E} \ar[d]_\b\ar[r]^{\pgro{\b}} & \Cat_\bullet \ar[d]^{\obc} \\ \cc{B} \ar[r]_{\partial^+\b} & \Cat}
$$
where $\partial^+\b : \cc{B} \to \Cat$ sends any object $X \in \cc{B}$ to the fiber $\cc{E}_X$ of $\b$ over $X$, while $\pgro{\b} : \cc{E} \to \Cat_\bullet$ coerces any refinement $R \refs X$ (i.e., an object $R \in \cc{E}$ such that $\b(R) = X$) into the corresponding element $R \in \cc{E}_X$ of the fiber category.
Note that it is important that $\b : \cc{E} \to \cc{B}$ be an opfibration in order for these operations to define functors.

By combining these two separate observations, we get a simple factorization of the positive embedding into pointed categories.
\begin{proposition}\label{prop:factorization}
The morphism $\pt[\rs]{(-)} : \rs \to \obc$ factors as the free opfibration on $\rs$ followed by the covariant Grothendieck representation:
$$
\vcenter{\xymatrix{\cc{D} \ar[d]_\rs\ar[r]^{\pt[\rs]{(-)}} & \Cat_\bullet\ar[d]^\obc \\ \cc{T} \ar[r]_{\pt[\rs]{(-)}} & \Cat}}
\quad=\quad
\xymatrixcolsep{3pc}
\vcenter{\xymatrix{\cc{D} \ar[d]_\rs\ar[r]^-{(\id,\rs)} & \commacat{\rs}{\cc{T}} \ar[d]^{\cod_\rs}\ar[r]^{\pgro{\cod_\rs}} & \Cat_\bullet\ar[d]^\obc \\ \cc{T} \ar@{=}[r] & \cc{T} \ar[r]_{\partial^+\cod_\rs} & \Cat}}
$$
\end{proposition}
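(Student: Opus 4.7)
The plan is to verify the equation by unfolding both sides and checking that the resulting data agree on the nose. Three things must match: the functor $\cc{T}\to\Cat$ on types, the assignment on refinements $Q\refs B$ producing pointed categories, and the action on derivations.

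I would begin with the identification of fibers. An object of the comma category $\commacat{\rs}{\cc{T}}$ is a triple $(P,B',c)$ with $P\in\cc{D}$ and $c:\rs(P)\to B'$, while a morphism $(P_1,B_1,c_1)\to(P_2,B_2,c_2)$ is a pair $(\alpha:P_1\to P_2,\ f:B_1\to B_2)$ such that $\rs(\alpha);c_2 = c_1;f$. Restricting to the fiber of $\cod_\rs$ over $B$ fixes $B'=B$ and $f=\id_B$, so objects reduce to pairs $(P\refs A, c:A\to B)$ and morphisms to derivations $\deduce{P_1\seq{e}P_2}{\alpha}$ satisfying $c_1=e;c_2$ (with $e=\rs(\alpha)$). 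These are literally the objects and morphisms of $\pt[\rs]{B}$, so $(\commacat{\rs}{\cc{T}})_B = \pt[\rs]{B}$ as categories, and the identification is strict.

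Next I would check the action on types. The projection $\cod_\rs:\commacat{\rs}{\cc{T}}\to\cc{T}$ is a split opfibration: a canonical opcartesian lift of $c:B_1\to B_2$ at $(P,B_1,d)$ is $(\id_P,c):(P,B_1,d)\to(P,B_2,d;c)$. Consequently the fiber transport $\partial^+\cod_\rs(c):\pt[\rs]{B_1}\to\pt[\rs]{B_2}$ acts by postcomposition $(P,d)\mapsto(P,d;c)$, which is exactly the definition of $\pt[\rs]{c}$ from \Cref{prop:slicepos}. For the action on refinements, a refinement $Q\refs B$ is sent by $(\id,\rs)$ to $(Q,B,\id_B)$ and then by $\pgro{\cod_\rs}$ to the pointed category $\bigl((\commacat{\rs}{\cc{T}})_B,(Q,B,\id_B)\bigr)$, which under the fiber identification is $\bigl(\pt[\rs]{B},(Q,\id_B)\bigr)$, matching the direct definition of $\pt[\rs]{Q}$ in $\Cat_\bullet$. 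Finally, a derivation $\deduce{P\seq{c}Q}{\alpha}$ is sent by the composite to the morphism of pointed categories whose functor part is $\pt[\rs]{c}$ and whose point component is $\alpha$ viewed as a morphism $(P,c)\to(Q,\id_B)$ in $\pt[\rs]{B}$, again agreeing with the direct definition.

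The main obstacle is purely bookkeeping: one must keep the direction of composition in the comma category aligned with the relative slice convention, and one must take the split opfibration structure on $\cod_\rs$ to be the canonical one (where the opcartesian lift does not change the $\cc{D}$-component) in order for the fiber transport to coincide with postcomposition strictly, not just up to isomorphism. Once these conventions are fixed, every piece of data on the two sides of \Cref{prop:factorization} matches by construction, so the equation holds as a strict equality of morphisms of refinement systems.
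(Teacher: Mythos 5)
Your proposal is correct and matches the paper's intent: the paper states \Cref{prop:factorization} without a separate proof, treating it as immediate from the preceding descriptions of the free opfibration $\cod_\rs$ and the covariant Grothendieck representation, and your argument is exactly the explicit unfolding of those definitions (fiber identification $(\commacat{\rs}{\cc{T}})_B = \pt[\rs]{B}$, transport by postcomposition, and the matching of points and derivation components). Nothing is missing; you have simply written out in full the verification the paper leaves to the reader.
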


\subsection{The negative representation}
\label{sec:covariant}

Every functor $\rs : \cc{D} \to \cc{T}$ induces an opposite functor $\rs^\op : \cc{D}^\op \to \cc{T}^\op$, and one can consider the positive representation of $\rs^\op$
$$
\xymatrix{\cc{D}^\op \ar[r]^{\pt[\rs^\op]{(-)}}\ar[d]_{\rs^\op} & \Psh \ar[d]^\upc \\ \cc{T}^\op \ar[r]_{\pt[\rs^\op]{(-)}} & \Cat}
$$
as another \definand{negative representation} of $\rs$.  Letting $\nt[\rs]{(-)} \defeq \pt[\rs^\op]{(-)} : \rs^\op \to \upc$, this means we have rules
$$
\infer{\nt[\rs]{P} \refs \nt[\rs]{A}}{P \refs A}
\qquad
\infer[\ntfun{\rs}]{\nt[\rs]{Q} \seq{\nt[\rs]{c}} \nt[\rs]{P}}{P \seq{c} Q}
$$
giving a fully faithful, contravariant embedding of $\rs$ into $\upc$.

Unravelling the definitions, we can verify that
\begin{itemize}
\item $\nt[\rs]{A}$ is the \emph{opposite} of the category whose objects consist of pairs $(c,Q)$ such that $c:A \to B$ and $Q \refs B$, and whose morphisms $(c_1,Q_1) \to (c_2,Q_2)$
correspond to derivations $\deduce{Q_1 \seq{e} Q_2}{\alpha}$ such that $c_1;e = c_2$.
Dually to $\pt[\rs]{A}$, we can read $(\nt[\rs]{A})^\op$ as the \definand{$\rs$-coslice} (or ``relative coslice'') category out of $A$. 
\item For any $\rs$-refinement $P \refs A$, the presheaf $\nt[\rs]{P} \refs \nt[\rs]{A}$ is defined on objects by
$$
(c,Q) \quad\mapsto\quad \set{\alpha \mid \deduce{P \seq{c} Q}{\alpha}}
$$
and on morphisms by
$$\deduce{Q_1 \seq{e} Q_2}{\alpha} \quad \mapsto \quad \infer[\conv]{P \seq{c_2} Q_2}{\infer[;]{P \seq{c_1;e} Q_2}{P \seq{c_1} Q_1 & \deduce{Q_1 \seq{e} Q_2}{\alpha}}}$$
Note that $\nt[\rs]{P}$ is a contravariant presheaf over $\nt[\rs]{A}$, and thus a covariant presheaf over the $\rs$-coslice category.
\end{itemize}
By a similar line of reasoning as in \Cref{sec:funrep,sec:factorpos}, we can decompose the negative representation $\nt[\rs]{(-)} : \rs^\op \to \upc$ into three separate components.
\begin{proposition}\label{prop:pcoyonrep}
The morphism $\nt[\rs]{(-)} : \rs^\op \to \upc$ factors as a morphism $\nt[\rs]{(-)} : \rs^\op \to \obc$ followed by the Yoneda embedding,
$$
\vcenter{\xymatrix{\cc{D}^\op \ar[d]_{\rs^\op}\ar[r]^{\nt[\rs]{(-)}} & \Psh\ar[d]^\upc \\ \cc{T}^\op \ar[r]_{\nt[\rs]{(-)}} & \Cat}}
\quad=\quad
\vcenter{\xymatrix{\cc{D}^\op \ar[d]_{\rs^\op}\ar[r]^{\nt[\rs]{(-)}} & \Cat_\bullet\ar[d]^\obc\ar[r]^y & \Psh\ar[d]^\upc \\ \cc{T}^\op \ar[r]_{\nt[\rs]{(-)}} & \Cat \ar@{=}[r] & \Cat}}
$$
where $\nt[\rs]{(-)} : \cc{D}^\op \to \Cat_\bullet$ is defined by $\nt[\rs]{P} \defeq (\id_A,P) \in \nt[\rs]{A}$ for all $P \refs A$.
\end{proposition}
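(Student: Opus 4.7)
My plan is to deduce the proposition directly from \Cref{prop:pyonrep} applied to the opposite refinement system $\rs^\op : \cc{D}^\op \to \cc{T}^\op$. The key observation is that, by the very definition of the negative representation given at the start of \Cref{sec:covariant}, we have $\nt[\rs]{(-)} \defeq \pt[\rs^\op]{(-)}$ as morphisms of refinement systems into $\upc$. Since the Yoneda embedding $y : \obc \to \upc$ is the same vertical morphism in either case, the asserted factorization is a direct instance of a result we have already proved.

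First I would state the reduction: apply \Cref{prop:pyonrep} to $\rs^\op$ to get a factorization of $\pt[\rs^\op]{(-)}$ as $\rs^\op \to \obc \xrightarrow{y} \upc$, in which the pointed-category refinement of any $P \refs A$ is given by $\pt[\rs^\op]{P} \defeq (P,\id_A) \in \pt[\rs^\op]{A}$. Then I would perform the routine bookkeeping of identifying $\pt[\rs^\op]{A}$ with $\nt[\rs]{A}$, following exactly the unpacking of the negative representation given at the start of \Cref{sec:covariant}: this identification takes an object $(P,\id_A)$ written in the relative-slice order over to the object $(\id_A, P)$ written in the relative-coslice order, and takes a morphism $(P_1, c_1) \to (P_2, c_2)$ in $\pt[\rs^\op]{A}$ represented by a $\rs^\op$-derivation over to the corresponding morphism in $\nt[\rs]{A}$ represented by the underlying $\rs$-derivation, with the appropriate reversal of direction. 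Under this identification, the chosen basepoint $\pt[\rs^\op]{P}$ is visibly the basepoint $(\id_A,P)$ named in the statement of the proposition.

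I do not expect any real obstacle. The only potential source of confusion, which I would explicitly flag for the reader, is keeping the dualization conventions straight: the relative slice $\pt[\rs]{A}$ dualizes to $\nt[\rs]{A} = \pt[\rs^\op]{A}$ (equivalently, $(\nt[\rs]{A})^\op$ is the relative coslice out of $A$), so the direction of derivations used to witness morphisms flips from $P_1 \seq{e} P_2$ to $Q_1 \seq{e} Q_2$, and the side condition on terms flips from $c_1 = e;c_2$ to $c_1;e = c_2$. Once this purely notational reconciliation is in place, the proof is a one-line invocation of \Cref{prop:pyonrep}.
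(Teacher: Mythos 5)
Your proposal is correct and matches the paper's intent exactly: the paper offers no separate proof of this proposition, introducing it with ``by a similar line of reasoning as in \Cref{sec:funrep,sec:factorpos}'', which is precisely your reduction of $\nt[\rs]{(-)} \defeq \pt[\rs^\op]{(-)}$ to \Cref{prop:pyonrep} applied to $\rs^\op$, plus the notational reconciliation between the relative slice of $\rs^\op$ and the (opposite of the) relative coslice of $\rs$. Nothing further is needed.
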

\begin{proposition}\label{prop:negfactorization}
The morphism $\nt[\rs]{(-)} : \rs^\op \to \obc$ factors as the free fibration on $\rs$ followed by the contravariant Grothendieck representation:
$$
\vcenter{\xymatrix{\cc{D}^\op \ar[d]_{\rs^\op}\ar[r]^{\nt[\rs]{(-)}} & \Cat_\bullet\ar[d]^\obc \\ \cc{T}^\op \ar[r]_{\nt[\rs]{(-)}} & \Cat}}
\quad=\quad
\xymatrixcolsep{3pc}
\vcenter{\xymatrix{\cc{D}^\op \ar[d]_{\rs^\op}\ar[r]^-{(\rs,\id)} & (\commacat{\cc{T}}{\rs})^\op \ar[d]^{\dom_\rs^\op}\ar[r]^{\ngro{\dom_\rs}} & \Cat_\bullet\ar[d]^\obc \\ \cc{T}^\op \ar@{=}[r] & \cc{T}^\op \ar[r]_{\partial^-\dom_\rs} & \Cat}}
$$
\end{proposition}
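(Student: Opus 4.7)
The plan is to derive \Cref{prop:negfactorization} as a direct consequence of \Cref{prop:factorization} applied to the opposite functor $\rs^\op : \cc{D}^\op \to \cc{T}^\op$, followed by unfolding the resulting data through a few natural identifications. Since $\nt[\rs]{(-)} \defeq \pt[\rs^\op]{(-)}$ by definition (see the beginning of \Cref{sec:covariant}), instantiating \Cref{prop:factorization} at $\rs^\op$ already gives a factorization of $\nt[\rs]{(-)} : \rs^\op \to \obc$ as the unit of the free opfibration on $\rs^\op$ composed with the covariant Grothendieck representation of $\cod_{\rs^\op}$. So the content of the proof is to check that, under standard duality, these components coincide with the ones appearing in the statement.

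First I would verify the isomorphism of categories $\commacat{\rs^\op}{\cc{T}^\op} \cong (\commacat{\cc{T}}{\rs})^\op$. Objects on the left side are triples $(P, c, B)$ with $P \in \cc{D}^\op$, $B \in \cc{T}^\op$ and $c : \rs^\op(P) \to B$ in $\cc{T}^\op$, i.e., $c : B \to \rs(P)$ in $\cc{T}$, which are exactly the objects of $\commacat{\cc{T}}{\rs}$; morphisms in the left category are reversed on both components, which matches the passage to the opposite on the right. Under this identification, the codomain functor $\cod_{\rs^\op}$ corresponds to $\dom_\rs^\op$, so the free opfibration on $\rs^\op$ coincides with the opposite of the free fibration on $\rs$ sitting over $\cc{T}^\op$, and the canonical vertical morphism $(\id, \rs^\op) : \rs^\op \to \cod_{\rs^\op}$ is exactly the morphism $(\rs, \id) : \rs^\op \to \dom_\rs^\op$ displayed in the statement.

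Next I would match up the Grothendieck-style representations. The covariant Grothendieck representation $\pgro{\cod_{\rs^\op}}$ of the opfibration $\cod_{\rs^\op}$ sends a refinement in $\cc{D}^\op$ (over some object in $\cc{T}^\op$) into the corresponding element of the relevant fiber category, paired with the fiber itself; after transporting along $\commacat{\rs^\op}{\cc{T}^\op} \cong (\commacat{\cc{T}}{\rs})^\op$, this is precisely the contravariant Grothendieck representation $\ngro{\dom_\rs}$ of the fibration $\dom_\rs$, with fibers indexed contravariantly by $\partial^-\dom_\rs : \cc{T}^\op \to \Cat$. Putting these identifications together, the factorization obtained from \Cref{prop:factorization} for $\rs^\op$ becomes term-for-term the factorization claimed in \Cref{prop:negfactorization}.

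The main obstacle is not any hard step but simply keeping the direction conventions straight: ensuring that the duality between fibrations and opfibrations, the swap between $\commacat{\rs}{-}$ and $\commacat{-}{\rs}$, and the swap between covariant and contravariant Grothendieck representations all line up coherently with how $\nt[\rs]{(-)}$ was defined and with the notation $(\rs, \id)$ and $\ngro{\dom_\rs}$ used in the statement. Once those bookkeeping matches are made explicit, the proposition follows immediately from \Cref{prop:factorization}.
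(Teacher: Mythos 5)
Your proposal is correct and matches the paper's (implicit) argument: the paper offers no separate proof of \Cref{prop:negfactorization}, stating only that it follows ``by a similar line of reasoning'' as the positive case, i.e., by instantiating the factorization of \Cref{prop:factorization} at $\rs^\op$ and identifying $\commacat{\rs^\op}{\cc{T}^\op}$ with $(\commacat{\cc{T}}{\rs})^\op$, $\cod_{\rs^\op}$ with $\dom_\rs^\op$, and the covariant Grothendieck representation of the resulting opfibration with the contravariant one of the fibration $\dom_\rs$. Your explicit verification of these duality identifications is exactly the intended bookkeeping.
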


\begin{example}
\label{ex:opphoare}
If we again follow the classical tradition of treating a Hoare triple $\triple{P}c{Q}$ as either valid or invalid (with no interesting content to the derivation), then the negative representation of a state predicate
$$
\nt{P} = \set{(c,Q) \mid \ \vd \triple{P}c{Q}}
$$
is essentially just the set of all possible \emph{continuations} of a state satisfying $P$.
\end{example}
\begin{example}
\label{ex:opplinear}
With respect to the refinement system $|{-}| : \linCtx \to \Fin$ defined in \Cref{ex:funlinear}, the relative coslice out of $\fin{1}$ has objects $(i : \fin{1} \to \fin{n}, \Gamma \refs \fin{n})$ corresponding to \emph{pointed contexts}, in the sense that the map $i : \fin{1} \to \fin{n}$ serves to select a distinguished formula $A_i$ in $\Gamma = A_1,\dots,A_n$.
A morphism of pointed contexts  $(j,\Delta) \to (i,\Gamma)$ (by which we mean a morphism $(i,\Gamma) \to (j,\Delta)$ in $\nt{\fin{1}}$) corresponds to a linear substitution $\sigma : \Delta \to \Gamma$ whose underlying function maps $j$ to $i$, implying that the chosen formula $B_j$ is used (possibly together with other formulas of $\Delta$) as part of the proof of $A_i$.

To better understand this category, it is helpful to adopt a more evocative notation for pointed contexts.
For example, we could draw the diagram
$$
\xymatrixcolsep{10pt}
\xymatrixrowsep{0pt}
\xymatrix{A_1 & A_2 & A_3 & A_4 \\ \bullet & \bullet & \odot & \bullet}
$$
to represent the pointed context $(i,\Gamma)$ where $\Gamma = A_1,\dots,A_4$ and $i = 3$.
Any morphism of pointed contexts
$$
\xymatrixcolsep{10pt}
\xymatrixrowsep{0pt}
\vcenter{\xymatrix{B_1 & B_2 & B_3 & B_4 \\
\bullet & \odot & \bullet & \bullet}}
\quad\longrightarrow\quad
\vcenter{\xymatrix{A_1 & A_2 & A_3 & A_4 \\
\bullet & \bullet & \odot & \bullet}}
$$
must have an underlying function mapping $2$ to $3$, for example like so:
$$
\xymatrixrowsep{0.5pc}
\vcenter{\xymatrix{
1\ \bullet\ar[r] & \bullet\ 1 \\
2\ \odot\ar[rd] & \bullet\ 2 \\
3\ \bullet\ar[r] & \odot\ 3 \\
4\ \bullet\ar[ruu] & \bullet\ 4}}
$$
In particular, a linear substitution constructed over this specific underlying function consists of a collection of four proofs of the form
$$
\deduce{B_1 \vdash A_1}{\pi_1},\qquad
\deduce{B_4 \vdash A_2}{\pi_2},\qquad
\deduce{B_2,B_3 \vdash A_3}{\pi_3},\quad\text{and}\quad
\deduce{\cdot \vdash A_4}{\pi_4}.
$$
Now, suppose given a formula $A \refs \fin{1}$.
Its negative representation $\nt{A} \refs \nt{\fin{1}}$ corresponds to the presheaf which sends any pointed context
$$
\xymatrixcolsep{10pt}
\xymatrixrowsep{0pt}
\xymatrix{B_1 & \dots & B_j & \dots & B_m \\ \bullet & \dots & \odot & \dots & \bullet}
$$
to the collection of morphisms of pointed contexts
$$
\xymatrixcolsep{10pt}
\xymatrixrowsep{0pt}
\vcenter{\xymatrix{A \\ \odot}}
\quad\longrightarrow\quad
\vcenter{\xymatrix{B_1 & \dots & B_j & \dots & B_m \\ \bullet & \dots & \odot & \dots & \bullet}}
$$
By definition, such a morphism must contain a proof of $A \vd B_j$ together with \emph{closed} proofs of each of the $B_1,\dots,B_{j-1},B_{j+1},\dots,B_m$.

As a shorthand notation, we can write $\fctx{\Delta}{B}$ to stand for a pointed context with chosen formula $B$ and remaining formulas $\Delta$.
Then the presheaf $\nt{A} \refs \nt{\fin{1}}$ is computed on objects by the following expression:
\begin{align*}
\nt{A} = \fctx{\Delta}{B} \mapsto \ & \linFm(A;B) \times \cc{W}(\cdot,\Delta)
\end{align*}
\end{example}

\subsection{Preservation of pullbacks}
\label{sec:pushpull}

We have seen that any refinement system (i.e., any functor) $\rs : \cc{D} \to \cc{T}$ may be embedded both covariantly and contravariantly into the refinement system of pointed categories,
$$
\xymatrix{\rs \ar[r]^{\pt[\rs]{(-)}} & \obc & \rs^\op\ar[l]_{\nt[\rs]{(-)}}}
$$
and that by composing these morphisms with the Yoneda embedding
$$
\xymatrix{\rs \ar[r]^{\pt[\rs]{(-)}}\ar[rd]_{\pt[\rs]{(-)}} & \obc\ar[d]^y & \rs^\op\ar[l]_{\nt[\rs]{(-)}}\ar[ld]^{\nt[\rs]{(-)}} \\ & \upc & }
$$
one obtains two fully faithful presheaf representations of $\rs$.
But why not stop at $\obc$?
As we will see, the benefit of extending the voyage of $\rs$ and $\rs^\op$ all the way into $\upc$ is that this refinement system has a much richer logical structure than $\obc$, which we can apply in order to talk about the original refinement system $\rs$.
By way of illustration, an important property of the positive presheaf representation $\pt[\rs]{(-)} : \rs \to \upc$ is that it \emph{preserves} any pullbacks which may already exist in $\rs$.
\begin{proposition}
\label{prop:pullpres}
Whenever $\pull{c}Q$ exists in $\rs$, we have $\pt[\rs]{(\pull{c}Q)} \viso \pull{(\pt[\rs]{c})}\pt[\rs]{Q}$ in $\upc$.
\end{proposition}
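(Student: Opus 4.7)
The plan is to exhibit a natural isomorphism of presheaves on $\pt[\rs]{A}$ between $\pt[\rs]{(\pull{c}Q)}$ and $\pull{(\pt[\rs]{c})}\pt[\rs]{Q}$, which automatically upgrades to a vertical isomorphism in $\upc$ since both refine the same category $\pt[\rs]{A}$.

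First I would unpack both sides pointwise. By the definition of the positive representation, $\pt[\rs]{(\pull{c}Q)}$ sends an object $(P \refs A',\, d : A' \to A)$ of $\pt[\rs]{A}$ to the set of $\rs$-derivations $\{\alpha \mid P \seq{d} \pull{c}Q\}$. On the other hand, pullbacks in $\upc$ are computed by precomposition (\Cref{prop:upclogical}), so $\pull{(\pt[\rs]{c})}\pt[\rs]{Q}$ evaluates at $(P,d)$ to $\pt[\rs]{Q}(\pt[\rs]{c}(P,d)) = \pt[\rs]{Q}(P,\, d;c) = \{\beta \mid P \seq{d;c} Q\}$, using that the functor $\pt[\rs]{c} : \pt[\rs]{A} \to \pt[\rs]{B}$ acts on objects by postcomposition with $c$.

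Second, I would define the bijection using the universal property of $\pull{c}Q$ in $\rs$. Namely, send a derivation $\alpha : P \seq{d} \pull{c}Q$ to the composite derivation of $P \seq{d;c} Q$ obtained by post-composing with the left-rule $L\pull{c} : \pull{c}Q \seq{c} Q$, and in the reverse direction send a derivation $\beta : P \seq{d;c} Q$ to the derivation $R\pull{c}(\beta) : P \seq{d} \pull{c}Q$. The $\beta$ and $\eta$ equations for $\pull{c}Q$ are exactly the statement that these are mutually inverse.

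Third, I would verify naturality in $(P,d)$. A morphism $(P_1,d_1) \to (P_2,d_2)$ in $\pt[\rs]{A}$ is a derivation $\deduce{P_1 \seq{e} P_2}{\gamma}$ with $d_1 = e;d_2$, and both presheaves act on elements by precomposing with $\gamma$ (possibly followed by a term-equality conversion). Naturality of the bijection therefore reduces to the associativity equation $(\gamma;\alpha); L\pull{c} = \gamma;(\alpha;L\pull{c})$ in one direction, and to the fact that $R\pull{c}$ itself is natural in its premise in the other direction (a consequence of the $\eta$ equation). The main obstacle, if one wants to be meticulous, is just the careful bookkeeping of the conversion steps mediated by the equations $d_1 = e;d_2$ and $d_1;c = e;(d_2;c)$; these conversions are absorbed into the $\conv$ step appearing in the presheaf action defined in \Cref{prop:yonfun}, so the verification is routine.
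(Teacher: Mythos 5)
Your proposal is correct and follows essentially the same route as the paper: both unpack the two presheaves pointwise into sets of derivations of $P \seq{d} \pull{c}Q$ and $P \seq{d;c} Q$ respectively, and then invoke the universal property of the $\rs$-pullback (the $L\pull{c}$/$R\pull{c}$ rules with their $\beta$ and $\eta$ equations) to obtain the bijection. The only difference is that you spell out the naturality verification, which the paper leaves implicit.
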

\begin{proof}
By expanding definitions, the elements of $\pt[\rs]{(\pull{c}Q)}$ correspond to $\rs$-derivations
$$P \seqd{\alpha}{d} \pull{c}Q$$
where $P \refs X$ and $d : X \to A$, while the elements of $\pull{(\pt[\rs]{c})}\pt[\rs]{Q}$ correspond to $\rs$-derivations
$$P \seqd{\beta}{d;c} Q.$$
So, the proposition follows from the universal property of the $\rs$-pullback.
\end{proof}
\noindent
As an immediate corollary, we have that the negative representation sends ($\rs$-)pushforwards to ($\upc$-)pullbacks.
\begin{proposition}
\label{prop:pushpres}
Whenever $\push{c}P$ exists in $\rs$, we have $\nt[\rs]{(\push{c}P)} \viso \pull{(\nt[\rs]{c})}\nt[\rs]{P}$ in $\upc$.
\end{proposition}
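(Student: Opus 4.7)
The plan is to derive this as a direct instance of \Cref{prop:pullpres} applied to the opposite refinement system $\rs^\op : \cc{D}^\op \to \cc{T}^\op$, invoking the well-known duality between pullbacks and pushforwards under taking opposites.

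First, I would recall the definitional unfolding: by construction, $\nt[\rs]{(-)}$ is defined as the positive representation $\pt[\rs^\op]{(-)}$ of the opposite functor $\rs^\op$. Moreover, as noted at the end of the proof of \Cref{prop:logdist}, a pushforward of $P$ along $c : A \to B$ in $\rs$ is exactly the same data as a pullback of $P$ along $c^\op : B \to A$ in $\rs^\op$, i.e., $\push{c}P \viso \pull{c^\op}P$ in $\rs^\op$. This holds both for the existence statement and the resulting object up to vertical isomorphism.

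Second, since by hypothesis $\push{c}P$ exists in $\rs$, the corresponding pullback $\pull{c^\op}P$ exists in $\rs^\op$, and \Cref{prop:pullpres} applied to $\rs^\op$ yields the vertical isomorphism
$$\pt[\rs^\op]{(\pull{c^\op}P)} \viso \pull{(\pt[\rs^\op]{c^\op})}\, \pt[\rs^\op]{P}$$
in $\upc$. Rewriting via the identification $\nt[\rs]{(-)} = \pt[\rs^\op]{(-)}$ and $\pull{c^\op}P = \push{c}P$, this is exactly the desired isomorphism $\nt[\rs]{(\push{c}P)} \viso \pull{(\nt[\rs]{c})}\nt[\rs]{P}$.

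There isn't really a main obstacle here, since the argument is purely a matter of bookkeeping: the substance lies in \Cref{prop:pullpres}, and all that remains is to track how the involution $(-)^\op$ exchanges pullbacks with pushforwards and the positive representation with the negative one. The only mild subtlety worth flagging explicitly in the write-up is that the vertical isomorphism in $\upc$ obtained from the opposite version of \Cref{prop:pullpres} genuinely transports back along the identifications, which is immediate because $\upc$ is defined independently of $\rs$ and the notions of vertical isomorphism and pullback in $\upc$ are intrinsic.
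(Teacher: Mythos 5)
Your proposal is correct and matches the paper's intent exactly: the paper presents \Cref{prop:pushpres} as an immediate corollary of \Cref{prop:pullpres}, obtained by applying that result to the opposite refinement system $\rs^\op$ and using that $\nt[\rs]{(-)} = \pt[\rs^\op]{(-)}$ while a $\rs$-pushforward is a $\rs^\op$-pullback. Your write-up simply makes explicit the bookkeeping that the paper leaves implicit.
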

\noindent
On the other hand, the positive representation need not preserve pushforwards:
although it's true that the $\upc$-subtyping judgment $\push{\pt[\rs]{c}}\pt[\rs]{P} \nseq \pt[\rs]{(\push{c}P)}$ is valid whenever the pushforward $\push{c}P$ exists in $\rs$ (indeed, this is true whenever one has a morphism of refinement systems and the pushforward exists on both sides), in general the converse subtyping judgment need not be valid.
Fortunately, in \Cref{sec:negenc} we will show that although the positive representation need not preserve pushforwards, it at least preserves them ``up to double dualization'' in $\upc$.

\subsection{Preservation of logical connectives up to change-of-basis}
\label{sec:conn}

Suppose that $\rs : \cc{D} \to \cc{T}$ is a monoidal refinement system.
By definition, this means that $\cc{D}$ and $\cc{T}$ are monoidal and that we have a commuting square
$$
\xymatrix{\cc{D}\times \cc{D} \ar[d]_{\mul}\ar[r]^{\rs\times \rs} & \cc{T}\times\cc{T} \ar[d]^{\mul} \\ \cc{D} \ar[r]_-{\rs} & \cc{T}}
$$
(as well as a commuting triangle associated to the tensor unit, but we will ignore the unit in this section, since its treatment is completely analogous).
Since $\upc : \Psh \to \Cat$ is also a monoidal refinement system, the positive representation of $\rs$ thus induces a cube
$$
\xymatrixcolsep{3pc}
\xymatrix@!0{
 & \Psh\times\Psh \ar@{->}[rr]\ar@{->}'[d][dd]
   & & \Cat\times\Cat \ar@{->}[dd]
\\
 \cc{D}\times\cc{D} \ar@{->}[ur]\ar@{->}[rr]\ar@{->}[dd]\drtwocell<\omit>{'a}
 & & \cc{T}\times\cc{T} \ar@{->}[ur]\ar@{->}[dd] \drtwocell<\omit>{'b}
\\
 & \Psh \ar@{->}'[r][rr]
   & & \Cat
\\
 \cc{D} \ar@{->}[rr]\ar@{->}[ur]
 & & \cc{T} \ar@{->}[ur]
}
$$
where all but the left and right faces marked $a$ and $b$ commute strictly.

These latter faces need only commute in the lax sense that there are natural transformations
$$
\xymatrixcolsep{3pc}
\xymatrix{\cc{D}\times\cc{D} \ar[r]^-{\pt[\rs]{(-)}\times\pt[\rs]{(-)}}\ar[d]_{\mul} \drtwocell<\omit>{\str}& \Psh\times\Psh \ar[d]^{\mul} \\ \cc{D} \ar[r]_-{\pt[\rs]{(-)}} & \Psh}
\quad
\xymatrix{\cc{T}\times\cc{T} \ar[r]^-{\pt[\rs]{(-)}\times\pt[\rs]{(-)}}\ar[d]_{\mul} \drtwocell<\omit>{\str}& \Cat\times\Cat \ar[d]^{\times} \\ \cc{T} \ar[r]_-{\pt[\rs]{(-)}} & \Cat}
$$
and moreover the natural transformation on the right is the projection of the one on the left along the cube, this meaning that we have a family of functors
\begin{equation}\label{eqn:str}
\str_{B_1,B_2} : \pt[\rs]{B_1} \times \pt[\rs]{B_2} \to \pt[\rs]{(B_1 \mul B_2)}
\end{equation}
and a family of $\upc$-derivations
\begin{equation}\label{eqn:str2}
\pt[\rs]{Q_1} \mul \pt[\rs]{Q_2} \seqd{\str_{Q_1,Q_2}}{\str_{B_1,B_2}} \pt[\rs]{(Q_1 \mul Q_2)}
\end{equation}
natural in $Q_1\refs B_1$ and $Q_2 \refs B_2$.
Explicitly, the functors $\str_{B_1,B_2}$ are defined by the action sending any pair of objects
$$
(P_1,c_1)\qquad (P_2,c_2)
$$
(where $P_1 \refs A_1, c_1 : A_1 \to B_1, P_2 \refs A_2, c_2 : A_2 \to B_2$)
to the object
$$
(P_1\mul P_2,c_1\mul c_2)
$$
while the natural transformations $m_{Q_1,Q_2}$ are defined by the action sending any pair of $\rs$-derivations
$$
\deduce{P_1 \seq{c_1} Q_1}{\alpha_1}
\qquad
\deduce{P_2 \seq{c_2} Q_2}{\alpha_2}
$$
to the $\rs$-derivation
$$
\infer[\mul]{P_1\mul P_2 \seq{c_1\mul c_2} Q_1 \mul Q_2}{\deduce{P_1 \seq{c_1} Q_1}{\alpha_1} & \deduce{P_2 \seq{c_2} Q_2}{\alpha_2}}
$$
We can summarize all this by saying that the positive representation is a lax morphism of monoidal refinement systems in the expected sense.

From this it follows for purely formal reasons that when $\rs$ is a logical refinement system (i.e., it is monoidal and strictly preserves residuals) we can likewise build functors
\begin{equation}
\costr_{A,C} : \pt[\rs]{(\resL[C]{A})} \to \resL[{\pt[\rs]{C}}]{\pt[\rs]{A}}
\end{equation}
whenever the corresponding residual $\resL[C]{A}$ exists in $\cc{T}$, and $\upc$-derivations
\begin{equation}
\pt[\rs]{(\resL[R]{P})} \seqd{\costr_{P,R}}{\costr_{A,C}} \resL[{\pt[\rs]R}]{\pt[\rs]P}
\end{equation}
whenever the residual $\resL[R]{P}$ exists in $\cc{D}$.
For example, we can define $\costr_{A,C}$ by
$$
\costr_{A,C} \defeq \lc{\str_{A,\resL[C]{A}};\pt[\rs]{\plugL}}
$$
and then construct the derivations as follows:
$$
\small
\infer[\lambda]{\pt[\rs]{(\resL[R]{P})} \seq{\lc{\str;\pt[\rs]{\plugL}}} \resL[{\pt[\rs]{R}}]{\pt[\rs]{P}}}{
\infer[;]{\pt[\rs]P\mul \pt[\rs]{(\resL[R]{P})} \seq{\str;\pt[\rs]{\plugL}} \pt[\rs]{R}}{
\infer[\str]{\pt[\rs]P\mul \pt[\rs]{(\resL[R]{P})} \seq{\str} \pt[\rs]{(P\mul(\resL[R]P))}}{} &
\infer[\ptfun{\rs}]{\pt[\rs]{(P\mul(\resL[R]P))} \seq{\pt[\rs]{\plugL}} \pt[\rs]R}{
\infer[\plugL]{P \mul (\resL[R]{P}) \seq{\plugL} R}{}
}}}
$$
Finally, we also have 
\begin{equation}
\pt[\rs]{(\resR[R]{Q})} \seqd{\costrr_{Q,R}}{\costrr_{B,C}} \resR[{\pt[\rs]R}]{\pt[\rs]Q}
\end{equation}
defined in the analogous way.
Again, all this can be summarized as saying that the positive representation is a lax morphism of logical refinement systems.

However, we can actually establish a much better property about the positive representation, which says that in a certain precise sense it \emph{strongly} preserves the logical structure of $\rs$, but only ``up to change-of-basis''.
\begin{proposition}
\label{prop:genday}
Let $P \refs A$, $Q \refs B$, $R \refs C$ be refinements in a logical refinement system $\rs$.  Then all of the following vertical isomorphisms hold in $\upc$, where (2) and (3) are conditioned on the assumption that the corresponding residuals exist in $\rs$:
\begin{align}
\push{\str_{A,B}}(\pt[\rs]{P} \mul \pt[\rs]{Q}) &\viso \pt[\rs]{(P\mul Q)} \label{eqn:presmul}\tag{a}\\
\pt[\rs]{(\resL[R]{P})} &\viso \pull{\costr_{A,C}}(\resL[{\pt[\rs]{R}}]{\pt[\rs]{P}}) \label{eqn:presresL}\tag{b}\\
\pt[\rs]{(\resR[R]{Q})} &\viso \pull{\costrr_{B,C}}(\resR[{\pt[\rs]{R}}]{\pt[\rs]{Q}}) \label{eqn:presresR}\tag{c}
\end{align}
\end{proposition}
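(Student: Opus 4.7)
My plan is to prove all three isomorphisms by Yoneda-style arguments, leveraging three facts: the positive representation $\pt[\rs]{P}$ is always representable in the relative slice (namely by $(P, \id_A)$, by \Cref{prop:pyonrep}); the external tensor of two representables in $\upc$ is itself representable by the pair of representing objects; and $\upc$ admits all pullbacks, pushforwards, and residuals (\Cref{prop:upclogical}), with pushforward $\push{F}$ preserving representables (a direct consequence of the coend Yoneda lemma applied to the coend formula $\push{F}\phi(b) = \exists x.\,\cc{B}(b, Fx) \times \phi(x)$).

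For (a), these observations assemble immediately. The external tensor $\pt[\rs]{P} \mul \pt[\rs]{Q}$ is representable in $\pt[\rs]{A} \times \pt[\rs]{B}$ by the pair $((P, \id_A), (Q, \id_B))$, so $\push{\str_{A,B}}(\pt[\rs]{P} \mul \pt[\rs]{Q})$ is representable by $\str_{A,B}((P, \id_A), (Q, \id_B)) = (P \mul Q, \id_{A \mul B})$. But $\pt[\rs]{(P \mul Q)}$ is also representable by $(P \mul Q, \id_{A \mul B})$ for the same reason. Two presheaves on $\pt[\rs]{(A \mul B)}$ with the same representing object are canonically vertically isomorphic.

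For (b), I would establish the isomorphism pointwise. At any $(S, f) \in \pt[\rs]{(\resL[C]{A})}$ with $S \refs D$ and $f : D \to \resL[C]{A}$, unfolding the definitions of pullback and of residual in $\upc$ (\Cref{prop:upclogical}) gives
\[
\pull{\costr_{A,C}}(\resL[{\pt[\rs]{R}}]{\pt[\rs]{P}})(S, f) \;=\; \Nat(\pt[\rs]{P},\, \pull{\costr_{A,C}(S, f)}\pt[\rs]{R}).
\]
Applying the Yoneda lemma at the representing point $(P, \id_A)$ of $\pt[\rs]{P}$, and substituting the definition of $\costr_{A,C}$ as the currying of $\str_{A, \resL[C]{A}}; \pt[\rs]{\plugL}$, this collapses to the set of $\rs$-derivations $P \mul S \seq{(\id_A \mul f);\plugL} R$. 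The left-hand side $\pt[\rs]{(\resL[R]{P})}(S, f)$ is by definition the set of $\rs$-derivations $S \seq{f} \resL[R]{P}$. Since $\rs$ is logical, the adjunction $P \mul - \dashv \resL[-]{P}$ in $\cc{D}$ projects onto the corresponding adjunction in $\cc{T}$, and its $\beta\eta$-laws yield the desired bijection between these two sets of derivations. Part (c) is entirely dual, using $- \mul Q \dashv \resR[-]{Q}$.

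The main obstacle will be checking naturality in (b) and (c): I need to verify that the pointwise bijection commutes with the restriction maps of both presheaves along any morphism $(S', f') \to (S, f)$ in $\pt[\rs]{(\resL[C]{A})}$ (which acts by composition on one side and by tensoring with $\id_P$ on the other). This reduces to a mechanical calculation using the $\beta\eta$-equations for $\plugL$ and $\lambda$, together with the explicit form of $\costr_{A,C}$. For (a), naturality comes for free because the isomorphism is derived from uniqueness of representing objects, so no additional verification is required.
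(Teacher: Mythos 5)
Your proof is correct and takes essentially the same route as the paper's: part (a) by observing that the external tensor and the pushforward preserve representables (the paper phrases this via $\pt[\rs]{P} \viso \push{\pt[\rs]{P}}I$ and \Cref{prop:logdist}), and parts (b) and (c) by rewriting the right-hand side as the value of $\pt[\rs]{R}$ at the image of the representing object of $\pt[\rs]{P}$ --- your Yoneda step --- and then identifying the resulting elements with derivations $P \mul S \seq{(\id\mul f);\plugL} R$, which correspond to elements of $\pt[\rs]{(\resL[R]{P})}$ by the universal property of the residual in $\rs$. The only cosmetic difference is that the paper carries out each step as a chain of vertical isomorphisms of presheaves, so naturality never has to be discharged separately, whereas your pointwise bijection defers it to the mechanical check you describe.
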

\begin{proof}
We can give a purely formal calculation of (\ref{eqn:presmul}):
\begin{align*}
\push{\str_{A,B}}(\pt[\rs]{P} \mul \pt[\rs]{Q}) 
   &\viso \push{\str_{A,B}}(\push{\pt[\rs]P}I \mul \push{\pt[\rs]Q}I) \tag{\Cref{prop:reps,prop:pyonrep}} 
\\ &\viso \push{\str_{A,B}}\push{(\pt[\rs]P\mul \pt[\rs]Q)}(I \mul I) \tag{\Cref{prop:logdist} }
\\ &\viso \push{\str_{A,B}}\push{(\pt[\rs]P\mul \pt[\rs]Q)}I \tag{$I \viso I\mul I$}
\\ &\viso \push{(\pt[\rs]{P}\mul \pt[\rs]{Q};\str_{A,B})}I \tag{\Cref{prop:pullpushprops} }
\\ &\viso \push{((P,\id_A)\mul (Q,\id_B);\str_{A,B})}I \tag{defn. of $\pt[\rs]P$ and $\pt[\rs]{Q}$}
\\ &\viso \push{(P\mul Q,\id_A\mul \id_B)}I \tag{defn. of $\str_{A,B}$ }
\\ &\viso \push{(P\mul Q,\id_{A\mul B})}I \tag{$\id_{A\mul B} = \id_A\mul \id_B$}
\\ &\viso \push{\pt[\rs]{(P\mul Q)}}I \tag{defn. of $\pt[\rs]{(P\mul Q)}$}
\\ &\viso \pt[\rs]{(P\mul Q)} \tag{\Cref{prop:reps,prop:pyonrep}}
\end{align*}
For (\ref{eqn:presresL}), through similar reasoning we can derive that
\begin{align*}
\pull{\costr_{A,C}}(\resL[{\pt[\rs]{R}}]{\pt[\rs]{P}}) &\viso
 \pull{(\costr_{A,C};(\resL[\id_C]{\pt[\rs]P}))}\pt[\rs]{R} 
\end{align*}
and by expanding definitions we can compute that the elements of the presheaf on the right correspond to derivations
$$
P \mul Q \seqd{\alpha}{\id\mul d} R
$$
where $Q \refs B$ and $d : B \to \resL[C]{A}$.
But then the universal property of the left residual in $\rs$ says that these derivations are in one-to-one correspondence with the elements of $\pt[\rs]{(\resL[R]{P})}$.
The case of (\ref{eqn:presresR}) is similar.
\end{proof}
\noindent
In categorical language, an equivalent way of stating \Cref{prop:genday} is that the morphisms
\begin{align*}
\str_{P,Q} : \pt{P}\mul \pt{Q} &\to \pt{(P\mul Q)}  \\
\costr_{P,R} : \pt{(\resL[R]{P})} &\to \resL[\pt{R}]{\pt{P}} \\
\costrr_{Q,R} : \pt{(\resR[R]{Q})} &\to \resR[\pt{R}]{\pt{Q}} 
\end{align*}
which come from the lax monoidal structure of the functor $\pt[\rs]{(-)} : \cc{D} \to \Psh$ are in fact opcartesian, cartesian, and cartesian, respectively, relative to the functor $\upc : \Psh \to \Cat$.
As a consequence, \Cref{prop:genday} is really an analogue of Day's embedding theorem for monoidal categories \cite{day70}, generalized to the case of logical refinement systems.
\begin{remark}
\label{rem:day}
Consider the case where $\cc{D}$ is a monoidal category and $\rs ={} !_{\cc{D}} : \cc{D} \to 1$ (cf.~\Cref{rem:pyoneda}).
In this case there is a single functor $\str_{*,*} : \cc{D} \times \cc{D} \to \cc{D}$ corresponding to the tensor product on $\cc{D}$, and we get a type-theoretic decomposition of the ``Day construction'' \cite{day70,kelly}, which transports any monoidal category into a closed monoidal category.
In particular, the operation of taking an (``external'') tensor product of presheaves and pushing forward along $\str$ defines an (``internal'') monoidal structure on the presheaf category $[\cc{D}^\op,\Set]$, while the operations of taking an (``external'') residual and pulling back along the functors $\costr$ or $\costrr$ (which are the left/right curryings of $m$) places an (``internal'') closed structure on $[\cc{D}^\op,\Set]$.
Thus, \Cref{prop:genday} specializes to the fact that the Yoneda functor preserves the monoidal structure of $\cc{D}$, as well as any closed structure which may exist.
\end{remark}

Moreover, this remark can be extended to the general case of a monoidal refinement system $\rs : \cc{D}\to\cc{T}$.
We have shown in \cite{mz15popl} that in any such refinement system with enough pushforwards, the fiber~$\cc{D}_W$ of any monoid~$W$ in $\cc{T}$ comes equipped with a monoidal structure defined by pushing forward along the multiplication map $\pairing:W\mul W\to W$:
\begin{align*}
P\otimes_W Q & \defeq \push{\pairing}(P\mul Q)  \tag{$P,Q \refs W$}
\end{align*}
The positive representation induces (by restriction) a functor 
\begin{equation}\label{equation/pt-between-monoidal-fibers}
\pt{(-)} : \cc{D}_W \to [(\pt{W})^\op, \Set] 
\end{equation}
from the fiber of $W$ into the presheaf category over $\pt{W}$.
Moreover, the category $\pt{W}$ inherits a monoidal structure
from the monoid $W$, defined as:
$$
\xymatrixcolsep{3pc}
\xymatrix{
\pt{W}\times\pt{W} \ar[r]^-{\str_{W,W}} & 
\pt{(W\mul W)}\ar[r]^-{\pt{\pairing}} & \pt{W}
}
$$
The associated presheaf category $[(\pt{W})^\op, \Set]$ 
comes thus equipped with a (closed) monoidal structure
provided by the Day tensor product.
Now, the functor~(\ref{equation/pt-between-monoidal-fibers})
is in general only lax monoidal, with the coercion morphism
$$
\xymatrix{\pt{P}\otimes_{\pt{W}} \pt{Q} \ar[r] & \pt{(P\otimes_W Q)}}
$$
constructed by applying the universal property of $\pt{P}\otimes_{\pt{W}} \pt{Q}$ (defined as the $\upc$-pushforward of $\pt{P} \mul \pt{Q}$ along $(m_{W,W};\pt{\pairing})$) to the composite derivation
$$
\pt{P}\mul \pt{Q} \seqd{m_{P,Q}}{m_{W,W}} \pt{(P \mul Q)} \seqd{\pt{\alpha}}{\pt{\pairing}} \pt{(P\otimes_W Q)}
$$
where the right-hand derivation is built by applying the positive representation functor to the derivation
$$P \mul Q \seqd{\alpha}{p} P\otimes_W Q$$
coming from the definition of $P \otimes_W Q$ as a $\rs$-pushforward of $P\mul Q$ along $p$.
\begin{proposition}
\label{prop:laxpres-monoid}
Let $W$ be a monoid in a monoidal refinement system $\rs$ with enough pushforwards.
Then the functor $\pt{(-)} : \cc{D}_W \to [(\pt{W})^\op, \Set]$ is lax monoidal.
In particular, the subtyping judgment $\pt{P}\otimes_{\pt{W}} \pt{Q} \nseq \pt{(P\otimes_W Q)}$ is valid (in $\upc : \Psh \to \Set$) for all $\rs$-refinements $P,Q \refs W$.
\end{proposition}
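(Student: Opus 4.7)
The coercion morphism $\pt{P}\otimes_{\pt{W}} \pt{Q} \to \pt{(P\otimes_W Q)}$ has already been constructed in the preamble to the proposition, using the lax monoidal structure $\str_{P,Q}$ of $\pt{(-)}:\cc{D}\to \Psh$, the functorial action of $\pt{(-)}$ on the multiplication derivation $P \mul Q \seq{\pairing} P\otimes_W Q$, and the universal property of the $\upc$-pushforward defining $\pt{P}\otimes_{\pt{W}} \pt{Q}$. To establish lax monoidality, I would additionally construct a unit coercion $I \to \pt{I_W}$ (where $I_W$ denotes the monoidal unit of $\cc{D}_W$, i.e., the pushforward $\push{\unit_W}I$ of the $\rs$-unit along the unit morphism $\unit_W : I \to W$ of the monoid), and verify the three coherence diagrams.

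The unit coercion is constructed in strict analogy with the binary case: one uses the nullary part of the lax monoidal structure on $\pt{(-)} : \cc{D} \to \Psh$ (the $\upc$-derivation $I \to \pt{I}$ coming from the commuting triangle for tensor units in a monoidal refinement system, elided earlier in \Cref{sec:conn} but completely parallel to $\str$), composed with the functorial action of $\pt{(-)}$ on the canonical derivation $I_\cc{D} \seq{\unit_W} \push{\unit_W}I_\cc{D}$, and then invokes the universal property of $\push{(I;\pt{\unit_W})}I$ as the monoidal unit of $[(\pt{W})^\op,\Set]$.

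For the three coherence diagrams, the plan is to exhibit each side as a morphism out of a suitable iterated $\upc$-pushforward, and then appeal to the universal property to reduce the check to equality of morphisms out of external tensor products $\pt{P}\mul\pt{Q}$ (resp.~$\pt{P}\mul\pt{Q}\mul\pt{R}$). The required equalities then follow from the combination of: (a) the naturality of $\str$ and its compatibility with the associator/unitor coherences on $\cc{T}$ and $\Cat$, which is built into $\pt{(-)} : \rs \to \upc$ being a lax morphism of monoidal refinement systems; (b) the associativity and unit equations of the monoid $W$, holding as equalities of morphisms in $\cc{T}$ and thus transported to $\Cat$ by functoriality of $\pt{(-)}$; and (c) functoriality of $\pt{(-)} : \cc{D} \to \Psh$ applied to the analogous equations of derivations in $\cc{D}$ exhibiting $(P\otimes_W Q)\otimes_W R \viso P\otimes_W (Q\otimes_W R)$ via iterated pushforwards (as in \Cref{prop:pullpushprops}).

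The main obstacle is notational bookkeeping rather than mathematical content: each coherence diagram expands into a large pasting of $\str$, $\pt{\pairing}$, and pushforward coherences, and keeping track of the universal-property invocations is tedious. No new ingredient beyond \Cref{prop:genday,prop:pullpushprops} and the lax monoidal structure of $\pt{(-)} : \cc{D}\to\Psh$ is required. The ``in particular'' clause is then immediate: the coercion morphism produced by the lax monoidal structure is, by construction, a morphism of presheaves over the same base category $\pt{W}$, hence a vertical morphism in $\upc$ and therefore a valid $\upc$-subtyping judgment.
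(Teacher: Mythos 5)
Your proposal is correct and follows the paper's own route: the paper offers no separate proof beyond the construction, in the paragraph preceding the proposition, of the binary coercion $\pt{P}\otimes_{\pt{W}}\pt{Q} \to \pt{(P\otimes_W Q)}$ via $\str_{P,Q}$, the image $\pt{\alpha}$ of the cocartesian derivation defining $P\otimes_W Q$, and the universal property of the $\upc$-pushforward---exactly the construction you invoke---and the ``in particular'' clause is read off just as you say. Your additional treatment of the unit coercion and the coherence diagrams only supplies routine detail that the paper elides entirely.
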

\noindent
Since the derivation $m_{P,Q}$ is cocartesian (\Cref{prop:genday}), the coercion $\pt{P}\otimes_{\pt{W}} \pt{Q} \to \pt{(P\otimes_W Q)}$ is an isomorphism just in case the positive representation transports the cocartesian derivation $\alpha$ to a cocartesian derivation $\pt{\alpha}$.
This is precisely what happens in the special case discussed in \Cref{rem:day}, where $p$ is equal to the identity.

In the case of a logical refinement system $\rs : \cc{D}\to\cc{T}$ with enough pullbacks, the fiber~$\cc{D}_W$ is not just monoidal, but also closed \cite{mz15popl},
with the residuals defined by pulling back along the left and right curryings of the monoid multiplication map:
\begin{align*}
\impL[W]{R}{P} &\defeq \pull{\lc{p}}{(\resL[R]{P})} \tag{$P,R \refs W$}\\
\impR[W]{R}{Q} &\defeq \pull{\rc{p}}{(\resR[R]{Q})} \tag{$Q,R \refs W$}
\end{align*}
There are two canonical coercion morphisms
$$
\xymatrix{\pt{(\impL[W]{R}{P})} \ar[r] & \impL[\pt{W}]{\pt{R}}{\pt{P}}}
\qquad\text{and}\qquad
\xymatrix{\pt{(\impR[W]{R}{Q})} \ar[r] &\impR[\pt{W}]{\pt{R}}{\pt{Q}}}
$$
induced by the lax monoidal structure of (\ref{equation/pt-between-monoidal-fibers}), 
which can be constructed as the composite derivations
$$
\pt{(\impL[W]{R}P)} \seqd{\pt{\alpha_1}}{\pt{(\lc{p})}} \pt{(\resL[R]P)} \seqd{\costr_{P,R}}{\costr_{W,W}} \resL[{\pt{R}}]{\pt{P}}
$$
and
$$
\pt{(\impR[W]{R}Q)} \seqd{\pt{\alpha_2}}{\pt{(\rc{p})}} \pt{(\resR[R]Q)} \seqd{\costrr_{Q,R}}{\costrr_{W,W}} \resR[{\pt{R}}]{\pt{Q}}
$$
where $\alpha_1$ and $\alpha_2$ are the cartesian derivations coming from the definition of $\impL[W]{R}P$ and $\impR[W]{R}Q$ as pullbacks.
One key difference with the previous situation is that the positive representation preserves all cartesian morphisms (\Cref{prop:pullpres}), which implies that the two coercion morphisms are in fact isomorphisms.
\begin{proposition}
\label{prop:respres}
Let $W$ be a monoid in a logical refinement system $\rs$ with enough residuals and pullbacks.
Then the functor $\pt{(-)} : \cc{D}_W \to [(\pt{W})^\op, \Set]$ preserves residuals.
In particular, we have vertical isomorphisms $\pt{(\impL[W]{R}{P})} \viso \impL[\pt{W}]{\pt{R}}{\pt{P}}$ and $\pt{(\impR[W]{R}{Q})} \viso \impR[\pt{W}]{\pt{R}}{\pt{Q}}$ for all $\rs$-refinements $P,Q,R \refs W$.
\end{proposition}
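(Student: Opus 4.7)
The plan is a short chain of vertical isomorphisms, reducing the claim to preservation results already established plus one purely formal identification of two morphisms in $\Cat$. I treat only the left-residual case in detail; the right-residual case is entirely symmetric.

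I first unfold the definition $\impL[W]{R}{P} \defeq \pull{\lc{p}}(\resL[R]{P})$ and apply \Cref{prop:pullpres} (preservation of pullbacks by the positive representation) to obtain $\pt{(\impL[W]{R}{P})} \viso \pull{\pt{(\lc{p})}}\pt{(\resL[R]{P})}$ in $\upc$. Then \Cref{prop:genday}(\ref{eqn:presresL}) gives $\pt{(\resL[R]{P})} \viso \pull{\costr_{W,W}}(\resL[\pt{R}]{\pt{P}})$. Composing the two pullbacks via \Cref{prop:pullpushprops} yields
$$\pt{(\impL[W]{R}{P})} \viso \pull{(\pt{(\lc{p})};\costr_{W,W})}(\resL[\pt{R}]{\pt{P}}).$$

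The remaining task, and the main obstacle, is to identify $\pt{(\lc{p})};\costr_{W,W} : \pt{W} \to \resL[\pt{W}]{\pt{W}}$ with the left currying $\lc{M}$ of the monoidal multiplication $M = \str_{W,W};\pt{p}$ inherited by $\pt{W}$. Since $\costr_{W,W} = \lc{\str_{W,\resL[W]{W}};\pt{(\plugL)}}$ by definition, the universal property of currying implies that $\pt{(\lc{p})};\costr_{W,W}$ is the currying of the composite
$$\pt{W}\mul \pt{W} \xrightarrow{\id_{\pt{W}}\mul \pt{(\lc{p})}} \pt{W}\mul \pt{(\resL[W]{W})} \xrightarrow{\str_{W,\resL[W]{W}}} \pt{(W\mul \resL[W]{W})} \xrightarrow{\pt{(\plugL)}} \pt{W}.$$
Naturality of the lax coercion $\str$ in its second argument rewrites the first two arrows as $\str_{W,W};\pt{(\id_W \mul \lc{p})}$, and then the $\beta$-law $(\id_W \mul \lc{p});\plugL = p$ for the original currying in $\cc{T}$ collapses the tail to $\pt{p}$. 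The full composite therefore equals $\str_{W,W};\pt{p} = M$, whose currying is $\lc{M}$.

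Substituting back gives $\pt{(\impL[W]{R}{P})} \viso \pull{\lc{M}}(\resL[\pt{R}]{\pt{P}}) = \impL[\pt{W}]{\pt{R}}{\pt{P}}$. Tracing through the construction, the isomorphism produced is precisely the coercion $\pt{\alpha_1};\costr_{P,R}$ indicated just before the proposition, recovered as the composite of two cartesian morphisms (the first by \Cref{prop:pullpres}, the second by the cartesian content of \Cref{prop:genday}(\ref{eqn:presresL})). The right-residual statement follows by the same argument, using \Cref{prop:genday}(\ref{eqn:presresR}), naturality of $\str$ in the first argument, and the $\beta$-law for right currying.
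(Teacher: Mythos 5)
Your argument is correct and takes essentially the same route as the paper, which constructs the coercion as the composite of the cartesian derivations $\pt{\alpha_1}$ (cartesian since the positive representation preserves cartesian morphisms, \Cref{prop:pullpres}) and $\costr_{P,R}$ (cartesian by \Cref{prop:genday}) and concludes it is a vertical isomorphism; your chain of pullback isomorphisms expresses the same content. The one thing you add is the explicit check that $\pt{(\lc{p})};\costr_{W,W} = \lc{(\str_{W,W};\pt{p})}$, a base-level identity the paper leaves implicit but which is needed for the coercion to land in $\impL[\pt{W}]{\pt{R}}{\pt{P}}$, and your verification of it via naturality of $\str$ and the $\beta$-law for currying is correct.
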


\section{Duality and negative translation}
\label{sec:duality}

\subsection{Overview}
\label{sec:overview}

The definition of the linear implications $\impL[W]{R}{P}$ and $\impR[W]{R}{Q}$ relative to a monoid $W$ are in fact instances of a more general pattern, which can be implemented in any logical refinement system $\b : \cc{E} \to \cc{B}$ with enough residuals and pullbacks.
Suppose given an arbitrary binary operation 
$$
\pairing : X \mul Y \to Z
$$
in the basis $\cc{B}$.
Then every refinement $R \refs Z$ defines a pair of \emph{dualization operators}
\begin{align*}
\negoL[R]{P} &\defeq \pull{\lc{\pairing}}{(\resL[R]{P})} \tag{$P \refs X$}\\
\negoR[R]{Q} &\defeq \pull{\rc{\pairing}}{(\resR[R]{Q})} \tag{$Q \refs Y$}
\end{align*}
inducing a contravariant adjunction
$$
\xymatrixcolsep{1pc}
\xymatrix{
\cc{E}_X \ar@/^1pc/^{\negoL[R]{(-)}}[rr] & \bot & \cc{E}_Y^\op\ar@/^1pc/^{\negoR[R]{(-)}}[ll]}
$$
between the refinements of $X$ and the refinements of $Y$, as witnessed by the following equivalences of typing and subtyping judgments:
$$
\infer={P \nseq \negoR[R]{Q}}{
\infer={P \seq{\lc{\pairing}} \resR[R]{Q}}{
P \mul Q \seq{\pairing} R}}
\qquad
\infer={Q \nseq \negoL[R]{P}}{
\infer={Q \seq{\rc{\pairing}} \resL[R]{P}}{
P \mul Q \seq{\pairing} R}}
$$
Observe that we don't require that $p$ be the multiplication of a monoid $W$ in order to implement this pattern, although of course we can apply it in that situation.

For example, consider this construction in the refinement system $\upc : \Psh \to \Cat$, applied to a monoidal category $\cc{C}$ seen as an object of $\Cat$ having a tensor product operation $\pairing : \cc{C} \times \cc{C} \to \cc{C}$.
In that case, the fiber associated to $\cc{C}$ is the presheaf category $[\cc{C}^\op,\Set]$, and given a fixed presheaf $R \in [\cc{C}^\op,\Set]$ one recovers a familiar pattern from the theory of linear continuations \cite{thieleckePhD,mellies2012lics}: a contravariant adjunction
$$
\xymatrixcolsep{1pc}
\xymatrix{
[\cc{C}^\op,\Set]\ar@/^1pc/^{\impL{R}{-}}[rr] & \bot & [\cc{C}^\op,\Set]^\op\ar@/^1pc/^{\impR{R}{-}}[ll]}
$$
induced by negation into $R$, where the definition of the two dualization operators coincides with the biclosed monoidal structure on $[\cc{C}^\op,\Set]$ equipped with the Day tensor product (cf.~\Cref{rem:day}).

But besides the connection with linear continuations, the situation is also strongly reminiscent of Isbell duality \cite{isbell} between the categories of covariant and contravariant presheaves over a given category $\cc{C}$.
In that case, however, while still working in the refinement system $\upc : \Psh \to \Cat$, one takes
$$
X = \cc{C} \quad
Y = \cc{C}^\op \quad
Z = \cc{C} \times \cc{C}^\op \qquad
\pairing = \id : \cc{C}\times \cc{C}^\op \to \cc{C}\times \cc{C}^\op
$$
together with $R = \cc{C}(-,-)$ the hom-bimodule of $\cc{C}$.
Then one recovers the contravariant adjunction
$$
\xymatrixcolsep{1pc}
\xymatrix{
[\cc{C}^\op,\Set]\ar@/^1pc/^{\negoL{(-)}}[rr] & \bot & [\cc{C},\Set]^\op\ar@/^1pc/^{\negoR{(-)}}[ll]}
$$
called \emph{Isbell conjugation} \cite[\S7]{lawvere2005seriously}, which transforms any contravariant presheaf into a covariant one, and vice versa.
Expanding the definitions of the refinement type constructors in $\upc : \Psh \to \Cat$ (\Cref{prop:upclogical}), these conjugation operations can be computed explicitly by the following end formulas:
\begin{align*}
\negoL{\phi} &= y \mapsto \forall x. \phi(x) \to \cc{C}(x,y) \\
\negoR{\psi} &= x \mapsto \forall y. \psi(y) \to \cc{C}(x,y)
\end{align*}
One fascinating observation by Isbell is that every pair of representable presheaves 
\begin{align*}
\pt{a} &= \cc{C}(-,a) : \cc{C}^\op \to \Set \\
\nt{a} &= \cc{C}(a,-) : \cc{C} \to \Set
\end{align*}
generated by the same object $a \in \cc{C}$ form a dual pair, in the sense that 
\begin{equation}\label{eqn:duals}
\pt{a} \viso \negoR{(\nt{a})} \qquad\text{and}\qquad
\nt{a} \viso \negoL{(\pt{a})}
\end{equation}
as can be verified by direct application of the Yoneda lemma:
\begin{align*}
\cc{C}(x,a) &\cong \forall y.\cc{C}(a,y) \to \cc{C}(x,y) \\
\cc{C}(a,y) &\cong \forall x.\cc{C}(x,a) \to \cc{C}(x,y)
\end{align*}
Although the equations (\ref{eqn:duals}) may appear counterintuitive if one thinks about the traditional way of working with continuations, the philosophy of Isbell duality says that one can find objects which are invariant with respect to double dualization, provided that the answer type $R$ is sufficiently large and discriminating.

In the specific case of classical Isbell duality, the operation $\pairing$ is trivial, and the role of $R$ is provided by the hom-bimodule.
Our main theorem in this section states that an even more general Isbell-style duality arises for refinement systems, in the sense that any refinement $P \refs A$ in an arbitrary refinement system $\rs$ gives rise to a dual pair 
$$\pt[\rs]{P} \refs \pt[\rs]{A} \qquad
\nt[\rs]{P} \refs \nt[\rs]{A}
$$
in the refinement system of presheaves.
We then develop one application of this theorem, showing how it can be used to explicitly calculate the positive representation of a pushforward, through a sort of negative encoding analogous to the classical double-negation translations of first-order logic into intuitionistic first-order logic.
As a consequence, we also obtain a negative encoding of the positive representation of a fiberwise tensor product, as the double dualization of a Day tensor product.

\subsection{The category of judgments and the presheaf of derivations}

Again, we suppose given an arbitrary refinement system $\rs : \cc{D} \to \cc{T}$.
\begin{definition}
The \definand{category of judgments} $\Typ{\rs}$ is defined as follows:
\begin{itemize}
\item objects are $\rs$-typing judgments: triples $(P,c,Q)$ where $P \refs A$, $c : A \to B$, and $Q \refs B$.
\item morphisms $(P_1,c_1,Q_1) \to (P_2,c_2,Q_2)$ are pairs of $\rs$-derivations
$$
\deduce{P_1 \seq{e} P_2}{\beta} \qquad
\deduce{Q_2 \seq{e'} Q_1}{\gamma}
$$
such that $c_1 = e;c_2;e'$.
\end{itemize}
\end{definition}
\begin{definition}
The \definand{presheaf of derivations} is the refinement $\Der{\rs} \refs \Typ{\rs}$ in $\upc : \Psh \to \Cat$ defined by
$$
\Der{\rs} = (P,c,Q) \mapsto \set{\alpha \mid \deduce{P \seq{c} Q}{\alpha}}
$$
on objects, and with the functorial action transforming any morphism $(P_1,c_1,Q_1) \to (P_2,c_2,Q_2)$ in $\Typ{\rs}$ given as a pair of $\rs$-derivations
$$
  \deduce{P_1 \seq{e} P_2}{\beta} \qquad
  \deduce{Q_2 \seq{e'} Q_1}{\gamma}
$$
such that $c_1 = e;c_2;e'$
into a typing rule
$$
\infer{P_1 \seq{c_1} Q_1}{P_2 \seq{c_2} Q_2}
$$
derived as
$$
\infer[\conv]{P_1 \seq{c_1} Q_1}{
 \infer[;-;]{P_1 \seq{e;c_2;e'} Q_1}{
  \deduce{P_1 \seq{e} P_2}{\beta} &
  P_2 \seq{c_2} Q_2 &
  \deduce{Q_2 \seq{e'} Q_1}{\gamma}
}}
$$
\end{definition}
\noindent
\begin{remark}
\label{rem:ptwisted}
The category of judgments $\Typ{\rs}$ can be seen as an analogue of the ``twisted arrow category'' of $\cc{T}$ \cite{maclane} (see also \cite[p.11]{lawvere70} and \cite[\S1.1.18]{maltsiniotis}), reducing to the opposite of the usual twisted arrow category of $\cc{T}$ in the case $\rs = \id_\cc{T}$.
\end{remark}

\begin{remark}
\label{rem:phom}
In the case where $\rs ={} !_{\cc{D}} : \cc{D} \to 1$, the presheaf of derivations of $\rs$ reduces to the hom-bimodule $\Der{\rs} = \cc{D}(-,-)$ (noting that in that case $\Typ{\rs} \viso \cc{D}\times \cc{D}^\op$).
\end{remark}

\begin{example}
\label{ex:hoarejudge}
For the Hoare Logic refinement system, the category of judgments has objects corresponding to Hoare triples, and has a morphism
$$
\triple{P_1}{c_1}{Q_1} \to \triple{P_2}{c_2}{Q_2}
$$
whenever $c_1$ can be factored as $c_1 = e;c_2;e'$ for some $e$ and $e'$ such that the triples
$$
\triple{P_1}{e}{P_2}\qquad\text{and}\qquad\triple{Q_2}{e'}{Q_1}
$$
are valid.
In particular (in the case where $e$ and $e'$ are equal to the identity), this means that $\Typ{\rs}$ includes morphisms between Hoare triples generated by inverting the ``Rules of Consequence'' \cite{hoare69}, i.e., that there is a morphism
$$
\triple{P_1}{c}{Q_1} \to \triple{P_2}{c}{Q_2}
$$
whenever $\vd P_1 \supset P_2$ and $\vd Q_2 \supset Q_1$.
\end{example}

\subsection{The duality theorem}
\label{sec:conjugation}

We begin by defining a family of \emph{bracket} operations, which will play the role of ``$p$'' in the template described in \Cref{sec:overview}.
\begin{definition}
Let $B$ be a $\rs$-type.
The \definand{$B$-bracket} is the functor $\cut[B] : \pt[\rs]{B} \times \nt[\rs]{B} \to \Typ{\rs}$ defined by $\cut[B]((P,c),(d,R)) = (P,(c;d),R)$.
\end{definition}
\noindent
One way to understand the family of bracket operations is as an \emph{extranatural transformation} from the external product of the relative slice and coslice functors
$$
\xymatrix{\cc{T}\times \cc{T}^\op \ar[rr]^-{\pt[\rs]{(-)}\times \nt[\rs]{(-)}} && \Cat\times \Cat \ar[r]^-{\times} & \Cat}
$$
into the category of judgments, in the sense of
\begin{proposition}\label{prop:extranat}
For any $\rs$-term $c : A \to B$ we have $(\pt[\rs]{c}\times\id_{\nt[\rs]{B}});\cut[B] = (\id_{\pt[\rs]{A}}\times \nt[\rs]{c});\cut[A]$.
\end{proposition}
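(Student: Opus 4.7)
The plan is to verify the equality of the two composite functors from $\pt[\rs]{A}\times\nt[\rs]{B}$ to $\Typ{\rs}$ by straightforward unfolding of definitions. First I would spell out how each of the four constituent functors acts on objects. An object of $\pt[\rs]{A}\times\nt[\rs]{B}$ is a quadruple $((P,c'),(d,R))$ where $P\refs X$, $c':X\to A$, $d:B\to Y$, $R\refs Y$. The functor $\pt[\rs]{c}$ from \Cref{prop:slicepos} acts by post-composition in the second component, sending $(P,c')$ to $(P,c';c)$, and dually $\nt[\rs]{c}$ acts by pre-composition, sending $(d,R)$ to $(c;d,R)$. Then $\cut[B]$ and $\cut[A]$ act by forming the composite middle term.

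Given this, the left-hand composite $(\pt[\rs]{c}\times\id_{\nt[\rs]{B}});\cut[B]$ sends the object to $(P,(c';c);d,R)$, while the right-hand composite $(\id_{\pt[\rs]{A}}\times\nt[\rs]{c});\cut[A]$ sends it to $(P,c';(c;d),R)$. These are equal by associativity of composition in $\cc{T}$. Next I would check the same thing on morphisms: a morphism in $\pt[\rs]{A}\times\nt[\rs]{B}$ is a pair of $\rs$-derivations with suitable equations on underlying terms, and both composites produce the same pair of derivations in $\Typ{\rs}$, again modulo associativity.

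The only ``hard'' step is really bookkeeping: making sure the object components $(P,R)$ pass through both routes unchanged, and that the induced pair of derivations agree on morphisms. Since the bracket $\cut[B]$ does not alter the derivations $(\beta,\gamma)$ in the morphisms of $\pt[\rs]{B}\times\nt[\rs]{B}$ but merely repackages them as a morphism of $\Typ{\rs}$ (whose side-condition $c_1=e;c_2;e'$ is verified in each case by the same associativity law), the check is immediate. Hence the stated extranaturality square commutes on the nose.
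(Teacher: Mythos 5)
Your proof is correct and is exactly the definition-unfolding verification the paper intends (the paper states \Cref{prop:extranat} without proof, treating it as immediate from the definitions of $\pt[\rs]{c}$, $\nt[\rs]{c}$, and the brackets $\cut[B]$, $\cut[A]$). Both composites send $((P,c'),(d,R))$ to $(P,c';c;d,R)$ and carry the morphism data through unchanged, with all side conditions following from associativity, just as you say.
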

\noindent
Moreover, although we will not need this fact, the extranatural transformation is universal in the sense that it exhibits the category of judgments as a \emph{coend} $\Typ{\rs} \viso \exists A.\pt[\rs]{A} \times \nt[\rs]{A}$ \cite[see exercise 3 on p. 227 for an analogous remark]{maclane}.

Following the general pattern described in \Cref{sec:overview}, we can use the $B$-bracket in combination with the presheaf of derivations to build a contravariant adjunction
$$
\xymatrixcolsep{1pc}
\xymatrix{
[(\pt[\rs]{B})^\op,\Set]\ar@/^1pc/^{\negoL[R]{(-)}}[rr] & \bot & [(\nt[\rs]{B})^\op,\Set]^\op\ar@/^1pc/^{\negoR[R]{(-)}}[ll]}
$$
between presheaves over $\pt[\rs]B$ and presheaves over $\nt[\rs]B$, where the dualization operators are defined by
\begin{align*}
\negoL{\phi} &\defeq \pull{\lc{\cut[B]}}{(\resL[\Der{\rs}]{\phi})} \tag{$\phi \refs \pt[\rs]{B}$} \\
\negoR{\psi} &\defeq \pull{\rc{\cut[B]}}{(\resR[\Der{\rs}]{\psi})} \tag{$\psi \refs \nt[\rs]{B}$} 
\end{align*}
Moreover, we can establish an Isbell-like duality between the positive and negative representations, relying on the fact that both can be expressed as pullbacks of the presheaf of derivations.
Recall from \Cref{sec:funrep,sec:covariant} that every $\rs$-refinement $Q \refs B$ induces a pair of objects $\pt[\rs]Q \in \pt[\rs]B$ and $\nt[\rs]Q \in \nt[\rs]B$, which represent the corresponding presheaves $\pt[\rs]Q \refs \pt[\rs]B$ and $\nt[\rs]Q \refs \nt[\rs]B$ in $\upc : \Psh \to \Cat$.
Given such a $\rs$-refinement $Q \refs B$, define two functors $k_Q :\pt[\rs]B \to \Typ{\rs}$ and $v_Q : \nt[\rs]B \to \Typ{\rs}$ by
$$
k_Q \defeq (\id_{\pt[\rs]{B}}\times \nt[\rs]{Q});\cut[B]
\quad\text{and}\quad
v_Q \defeq (\pt[\rs]{Q}\times \id_{\nt[\rs]{B}});\cut[B].
$$
\begin{lemma}\label{prop:universal}
For any $\rs$-refinement $Q \refs B$,  we have $\pt[\rs]{Q} \viso \pull{k_Q}\Der{\rs}$ and $\nt[\rs]{Q} \viso \pull{v_Q}\Der{\rs}$.
\end{lemma}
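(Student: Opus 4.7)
The plan is to verify both vertical isomorphisms by direct calculation, exploiting the fact that pullbacks in $\upc:\Psh\to\Cat$ are computed by precomposition of presheaves (as recorded in \Cref{prop:upclogical}). Since $\pt[\rs]{Q}$, $\nt[\rs]{Q}$, and $\Der{\rs}$ are all defined as presheaves whose fibers are collections of $\rs$-derivations, the result should reduce to observing that precomposition with $k_Q$ (resp.\ $v_Q$) slices $\Der{\rs}$ in exactly the manner prescribed by the definitions in \Cref{sec:funrep,sec:covariant}.

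Concretely, I would begin by unfolding $k_Q:\pt[\rs]{B}\to\Typ{\rs}$ on objects. Recalling that $\nt[\rs]{Q}$ (as an object of $\nt[\rs]{B}$, per \Cref{prop:pcoyonrep}) is the pair $(\id_B,Q)$, and that $\cut[B]$ sends $((P,c),(d,R))$ to $(P,c;d,R)$, one computes $k_Q(P,c) = (P, c;\id_B, Q) = (P,c,Q)$. Therefore, on objects,
\[
(\pull{k_Q}\Der{\rs})(P,c) \;=\; \Der{\rs}(P,c,Q) \;=\; \set{\alpha \mid \deduce{P \seq{c} Q}{\alpha}} \;=\; \pt[\rs]{Q}(P,c),
\]
which matches the defining formula for $\pt[\rs]{Q}$ as a presheaf on $\pt[\rs]{B}$.

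Next I would check agreement of the functorial actions. A morphism $(P_1,c_1)\to(P_2,c_2)$ in $\pt[\rs]{B}$ is a derivation $\deduce{P_1 \seq{e} P_2}\alpha$ with $c_1 = e;c_2$. Its image under $k_Q$ is a morphism $(P_1,c_1,Q)\to(P_2,c_2,Q)$ in $\Typ{\rs}$, namely the pair $(\alpha,\id_Q)$ with $c_1 = e;c_2;\id_B$. The action of $\Der{\rs}$ on this morphism (by definition) is precisely the rule that converts a derivation of $P_2\seq{c_2}Q$ into one of $P_1\seq{c_1}Q$ by pre-composing with $\alpha$ and applying the conversion $c_1 = e;c_2 = e;c_2;\id$, which is literally the contravariant action defining $\pt[\rs]{Q}$. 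Hence $\pull{k_Q}\Der{\rs}$ and $\pt[\rs]{Q}$ agree as presheaves on $\pt[\rs]{B}$, giving the desired vertical isomorphism.

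The second isomorphism follows by a symmetric calculation using $\pt[\rs]{Q} = (Q,\id_B)\in\pt[\rs]{B}$ (per \Cref{prop:pyonrep}), so that $v_Q(d,R) = (Q,\id_B;d,R) = (Q,d,R)$, and thus $(\pull{v_Q}\Der{\rs})(d,R)$ unfolds to the set of derivations of $Q\seq{d}R$, which is precisely $\nt[\rs]{Q}(d,R)$; functoriality is checked in the same way, with the right-hand component of the morphism in $\Typ{\rs}$ now carrying the nontrivial derivation. There is no genuine obstacle here beyond bookkeeping: the identification of $\cut[B]$ with substitution of the chosen refinement $Q$ into the missing slot of a typing judgment makes the result a direct unfolding of definitions, and the only subtle point to verify carefully is that the conversion step $c_1 = e;c_2;\id$ used to transport along $k_Q$ (and symmetrically along $v_Q$) coincides with the one built into the functorial action of $\pt[\rs]{Q}$ (resp.\ $\nt[\rs]{Q}$).
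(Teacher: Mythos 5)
Your proposal is correct and follows essentially the same route as the paper: unfold $k_Q$ and $v_Q$ on objects to $(P,c)\mapsto(P,c,Q)$ and $(d,R)\mapsto(Q,d,R)$, then observe that precomposing $\Der{\rs}$ with these functors reproduces the defining formulas for $\pt[\rs]{Q}$ and $\nt[\rs]{Q}$. The paper dismisses the rest as ``immediate by definition of $\Der{\rs}$''; you merely make explicit the check that the functorial actions agree, which is a harmless elaboration rather than a different argument.
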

\begin{proof}
Expanding definitions, $k_Q$ and $v_Q$ reduce to the following actions on objects:
\begin{align*}
k_Q &= (P,c) \mapsto (P,c,Q) \\
v_Q &= (d,R) \mapsto (Q,d,R)
\end{align*}
The identities $\pt[\rs]{Q} \viso \pull{k_Q}\Der{\rs}$ and $\nt[\rs]{Q} \viso \pull{v_Q}\Der{\rs}$ are immediate by definition of $\Der{\rs}$.
\end{proof}
\noindent
\begin{theorem}
\label{thm:duality}
For any $\rs$-refinement $Q \refs B$, we have $\nt[\rs]{Q} \viso \negoL{(\pt[\rs]{Q})}$ and $\pt[\rs]{Q} \viso \negoR{(\nt[\rs]{Q})}$.
\end{theorem}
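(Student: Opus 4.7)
The plan is to prove $\nt[\rs]{Q} \viso \negoL{(\pt[\rs]{Q})}$ by a chain of vertical isomorphisms in $\upc$, using the distributivity properties of pullbacks and pushforwards. The identity $\pt[\rs]{Q} \viso \negoR{(\nt[\rs]{Q})}$ will then follow by an entirely symmetric argument, swapping the roles of the relative slice and coslice, of $\resL{}$ and $\resR{}$, and of $k_Q$ and $v_Q$.

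First, I would invoke \Cref{prop:reps,prop:pyonrep} to represent the presheaf $\pt[\rs]{Q}$ as the pushforward of the unit presheaf along the functor $q \defeq (Q,\id_B) : 1 \to \pt[\rs]{B}$, i.e., $\pt[\rs]{Q} \viso \push{q}I$. Next, I would apply the distributivity isomorphism~(\ref{eqn:respp}) of \Cref{prop:logdist} (applicable in $\upc$ by \Cref{prop:upclogical}) to rewrite
$$\resL[\Der{\rs}]{\pt[\rs]{Q}} \viso \resL[\Der{\rs}]{\push{q}I} \viso \pull{(\resL[\id]{q})}(\resL[\Der{\rs}]{I}) \viso \pull{(\resL[\id]{q})}\Der{\rs},$$
where in the last step I use the unit law $\resL[R]{I} \viso R$, which holds in any closed monoidal category.

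Then, composing pullbacks via \Cref{prop:pullpushprops}, I obtain
$$\negoL{(\pt[\rs]{Q})} = \pull{\lc{\cut[B]}}(\resL[\Der{\rs}]{\pt[\rs]{Q}}) \viso \pull{(\lc{\cut[B]};(\resL[\id]{q}))}\Der{\rs}.$$
The key verification at this point is that the composite functor $\lc{\cut[B]};(\resL[\id]{q}) : \nt[\rs]{B} \to \Typ{\rs}$ coincides with $v_Q$. Unfolding definitions, $\lc{\cut[B]}$ sends an object $y = (d,R) \in \nt[\rs]{B}$ to the functor $\cut[B](-,y) : \pt[\rs]{B} \to \Typ{\rs}$, and then $\resL[\id]{q}$ evaluates at the chosen object $q$, yielding $\cut[B](q,y) = (Q,\id_B;d,R) = (Q,d,R) = v_Q(y)$. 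Having established this equality of functors, \Cref{prop:universal} gives $\negoL{(\pt[\rs]{Q})} \viso \pull{v_Q}\Der{\rs} \viso \nt[\rs]{Q}$, as required.

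The main obstacle is the bookkeeping in the equation $\lc{\cut[B]};(\resL[\id]{q}) = v_Q$: one must be careful to distinguish the two senses in which ``$\pt[\rs]{Q}$'' is used (as representing object of the relative slice, versus as a presheaf), and to check that the curried bracket $\lc{\cut[B]}$ composed with the left-residual action $\resL[\id]{q}$ really reduces to plugging the representative $q$ into the first slot of $\cut[B]$. Once this identification is in hand, every other step is a direct citation of a preservation result established earlier, and the dual statement $\pt[\rs]{Q} \viso \negoR{(\nt[\rs]{Q})}$ requires no new ideas.
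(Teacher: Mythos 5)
Your proposal is correct and follows essentially the same route as the paper's own proof: represent $\pt[\rs]{Q}$ as $\push{\pt[\rs]{Q}}I$, distribute the residual over the pushforward via \Cref{prop:logdist}, compose the pullbacks, identify the resulting functor with $v_Q$ (the paper's ``$\beta$ conversion'' step), and conclude by \Cref{prop:universal}. The only cosmetic difference is that you carry out the identification $\lc{\cut[B]};(\resL[\id]{\pt[\rs]{Q}}) = v_Q$ in one explicit unfolding where the paper splits it into a $\beta$-conversion followed by the definition of $v_Q$.
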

\begin{proof}
The proof is similar to the proof of \Cref{prop:genday}.
We show one case (the other is symmetric):
\begin{align*}
\negoL{(\pt[\rs]{Q})} &\defeq \pull{\lc{\cut[B]}} (\resL[\Der{\rs}]{\pt[\rs]{Q}})
\\ &\viso \pull{\lc{\cut[B]}} (\resL[\Der{\rs}]{\push{\pt[\rs]Q}I}) \tag{\Cref{prop:reps,prop:pyonrep}}
\\ &\viso \pull{\lc{\cut[B]}}\pull{(\resL[\id]{\pt[\rs]Q})} (\resL[\Der{\rs}]{I}) \tag{\Cref{prop:logdist}}
\\ &\viso \pull{\lc{\cut[B]}}\pull{(\resL[\id]{\pt[\rs]Q})} \Der{\rs} \tag{$\Der{\rs} \viso \resL[\Der{\rs}]{I}$}
\\
&\viso \pull{(\lc{\cut[B]};(\resL[\id]{\pt[\rs]Q}))} \Der{\rs} \tag{\Cref{prop:pullpushprops}}
\\ &\viso \pull{((\pt[\rs]Q\times\id);\cut[B])} \Der{\rs} \tag{$\beta$ conversion}
\\ &\viso \nt[\rs]{Q} \tag{\Cref{prop:universal}}
\end{align*}
\end{proof}
\begin{remark}
\label{ex:isbell}
When $\rs ={} !_\cc{D} : \cc{D} \to 1$, the operations $\phi \mapsto \negoL{\phi}$ and $\psi \mapsto \negoR{\psi}$ reduce to Isbell conjugation between the category $[\cc{D}^\op,\Set]$ of contravariant presheaves and the category $[\cc{D},\Set]^\op$ of op'd covariant presheaves, and \Cref{thm:duality} reduces to the fact that Isbell conjugation restricts to an equivalence on representable presheaves.
\end{remark}

\subsection{Negative encodings}
\label{sec:negenc}

We begin by proving a useful lemma.
\begin{lemma}
\label{lem:notpush}
For any $\rs$-term $c : A \to B$ and presheaf $\phi \refs \pt[\rs]A$ we have
$\pull{(\nt[\rs]c)}\negoL{\phi} \viso \negoL{(\push{\pt[\rs]c}\phi)}$.
\end{lemma}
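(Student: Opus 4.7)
The plan is to unfold both sides using the definition of $\negoL{(-)}$ and reduce the statement to an equality of functors that is precisely the extranaturality of the bracket family $\cut[-]$.

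First, I would expand the right-hand side. By definition,
$$\negoL{(\push{\pt[\rs]c}\phi)} \;\defeq\; \pull{\lc{\cut[B]}}(\resL[\Der{\rs}]{\push{\pt[\rs]c}\phi}).$$
Applying the ``residual of pushforward'' identity~(\ref{eqn:respp}) of \Cref{prop:logdist} with $P := \phi$, $c := \pt[\rs]c$, $R := \Der{\rs}$ and $d := \id$, and then collapsing the iterated pullback via \Cref{prop:pullpushprops}, this becomes
$$
\negoL{(\push{\pt[\rs]c}\phi)} \;\viso\; \pull{\bigl(\lc{\cut[B]};(\resL[\id]{\pt[\rs]c})\bigr)}(\resL[\Der{\rs}]{\phi}).
$$
Similarly, unfolding the left-hand side and again using \Cref{prop:pullpushprops} to merge pullbacks gives
$$
\pull{(\nt[\rs]c)}\negoL{\phi} \;\viso\; \pull{\bigl(\nt[\rs]c;\lc{\cut[A]}\bigr)}(\resL[\Der{\rs}]{\phi}).
$$

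The heart of the argument is then to show that the two $\cc{T}$-level functors
$$\lc{\cut[B]};(\resL[\id]{\pt[\rs]c}) \quad\text{and}\quad \nt[\rs]c;\lc{\cut[A]}$$
from $\nt[\rs]B$ to $[\pt[\rs]A,\Typ{\rs}]$ agree. Both send an object $x\in\nt[\rs]B$ to a functor $\pt[\rs]A\to\Typ{\rs}$; unfolding the currying, the first sends $x$ to $a\mapsto \cut[B](\pt[\rs]c(a),x)$ while the second sends $x$ to $a\mapsto \cut[A](a,\nt[\rs]c(x))$. The equality of these two functors is exactly the extranaturality property of $\cut[-]$ recorded in \Cref{prop:extranat}, namely $(\pt[\rs]c\times\id_{\nt[\rs]B});\cut[B] = (\id_{\pt[\rs]A}\times \nt[\rs]c);\cut[A]$.

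Once the two composites are identified, the result follows immediately by taking pullbacks of $\resL[\Der{\rs}]{\phi}$ along the common functor. The only subtle step is the correct bookkeeping when applying \Cref{prop:logdist}, in particular recognizing that the $\cc{T}$-level pullback ingredient $\resL[\id]{\pt[\rs]c}$ really is the precomposition-by-$\pt[\rs]c$ functor $[\pt[\rs]B,\Typ{\rs}]\to[\pt[\rs]A,\Typ{\rs}]$; modulo that identification the remainder of the proof is a purely formal chain of vertical isomorphisms, with extranaturality of the bracket providing the single nontrivial equation.
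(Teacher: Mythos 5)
Your proposal is correct and follows essentially the same route as the paper's proof: unfold $\negoL{(-)}$, apply \Cref{prop:logdist}(\ref{eqn:respp}) and \Cref{prop:pullpushprops} to reduce both sides to pullbacks of $\resL[\Der{\rs}]{\phi}$, and identify the two base functors via the extranaturality of the bracket (\Cref{prop:extranat}). The only difference is presentational --- you expand both sides and meet in the middle, whereas the paper writes a single chain of vertical isomorphisms from one side to the other --- but the key steps and the single nontrivial appeal to extranaturality are identical.
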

\begin{proof}
The reasoning is similar to the proofs of \Cref{prop:genday} and \Cref{thm:duality}, except for the appeal in the middle to extranaturality of the bracket operations:
\begin{align*}
\negoL{(\push{\pt[\rs]c}\phi)}
 &\viso \pull{(\lc{\cut[B]};(\resL[\id]{\pt[\rs]c}))}(\resL[\Der{\rs}]{\phi})
\\ &\viso \pull{(\pt[\rs]c\mul\id;\cut[B])}(\resL[\Der{\rs}]{\phi})
\\ &\viso \pull{(\id\mul\nt[\rs]c;\cut[A])}(\resL[\Der{\rs}]{\phi}) \tag{\Cref{prop:extranat}}
\\ &\viso \pull{(\nt[\rs]c;\lc{\cut[A]})}(\resL[\Der{\rs}]{\phi})
\\ &\viso \pull{(\nt[\rs]c)}\negoL{\phi}
\end{align*}
\end{proof}
\noindent
A more conceptual way of understanding the lemma is as follows.
Given any term $c : A \to B$ in $\cc{T}$, pulling back and pushing forward along the functors $\pt{c}$ and $\nt{c}$ induces a pair of adjunctions
$$
\xymatrixcolsep{1pc}
\xymatrix{
[(\pt{A})^\op,\Set]\ar@/^1pc/^{\push{\pt{c}}}[rr] & \bot & [(\pt{B})^\op,\Set]\ar@/^1pc/^{\pull{(\pt{c})}}[ll]}
\qquad
\xymatrix{
[(\nt{B})^\op,\Set]\ar@/^1pc/^{\push{\nt{c}}}[rr] & \bot & [(\nt{A})^\op,\Set]\ar@/^1pc/^{\pull{(\nt{c})}}[ll]}
$$
which may be combined with the adjunctions induced by the dualization operators to build a ``thickened square'':
$$
\xymatrixcolsep{1pc}
\xymatrixrowsep{2.5pc}
\xymatrix{
[(\pt{B})^\op,\Set]\ar@/^1pc/^{\pull{(\pt{c})}}[dd]\ar@/^1pc/^{\negoL[R]{(-)}}[rr]
& \bot &
[(\nt{B})^\op,\Set]^\op\ar@/^1pc/^{\push{\nt{c}}^\op}[dd]\ar@/^1pc/^{\negoR[R]{(-)}}[ll] \\
\dashv && \dashv\\
[(\pt{A})^\op,\Set]\ar@/^1pc/^{\push{\pt{c}}}[uu]\ar@/^1pc/^{\negoL[R]{(-)}}[rr] 
& \bot &
[(\nt{A})^\op,\Set]^\op\ar@/^1pc/^{\pull{(\nt{c})}^\op}[uu]\ar@/^1pc/^{\negoR[R]{(-)}}[ll]
}
$$
Beware: not all paths along this diagram commute!
However, \Cref{lem:notpush} says that travelling from the lower left corner to the upper right corner along the outer face is equivalent to travelling with the same origin and destination along the inner face.
Moreover, from the existence of the adjunctions we can automatically derive the following statements, which summarize what happens when one takes different paths along the square.
\begin{corollary}
\label{corr:notpush2}
For any $\rs$-term $c : A \to B$ and presheaves $\psi \refs \nt[\rs]B$, $\rho \refs \pt[\rs]B$, and $\sigma \refs \nt[\rs]A$:
\begin{align}
\pull{(\pt[\rs]c)}\negoR{\psi} &\viso \negoR{(\push{\nt[\rs]c}\psi)} \label{eqn:notpushA}\tag{a} \\
\push{(\nt[\rs]c)}\negoL{\rho} &\nseq \negoL{(\pull{(\pt[\rs]c)}\rho)} \label{eqn:notpushB}\tag{b} \\
\push{(\pt[\rs]c)}\negoR{\sigma} &\nseq \negoR{(\pull{(\nt[\rs]c)}\sigma)} \label{eqn:notpushC}\tag{c}
\end{align}
\end{corollary}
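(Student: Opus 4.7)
The plan is to prove (a) by the same chain of isomorphisms as \Cref{lem:notpush} but using the right-residual, and then to obtain the two subtyping judgments (b) and (c) as the ``transposes'' of the counits of the pullback/pushforward adjunctions through the dualization operators.

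For (a), I would mimic the proof of \Cref{lem:notpush} step for step, with the only real change being to apply \Cref{prop:extranat} in the \emph{right-bracketed} form $(\pt[\rs]c\times\id);\cut[B] = (\id\times\nt[\rs]c);\cut[A]$ after currying by $\rc{\cut[(-)]}$ instead of $\lc{\cut[(-)]}$. Using $\push{\nt[\rs]c}\psi \viso \push{\nt[\rs]c}\push{\psi}I$ (\Cref{prop:reps,prop:pyonrep}) together with \Cref{prop:logdist} to distribute the pushforward into the residual, and then \Cref{prop:pullpushprops} to compose the resulting pullbacks, gives the chain
\[
\negoR{(\push{\nt[\rs]c}\psi)} \;\viso\; \pull{\rc{\cut[B]}}\pull{(\resR[\id]{\nt[\rs]c})}(\resR[\Der{\rs}]{I}) \;\viso\; \pull{((\id\times\nt[\rs]c);\cut[B])}\Der{\rs} \;\viso\; \pull{((\pt[\rs]c\times\id);\cut[A])}\Der{\rs} \;\viso\; \pull{(\pt[\rs]c)}\negoR{\psi}.
\]
The only non-routine ingredient is the middle equality, which is exactly the extranaturality hypothesis (\Cref{prop:extranat}); the rest is formal shuffling of pullbacks and residuals.

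For (b) and (c), I would use the fact that $\push{(\nt[\rs]c)}\dashv \pull{(\nt[\rs]c)}$ and $\push{(\pt[\rs]c)}\dashv\pull{(\pt[\rs]c)}$ are adjunctions between presheaf categories, combined with the contravariance of the dualization operators. For (b), apply \Cref{lem:notpush} with $\phi \defeq \pull{(\pt[\rs]c)}\rho$ to obtain
\[
\pull{(\nt[\rs]c)}\negoL{(\pull{(\pt[\rs]c)}\rho)} \;\viso\; \negoL{(\push{\pt[\rs]c}\pull{(\pt[\rs]c)}\rho)}.
\]
Now compose with the counit $\push{(\pt[\rs]c)}\pull{(\pt[\rs]c)}\rho \nseq \rho$: applying $\negoL{(-)}$ (which is contravariant, hence sends subtypings in the opposite direction) yields a subtyping $\negoL{\rho}\nseq \pull{(\nt[\rs]c)}\negoL{(\pull{(\pt[\rs]c)}\rho)}$, whose transpose across $\push{(\nt[\rs]c)}\dashv\pull{(\nt[\rs]c)}$ is exactly the desired $\push{(\nt[\rs]c)}\negoL{\rho}\nseq \negoL{(\pull{(\pt[\rs]c)}\rho)}$. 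Statement (c) follows by precisely the same argument but using part (a) in place of \Cref{lem:notpush}, and the counit $\push{(\nt[\rs]c)}\pull{(\nt[\rs]c)}\sigma\nseq\sigma$ transposed across $\push{(\pt[\rs]c)}\dashv\pull{(\pt[\rs]c)}$.

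The main obstacle is in (a): one must be careful that the right extranaturality identity from \Cref{prop:extranat} is the correct one to invoke after the currying by $\rc{\cut[B]}$, since the domains and codomains of the two bracket functors $\cut[A]$ and $\cut[B]$ swap roles in the left vs. right version. Parts (b) and (c) are then essentially formal: they are instances of the well-known fact that in a ``thickened square'' of adjunctions where the top-right composite is isomorphic to the left-bottom composite, the opposite traversal gives a canonical comparison map, which here lands on the $\nseq$ side because the opposite counits are genuine subtypings rather than isomorphisms.
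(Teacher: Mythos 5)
Your handling of parts (b) and (c) is correct and is essentially the paper's own argument: the paper disposes of both in one line as mates of the subtyping relations $\pull{(\nt[\rs]c)}\negoL{\rho} \nseq \negoL{(\push{\pt[\rs]c}\rho)}$ and $\pull{(\pt[\rs]c)}\negoR{\psi} \nseq \negoR{(\push{\nt[\rs]c}\psi)}$, and your whisker-with-the-counit-then-transpose computation is exactly how that mate is unfolded. (The paper additionally records an equivalent pair of valid typing rules in $\upc$ and derives one explicitly, but that is supplementary.)

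Part (a) is where your written argument has a genuine gap. Your displayed chain rewrites $\push{\nt[\rs]c}\psi$ as $\push{\nt[\rs]c}\push{\psi}I$ and thereafter carries $\resR[\Der{\rs}]{I}$ instead of $\resR[\Der{\rs}]{\psi}$. But $\psi \refs \nt[\rs]B$ is an \emph{arbitrary} presheaf here, not an object of $\nt[\rs]B$, so $\push{\psi}I$ is not well typed: the step ``$Q \viso \push{Q}I$'' belongs to the proof of \Cref{thm:duality}, where the presheaf in question is representable, whereas the proof of \Cref{lem:notpush} that you claim to be mimicking deliberately keeps $\resL[\Der{\rs}]{\phi}$ intact throughout. (The $A$/$B$ subscripts on the brackets in your middle step are also swapped relative to \Cref{prop:extranat}, and the intermediate objects $\pull{((\pt[\rs]c\times\id);\cut[B])}\Der{\rs}$ live over a product category rather than over $\pt[\rs]A$.) There are two ways to repair this. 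The honest mirror of \Cref{lem:notpush} is
\begin{align*}
\negoR{(\push{\nt[\rs]c}\psi)}
 &\viso \pull{(\rc{\cut[A]};(\resR[\id]{\nt[\rs]c}))}(\resR[\Der{\rs}]{\psi})
 \viso \pull{(\pt[\rs]c;\rc{\cut[B]})}(\resR[\Der{\rs}]{\psi})
 \viso \pull{(\pt[\rs]c)}\negoR{\psi},
\end{align*}
where the first step is \Cref{prop:logdist}(c) plus composition of pullbacks and the second is \Cref{prop:extranat} read through the right currying. The slicker route, which is the one the paper actually takes, requires no computation at all: $\pull{(\pt[\rs]c)}\circ\negoR{(-)}$ and $\negoR{(-)}\circ\push{\nt[\rs]c}$ are right adjoint to $\negoL{(-)}\circ\push{\pt[\rs]c}$ and $\pull{(\nt[\rs]c)}\circ\negoL{(-)}$ respectively, so \Cref{lem:notpush} transfers to (a) by uniqueness of right adjoints. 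Either repair makes your proof go through, but as written the chain for (a) does not.
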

\begin{proof}
(\ref{eqn:notpushA}) follows immediately from \Cref{lem:notpush}, since the two composite functors $\pull{(\pt[\rs]c)} \circ \negoR{(-)}$ and $\negoR{(-)}\circ \push{\nt[\rs]c}$
are right adjoints to the two composite functors $\negoL{(-)}\circ \push{\pt[\rs]{c}}$ and $\pull{(\nt[\rs]c)}\circ \negoL{(-)}$.
Likewise, (\ref{eqn:notpushB}) and (\ref{eqn:notpushC}) follow automatically as mates \cite[\S1.11]{kelly} of the subtyping relations
$$
\pull{(\nt[\rs]c)}\negoL{\phi} \nseq \negoL{(\push{\pt[\rs]c}\phi)}
\quad\text{and}\quad
\pull{(\pt[\rs]c)}\negoR{\psi} \nseq \negoR{(\push{\nt[\rs]c}\psi)}.
$$
Let us nonetheless observe, though, that (\ref{eqn:notpushB}) and (\ref{eqn:notpushC}) are equivalent to the fact that the following typing rules are valid in $\upc : \Psh \to \Cat$:
$$
\infer{\pull{(\pt[\rs]{c})}\rho\mul (\push{\nt[\rs]{c}}\psi) \seq{\cut[A]} \Der{\rs}}{
\rho\mul \psi \seq{\cut[B]} \Der{\rs}}
\qquad
\infer{(\push{\pt[\rs]{c}}\phi)\mul \pull{(\nt[\rs]{c})}\sigma \seq{\cut[B]} \Der{\rs}}{
\phi\mul \sigma \seq{\cut[A]} \Der{\rs}}
$$
The rule on the left, for example, can be derived as follows:
$$
\infer[\mbox{\Cref{prop:logdist}}]{\pull{(\pt{c})}\phi\mul (\push{\nt{c}}\psi) \seq{\cut[A]} \Der{\rs}}{
\infer[L(\id\times \nt{c})]{\push(\id_{\pt{A}}\times\nt{c})(\pull{(\pt{c})}\phi\mul \psi) \seq{\cut[A]} \Der{\rs}}{
\infer[\mbox{\Cref{prop:extranat}}]{\pull{(\pt{c})}\phi\mul \psi \seq{(\id_{\pt{A}}\times\nt{c});\cut[A]} \Der{\rs}}{
\infer[;]{\pull{(\pt{c})}\phi\mul \psi \seq{(\pt{c}\times\id_{\nt{B}});\cut[B]} \Der{\rs}}{
\infer[\mul]{\pull{(\pt{c})}\phi\mul \psi \seq{\pt{c}\times\id_{\nt{B}}} \phi\mul \psi}{
 \infer[L\pull{(\pt{c})}]{\pull{(\pt{c})}\phi \seq{\pt{c}} \phi}{} & \infer[\id]{\psi \seq{\id_{\nt{B}}} \psi}{}}
 & 
\phi\mul \psi \seq{\cut[B]} \Der{\rs}}}}}
$$

\end{proof}
\noindent
As we mentioned at the end of \Cref{sec:pushpull}, the positive representation does not in general preserve pushforwards, although there is always a coercion $\push{\pt{c}}\pt{P} \nseq \pt{(\push{c}P)}$ whenever the pushforward $\push{c}P$ exists in $\rs : \cc{D} \to \cc{T}$.
Similarly, as we discussed at the end of \Cref{sec:conn}, given a monoid $W$ in $\cc{T}$, the induced fiberwise tensor product $\otimes_W$ on $\cc{D}_W$ is not strictly mapped by the functor $\pt{(-)} : \cc{D}_W \to [(\pt{W})^\op,\Set]$ to the Day tensor product $\otimes_{\pt{W}}$, although we have a coercion $\pt{P}\otimes_{\pt{W}}\pt{Q} \nseq \pt{(P\otimes_W Q)}$ for all $\rs$-refinements $P,Q \refs W$ (\Cref{prop:laxpres-monoid}).
One could say that the situation with pullbacks $\pull{c}Q$ and fiberwise residuals $\impL[W]{R}{P}$ and $\impR[W]{R}{Q}$ is nicer, since they are both preserved by the positive representation (\Cref{prop:pullpres,prop:respres}).
However, things are not as bad as they seem for pushforward and fiberwise tensor product, because as we alluded to earlier, this discrepancy may be resolved ``up to double dualization'', by appeal to the Isbell duality theorem for type refinement systems.
\begin{proposition}\label{prop:notpullnot}
Whenever the pushforward $\push{c}P$ exists in $\rs$, we have
\begin{align}
\pt[\rs]{(\push{c}P)} &\viso \negoR{(\pull{(\nt[\rs]c)}\nt[\rs]P)} \label{eqn:notpullnot}\tag{a}\\
\pt[\rs]{(\push{c}P)} &\viso \negoR{(\negoL{(\push{\pt[\rs]c}\pt[\rs]P)})} \label{eqn:notnotpush}\tag{b}
\end{align}
\end{proposition}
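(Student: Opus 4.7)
The plan is to derive both identities mechanically from three ingredients that are already in place: the Isbell-style duality theorem (\Cref{thm:duality}), the preservation result \Cref{prop:pushpres} saying $\nt[\rs]{(\push{c}P)} \viso \pull{(\nt[\rs]c)}\nt[\rs]P$, and \Cref{lem:notpush} which describes how pullback along $\nt[\rs]c$ commutes with the left dualization operator. Nothing new needs to be computed.

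For (a), I start on the left-hand side and apply \Cref{thm:duality} with $Q \defeq \push{c}P$ to rewrite $\pt[\rs]{(\push{c}P)}$ as $\negoR{(\nt[\rs]{(\push{c}P)})}$. I then use \Cref{prop:pushpres} to replace $\nt[\rs]{(\push{c}P)}$ by $\pull{(\nt[\rs]c)}\nt[\rs]P$ inside the outer $\negoR{(-)}$. This yields the target isomorphism in a single chain:
\begin{align*}
\pt[\rs]{(\push{c}P)}
  &\viso \negoR{(\nt[\rs]{(\push{c}P)})} \tag{\Cref{thm:duality}} \\
  &\viso \negoR{(\pull{(\nt[\rs]c)}\nt[\rs]P)}. \tag{\Cref{prop:pushpres}}
\end{align*}

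For (b), I continue from the right-hand side of (a) and replace the inner $\nt[\rs]P$ using the other half of \Cref{thm:duality}, namely $\nt[\rs]{P} \viso \negoL{(\pt[\rs]{P})}$. This produces $\pull{(\nt[\rs]c)}\negoL{(\pt[\rs]P)}$ inside the outer $\negoR{(-)}$, which is exactly the shape to which \Cref{lem:notpush} applies; using it with $\phi \defeq \pt[\rs]P$ rewrites this as $\negoL{(\push{\pt[\rs]c}\pt[\rs]P)}$, giving (b):
\begin{align*}
\pt[\rs]{(\push{c}P)}
  &\viso \negoR{(\pull{(\nt[\rs]c)}\nt[\rs]P)} \tag{part (a)} \\
  &\viso \negoR{(\pull{(\nt[\rs]c)}\negoL{(\pt[\rs]P)})} \tag{\Cref{thm:duality}} \\
  &\viso \negoR{(\negoL{(\push{\pt[\rs]c}\pt[\rs]P)})}. \tag{\Cref{lem:notpush}}
\end{align*}

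There is no real obstacle here; the only subtlety worth flagging is bookkeeping about where the representable $\pt[\rs]P$ and $\nt[\rs]P$ live (pointed categories vs.\ presheaves via the Yoneda embedding), and making sure the appeals to \Cref{thm:duality} and \Cref{lem:notpush} are instantiated at the correct type ($A$ for the inner isomorphism, $B$ for the outer one). Everything else is direct substitution using isomorphisms already established earlier in the paper.
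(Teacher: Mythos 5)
Your proposal is correct and follows exactly the same route as the paper's own proof: part (a) via \Cref{thm:duality} followed by \Cref{prop:pushpres}, and part (b) by continuing from (a) with the other half of \Cref{thm:duality} and then \Cref{lem:notpush}. Nothing further is needed.
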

\begin{proof}
We can derive equation (\ref{eqn:notpullnot}) in two steps:
$$
\pt[\rs]{(\push{c}P)}  \visod{\text{(\Cref{thm:duality})}} \negoR{\nt[\rs]{(\push{c}P)}} 
  \visod{\text{(\Cref{prop:pushpres})}} \negoR{(\pull{(\nt[\rs]c)}\nt[\rs]P)} 
$$
Then equation (\ref{eqn:notnotpush}) follows in two more steps from (\ref{eqn:notpullnot}):
$$
\negoR{(\pull{(\nt[\rs]c)}\nt[\rs]P)} 
 \visod{\text{(\Cref{thm:duality})}} \negoR{(\pull{(\nt[\rs]c)}\negoL{(\pt[\rs]P)})} 
 \visod{\text{(\Cref{lem:notpush})}} \negoR{(\negoL{(\push{\pt[\rs]c}\pt[\rs]P)})}
$$
\end{proof}
\begin{proposition}\label{prop:notnottensor}
Let $W$ be a monoid in a monoidal refinement system with enough pushforwards.
Then for any $\rs$-refinements $P,Q \refs W$, we have
$
\pt[\rs]{(P \otimes_W Q)} \viso
\negoR{(\negoL{(\pt[\rs]P \otimes_{\pt[\rs]W} \pt[\rs]Q)})}
$.
\end{proposition}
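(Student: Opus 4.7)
The plan is to chain together the previously established results in a straightforward way. Nothing new needs to be invented: the statement will essentially fall out of the combination of \Cref{prop:notpullnot}(b), \Cref{prop:genday}(a), and the functoriality of pushforwards from \Cref{prop:pullpushprops}.

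First, I would unfold the definition $P \otimes_W Q \defeq \push{\pairing}(P \mul Q)$ where $\pairing : W \mul W \to W$ is the monoid multiplication, and apply \Cref{prop:notpullnot}(b) to the $\rs$-pushforward of $P \mul Q$ along $\pairing$, obtaining
$$
\pt{(P\otimes_W Q)} \;=\; \pt{(\push{\pairing}(P\mul Q))} \;\viso\; \negoR{(\negoL{(\push{\pt{\pairing}}\pt{(P\mul Q)})})}.
$$
Next, I would use \Cref{prop:genday}(a), which gives the vertical isomorphism $\pt{(P\mul Q)} \viso \push{\str_{W,W}}(\pt{P}\mul \pt{Q})$, to rewrite the inner pushforward as
$$
\push{\pt{\pairing}}\pt{(P\mul Q)} \;\viso\; \push{\pt{\pairing}}\push{\str_{W,W}}(\pt{P}\mul \pt{Q}).
$$

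The last ingredient is \Cref{prop:pullpushprops}(2), which collapses iterated pushforwards into a pushforward along the composite; this yields
$$
\push{\pt{\pairing}}\push{\str_{W,W}}(\pt{P}\mul \pt{Q}) \;\viso\; \push{(\str_{W,W};\pt{\pairing})}(\pt{P}\mul \pt{Q}) \;=\; \pt{P} \otimes_{\pt{W}} \pt{Q},
$$
where the final equality is just the definition of the fiberwise (Day) tensor product on the fiber of $\pt{W}$ in $\upc : \Psh \to \Cat$, as recalled at the end of \Cref{sec:conn}. Since $\negoL{(-)}$ and $\negoR{(-)}$ are defined via pullbacks along fixed functors and residuation, they preserve vertical isomorphisms, so threading the above isomorphisms through both dualizations yields the desired $\pt{(P \otimes_W Q)} \viso \negoR{(\negoL{(\pt{P} \otimes_{\pt{W}} \pt{Q})})}$.

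There is no real obstacle here: the argument is a three-line reduction to earlier results. The only thing to be mildly careful about is that in applying \Cref{prop:notpullnot}(b) we use the refinement $P\mul Q \refs W \mul W$ and the term $\pairing : W\mul W \to W$, and that in applying \Cref{prop:pullpushprops}(2) we recognize $\str_{W,W};\pt{\pairing}$ as the monoidal structure on $\pt{W}$ defining $\otimes_{\pt{W}}$. This mirrors the pattern of \Cref{prop:notpullnot} itself — a ``negative encoding'' of the positive representation of a pushforward — now applied to the specific pushforward that implements the fiberwise tensor product.
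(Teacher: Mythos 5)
Your proposal is correct and uses exactly the same ingredients as the paper's proof --- \Cref{prop:notpullnot}(\ref{eqn:notnotpush}), \Cref{prop:genday}(\ref{eqn:presmul}), \Cref{prop:pullpushprops}, and the definitions of $\otimes_W$ and $\otimes_{\pt{W}}$ as pushforwards --- merely assembled in the opposite order (you start from $\pt{(P\otimes_W Q)}$ and rewrite toward the double dual, while the paper first identifies $\pt{P}\otimes_{\pt{W}}\pt{Q}$ with $\push{\pt{\pairing}}\pt{(P\mul Q)}$ and then invokes \Cref{prop:notpullnot}). Your explicit remark that the dualization operators preserve vertical isomorphisms is a small point the paper leaves implicit, but the argument is the same.
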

\begin{proof}
We have that
\begin{align*}
\pt[\rs]P \otimes_{\pt{W}} \pt[\rs]Q
  &\defeq \push{(\str_{W,W};\pt{p})}(\pt{P}\mul \pt{Q})
\\ &\viso \push{\pt{p}}\push{\str_{W,W}}(\pt{P}\mul \pt{Q}) \tag{\Cref{prop:pullpushprops}}
\\ &\viso \push{\pt{p}}\pt{(P\mul Q)} \tag{\Cref{prop:genday}}
\end{align*}
and moreover $P \otimes_W Q \defeq \push{p}(P\mul Q)$, so that the proposition follows as a corollary of \Cref{prop:notpullnot}.
\end{proof}
\begin{example}
In Hoare Logic, a pushforward $\push{c}P$ is called a \emph{strongest postcondition} \cite[see \S2]{gordoncollavizza}.
Although in general strongest postconditions need not exist, it is easy to check that in the case when $\push{c}P$ does exist, its positive representation 
$$\pt{(\push{c}P)} = \set{(P',c') \mid \ \vd \triple{P'}{c'}{\push{c}P}}$$
(as computed in \Cref{ex:funhoare}) contains exactly the same guarded commands as 
$$
\negoR{(\pull{(\nt{c})}\nt{P})} = \set{(P',c') \mid \forall (d,R). \ \triple{P}{c;d}{R} \vd \triple{P'}{c';d}{R}}.
$$
Conversely, this latter formula provides a way of reasoning using strongest postconditions, even when they do not exist.
\qed
\end{example}

\begin{example}
Let $A, B \in \linFm$ be two formulas of linear logic, considered as singleton contexts $A \refs \fin{1}$ and $B \refs \fin{1}$  in the refinement system $|{-}| : \linCtx \to \Fin$ of \Cref{ex:funlinear}.
The object $\fin{1}$ is a monoid in $\Fin$, with multiplication $\mu : \fin{2} \to \fin{1}$ defined as the unique map from the two-point set onto the one-point set.
In linear logic, the left introduction rule 
$$
\infer[{\otimes}L]{A\otimes B,\Gamma\vdash C}{A,B,\Gamma \vdash C}
$$
for multiplicative conjunction is invertible, in the sense that it induces a bijection between the proofs of $A,B,\Gamma \vdash C$ and the proofs of $A\otimes B,\Gamma \vdash C$ (considered up to the appropriate equational theory).
Taking $\Gamma$ to be empty, this ensures that the pushforward $\push{\mu}(A,B)$ exists in $|{-}| : \linCtx \to \Fin$, and is given by the formula $A \otimes B \refs \fin{1}$.
Since this pushforward exists for every pair of formulas, by \Cref{prop:laxpres-monoid} there is a lax monoidal functor
$$
\pt{(-)} : \linCtx_{\fin{1}} \to [\linCtx^\op,\Set]
$$
(recall that $\linCtx$ is equivalent to $\pt{\fin{1}}$), with a coercion
$$
\pt{A} \otimes_+ \pt{B} \nseq \pt{(A \otimes B)}
$$
for every $A \refs \fin{1}$ and $B \refs \fin{1}$.
Here we write $\otimes_+$ for the Day tensor product on $[\linCtx^\op,\Set]$, which can be computed as
$$\phi \otimes_+ \psi \defeq \push{(m_{1,1};\pt\mu)}(\phi\mul\psi) \viso \push{\pt{\mu}} \push{m_{1,1}}(\phi\mul\psi)$$
for any pair of presheaves $\phi,\psi\refs \linCtx$, where $m_{1,1} : \linCtx\times\linCtx\to\pt{\fin{2}}$ is the functor defined by the lax monoidal structure of $\pt{(-)} : \Fin \to \Cat$ (see (\ref{eqn:str}) in \Cref{sec:conn}).
In particular, we have that $\pt{A} \otimes_+ \pt{B} \viso \push{\pt{\mu}} \push{m_{1,1}}(\pt{A}\mul\pt{B})$.

Now, an object of $\pt{\fin{2}}$ (namely, a context $\Gamma \refs \fin{n}$ together with a function $f : \fin{n} \to \fin{2}$) is nothing but a partition of a context $\Gamma$ into two disjoint pieces $\Gamma_1$ and $\Gamma_2$, which may be notated conveniently as a shuffle $\Gamma = \Gamma_1\shuffle\Gamma_2$.
So, the functor $m_{1,1} : \linCtx\times\linCtx\to\pt{\fin{2}}$ is the operation which takes a pair of contexts $\Gamma_1$ and $\Gamma_2$ into the corresponding partition $\Gamma = (\Gamma_1,\Gamma_2)$ of a single context into two contiguous pieces.
By \Cref{prop:genday}, we know that $\push{m_{1,1}}(\pt{A}\mul\pt{B}) \viso \pt{(A,B)}$, and the latter simplifies to
$$\pt{(A,B)} = \Gamma_1\shuffle\Gamma_2 \mapsto \linCtx(\Gamma_1,A) \times \linCtx(\Gamma_2,B).$$
Next, consider the pushforward of $\pt{(A,B)}$ along $\pt{\mu} : \pt{\fin{2}} \to \pt{\fin{1}}$.
By the coend formula for pushforwards of presheaves (see \Cref{prop:upclogical}), the presheaf $\push{\pt{\mu}}\pt{(A,B)} \refs \linCtx$ may be calculated as follows:
\begin{align*}
\push{\pt{\mu}}\pt{(A,B)} &= \Gamma \mapsto \exists \Gamma_1,\Gamma_2. \linCtx(\Gamma, (\Gamma_1,\Gamma_2)) \times \linCtx(\Gamma_1,A) \times \linCtx(\Gamma_2,B)
\end{align*}
There is no reason why this presheaf should be isomorphic to
$$\pt{(A\otimes B)} = \Gamma \mapsto \linCtx(\Gamma,A\otimes B).$$
In particular, a counterexample is provided by evaluating both presheaves at the singleton context $\Gamma = A\otimes B$, since one can certainly prove $A\otimes B \vd A \otimes B$, but in general there is no way to split $\Gamma$ into a context proving $A$ and a context proving $B$.

On the other hand, \Cref{prop:notpullnot}(\ref{eqn:notnotpush}) tells us that this mismatch is accounted for by taking a double dual:
\begin{equation}
\pt{(A\otimes B)} \viso \negoR{(\negoL{(\push{\pt{\mu}}\pt{(A,B)})})} \label{eqn:notnot*}
\end{equation}
By \Cref{lem:notpush} and \Cref{thm:duality}, Equation (\ref{eqn:notnot*}) is equivalent to
\begin{equation}
\pt{(A\otimes B)} \viso \negoR{(\pull{(\nt{\mu})}\nt{(A,B)})} \label{eqn:notnot*2}
\end{equation}
which can be derived from the simple equation
\begin{equation}
\nt{(A\otimes B)} \viso \pull{(\nt{\mu})}\nt{(A,B)} \label{eqn:notnot*3}
\end{equation}
by one application of \Cref{thm:duality}.
\Cref{eqn:notnot*3} itself follows from the definition of $A\otimes B$ as a pushforward and \Cref{prop:pushpres}.
In order to understand this equation, recall from \Cref{ex:opplinear} that $\nt{\fin{1}}$ is the category of pointed contexts, and that the negative representation $\nt{(A\otimes B)} \refs \nt{\fin{1}}$ is defined by the action
$$
\fctx{\Delta}{C} \mapsto \linFm(A\otimes B;C) \times \linCtx(\cdot, \Delta).
$$
In other words, $\nt{(A\otimes B)}$ transports a pointed context $\fctx{\Delta}{C}$ into the set of tuples consisting of a proof of $A\otimes B \vdash C$ together with a closed proof of each formula in $\Delta$.
A careful computation shows that $\pull{(\nt{\mu})}\nt{(A,B)}$ is defined by the action
$$\fctx{\Delta}{C} \mapsto \linFm(A,B;C) \times \linCtx(\cdot, \Delta).$$
So \Cref{eqn:notnot*3} reduces to the fact that the left introduction rule ${\otimes}L$ is invertible.
In particular, observe that whereas we could distinguish $\pt{(A\otimes B)}$ from $\push{\pt{\mu}}\pt{(A,B)}$ by considering the context $\Gamma = A\otimes B$, their duals $\nt{(A\otimes B)}$ and $\pull{(\nt{\mu})}\nt{(A,B)}$ cannot be distinguished by \emph{pointed} contexts.


\end{example}



\onecolumn

\appendix
\section{Proof of \Cref{prop:rapp}}
\label{appendix}

We complete the proof of \Cref{prop:rapp} by showing that the typing rules $L\pull{G[c]}$ and $R\pull{G[c]}$ defined by
$$
\infer[G]{G[\pull{c}Q] \seq{G[c]} G[Q]}{\infer[L\pull{c}]{\pull{c}Q \seq{c} Q}{}}
\qquad\qquad
\infer[\conv_2]{P \seq{d} G[\pull{c}Q]}{
\infer[;]{P \seq{\unit\hid{_X};GF[d];G[\coun\hid{_A}]} G[\pull{c}Q]}{
 \infer[\unit]{P \seq{\unit\hid{_X}} GF[P]}{} &
 \infer[G]{GF[P] \seq{GF[d];G[\coun\hid{_A}]} G[\pull{c}Q]}{
 \infer[R\pull{c}]{F[P] \seq{F[d];\coun\hid{_A}} \pull{c}Q}{
 \infer[\conv_1]{F[P] \seq{F[d];\coun\hid{_A};c} Q}{
 \infer[;]{F[P] \seq{F[d];FG[c];\coun\hid{_B}} Q}{
  \infer[F]{F[P] \seq{F[d];FG[c]} FG[Q]}{P \seq{d;G[c]} G[Q]} &
  \infer[\coun]{FG[Q] \seq{\coun\hid{_B}} Q}{}
}}}}}}
$$
satisfy the $\beta$ equation
$$
\small
\infer[;]{P \seq{d;G[c]} G[Q]}{
 \infer[R\pull{G[c]}]{P \seq{d} G[\pull{c}Q]}{P \seqd{\beta}{d;G[c]} G[Q]} &
 \infer[L\pull{G[c]}]{G[\pull{c}Q] \seq{G[c]} G[Q]}{}
}
\ \ \deq\ \ 
P \seqd{\beta}{d;G[c]} G[Q]
$$
as well as the $\eta$ equation
$$
\small
P \seqd{\eta}{d} G[\pull{c}Q]
\ \ \deq\ \ 
\infer[R\pull{G[c]}]{P \seq{d} G[\pull{c}Q]}
{\infer[;]{P \seq{d;G[c]} G[Q]}
{P \seqd{\eta}{d} G[\pull{c}Q] & \infer[L\pull{G[c]}]{G[\pull{c}Q] \seq{G[c]} G[Q]}{}}}
$$
In both cases, we establish these equations by showing a series of typing derivations, with each successive derivation related to its predecessor by an elementary step of reasoning.

\subsection{The $\beta$ equation}

$$
\small
\infer[;]{P \seq{d;G[c]} G[Q]}{
\infer[\conv]{P \seq{d} G[\pull{c}Q]}{
\infer[;]{P \seq{\unit_X;GF[d];G[\coun_A]} G[\pull{c}Q]}{
 \infer[\unit]{P \seq{\unit_X} GF[P]}{} &
 \infer[G]{GF[P] \seq{GF[d];G[\coun_A]} G[\pull{c}Q]}{
 \infer[R\pull{c}]{F[P] \seq{F[d];\coun_A} \pull{c}Q}{
 \infer[\conv]{F[P] \seq{F[d];\coun_A;c} Q}{
 \infer[;]{F[P] \seq{F[d];FG[c];\coun_B} Q}{
  \infer[F]{F[P] \seq{F[d];FG[c]} FG[Q]}{P \seqd{\beta}{d;G[c]} G[Q]} &
  \infer[\coun]{FG[Q] \seq{\coun_B} Q}{}
}}}}}}
&
\infer[G]{G[\pull{c}Q] \seq{G[c]} G[Q]}{\infer[L\pull{c}]{\pull{c}Q \seq{c} Q}{}}
}
$$

\hrule

$$
\small
\infer[\conv]{P \seq{d;G[c]} G[Q]}{
\infer[;]{P \seq{\unit_X;GF[d];G[\coun_A];G[c]} G[Q]}{
\infer[;]{P \seq{\unit_X;GF[d];G[\coun_A]} G[\pull{c}Q]}{
 \infer[\unit]{P \seq{\unit_X} GF[P]}{} &
 \infer[G]{GF[P] \seq{GF[d];G[\coun_A]} G[\pull{c}Q]}{
 \infer[R\pull{c}]{F[P] \seq{F[d];\coun_A} \pull{c}Q}{
 \infer[\conv]{F[P] \seq{F[d];\coun_A;c} Q}{
 \infer[;]{F[P] \seq{F[d];FG[c];\coun_B} Q}{
  \infer[F]{F[P] \seq{F[d];FG[c]} FG[Q]}{P \seqd{\beta}{d;G[c]} G[Q]} &
  \infer[\coun]{FG[Q] \seq{\coun_B} Q}{}
}}}}}
&
\infer[G]{G[\pull{c}Q] \seq{G[c]} G[Q]}{\infer[L\pull{c}]{\pull{c}Q \seq{c} Q}{}}
}}
$$

\hrule

$$
\small
\infer[\conv]{P \seq{d;G[c]} G[Q]}{
\infer[;]{P \seq{\unit_X;GF[d];G[\coun_A];G[c]} G[Q]}{
 \infer[\unit]{P \seq{\unit_X} GF[P]}{} &
 \infer[;]{GF[P] \seq{GF[d];G[\coun_A];G[c]} G[\pull{c}Q]}{
 \infer[G]{GF[P] \seq{GF[d];G[\coun_A]} G[\pull{c}Q]}{
 \infer[R\pull{c}]{F[P] \seq{F[d];\coun_A} \pull{c}Q}{
 \infer[\conv]{F[P] \seq{F[d];\coun_A;c} Q}{
 \infer[;]{F[P] \seq{F[d];FG[c];\coun_B} Q}{
  \infer[F]{F[P] \seq{F[d];FG[c]} FG[Q]}{P \seqd{\beta}{d;G[c]} G[Q]} &
  \infer[\coun]{FG[Q] \seq{\coun_B} Q}{}
}}}}
&
\infer[G]{G[\pull{c}Q] \seq{G[c]} G[Q]}{\infer[L\pull{c}]{\pull{c}Q \seq{c} Q}{}}
}}}
$$

\hrule

$$
\small
\infer[\conv]{P \seq{d;G[c]} G[Q]}{
\infer[;]{P \seq{\unit_X;GF[d];G[\coun_A];G[c]} G[Q]}{
 \infer[\unit]{P \seq{\unit_X} GF[P]}{} &
 \infer[G]{GF[P] \seq{GF[d];G[\coun_A];G[c]} G[\pull{c}Q]}{
 \infer[;]{F[P] \seq{F[d];\coun_A;c} G[\pull{c}Q]}{
 \infer[R\pull{c}]{F[P] \seq{F[d];\coun_A} \pull{c}Q}{
 \infer[\conv]{F[P] \seq{F[d];\coun_A;c} Q}{
 \infer[;]{F[P] \seq{F[d];FG[c];\coun_B} Q}{
  \infer[F]{F[P] \seq{F[d];FG[c]} FG[Q]}{P \seqd{\beta}{d;G[c]} G[Q]} &
  \infer[\coun]{FG[Q] \seq{\coun_B} Q}{}
}}}
&
\infer[L\pull{c}]{\pull{c}Q \seq{c} Q}{}
}}}}
$$

\hrule

$$
\small
\infer[\conv]{P \seq{d;G[c]} G[Q]}{
\infer[;]{P \seq{\unit_X;GF[d];G[\coun_A];G[c]} G[Q]}{
 \infer[\unit]{P \seq{\unit_X} GF[P]}{} &
 \infer[G]{GF[P] \seq{GF[d];G[\coun_A];G[c]} G[\pull{c}Q]}{
 \infer[\conv]{F[P] \seq{F[d];\coun_A;c} Q}{
 \infer[;]{F[P] \seq{F[d];FG[c];\coun_B} Q}{
  \infer[F]{F[P] \seq{F[d];FG[c]} FG[Q]}{P \seqd{\beta}{d;G[c]} G[Q]} &
  \infer[\coun]{FG[Q] \seq{\coun_B} Q}{}
}}}
}}
$$

\hrule

$$
\small
\infer[\conv]{P \seq{d;G[c]} G[Q]}{
\infer[;]{P \seq{\unit_X;GF[d];GFG[c];G[\coun_A]} G[Q]}{
 \infer[\unit]{P \seq{\unit_X} GF[P]}{} &
 \infer[G]{GF[P] \seq{GF[d];GFG[c];G[\coun_B]} G[\pull{c}Q]}{
 \infer[;]{F[P] \seq{F[d];FG[c];\coun_B} Q}{
  \infer[F]{F[P] \seq{F[d];FG[c]} FG[Q]}{P \seqd{\beta}{d;G[c]} G[Q]} &
  \infer[\coun]{FG[Q] \seq{\coun_B} Q}{}
}}
}}
$$

\hrule

$$
P \seqd{\beta}{d;G[c]} G[Q]
$$

\pagebreak
\subsection{The $\eta$ equation}

$$P \seqd{\eta}{d} G[\pull{c}Q]$$

\hrule

$$
\small
\infer[\conv]{P \seq{d} G[\pull{c}Q]}{
\infer[;]{P \seq{\unit_X;GF[d];G[\coun_A]} G[\pull{c}Q]}{
 \infer[\unit]{P \seq{\unit_X} GF[P]}{} &
 \infer[G]{GF[P] \seq{GF[d];G[\coun_A]} G[\pull{c}Q]}{
 \infer[;]{FG[\pull{c}Q] \seq{F[d];\coun_A} G[Q]}{
  \infer[F]{F[P] \seq{F[d]} FG[\pull{c}Q]}{
  P \seqd{\eta}{d} G[\pull{c}Q]} &
  \infer[\coun]{FG[\pull{c}Q] \seq{\coun_A} \pull{c}Q}{}}
}}}
$$

\hrule

$$
\small
\infer[\conv]{P \seq{d} G[\pull{c}Q]}{
\infer[;]{P \seq{\unit_X;GF[d];G[\coun_A]} G[\pull{c}Q]}{
 \infer[\unit]{P \seq{\unit_X} GF[P]}{} &
 \infer[G]{GF[P] \seq{GF[d];G[\coun_A]} G[\pull{c}Q]}{
 \infer[R\pull{c}]{F[P] \seq{F[d];\coun_A} \pull{c}Q}{
 \infer[;]{F[P] \seq{F[d];\coun_A;c} Q}{
 \infer[;]{FG[\pull{c}Q] \seq{F[d];\coun_A} G[Q]}{
  \infer[F]{F[P] \seq{F[d]} FG[\pull{c}Q]}{
  P \seqd{\eta}{d} G[\pull{c}Q]} &
  \infer[\coun]{FG[\pull{c}Q] \seq{\coun_A} \pull{c}Q}{}}
  &
  \infer[L\pull{c}]{\pull{c}Q \seq{c} Q}{}
}}}}}
$$

\hrule

$$
\small
\infer[\conv]{P \seq{d} G[\pull{c}Q]}{
\infer[;]{P \seq{\unit_X;GF[d];G[\coun_A]} G[\pull{c}Q]}{
 \infer[\unit]{P \seq{\unit_X} GF[P]}{} &
 \infer[G]{GF[P] \seq{GF[d];G[\coun_A]} G[\pull{c}Q]}{
 \infer[R\pull{c}]{F[P] \seq{F[d];\coun_A} \pull{c}Q}{
 \infer[;]{F[P] \seq{F[d];\coun_A;c} Q}{
  \infer[F]{F[P] \seq{F[d]} FG[\pull{c}Q]}{
  P \seqd{\eta}{d} G[\pull{c}Q]} &
\infer[;]{FG[\pull{c}Q] \seq{\coun_A;c} G[Q]}{
 \infer[\coun]{FG[\pull{c}Q] \seq{\coun_A} \pull{c}Q}{}
  &
  \infer[L\pull{c}]{\pull{c}Q \seq{c} Q}{}
}}}}}}
$$
\pagebreak

\hrule

$$
\small
\scalebox{0.9}{
\infer[\conv]{P \seq{d} G[\pull{c}Q]}{
\infer[;]{P \seq{\unit_X;GF[d];G[\coun_A]} G[\pull{c}Q]}{
 \infer[\unit]{P \seq{\unit_X} GF[P]}{} &
 \infer[G]{GF[P] \seq{GF[d];G[\coun_A]} G[\pull{c}Q]}{
 \infer[R\pull{c}]{F[P] \seq{F[d];\coun_A} \pull{c}Q}{
 \infer[\conv]{F[P] \seq{F[d];\coun_A;c} Q}{
 \infer[;]{F[P] \seq{F[d];FG[c];\coun_B} Q}{
  \infer[F]{F[P] \seq{F[d]} FG[\pull{c}Q]}{
  P \seqd{\eta}{d} G[\pull{c}Q]} &
\infer[;]{FG[\pull{c}Q] \seq{FG[c];\coun_B} G[Q]}{
 \infer[F]{FG[\pull{c}Q] \seq{FG[c]} FG[Q]}{
 \infer[G]{G[\pull{c}Q] \seq{G[c]} G[Q]}{\infer[L\pull{c}]{\pull{c}Q \seq{c} Q}{}}}
 &
  \infer[\coun]{FG[Q] \seq{\coun_B} Q}{}
}}}}}}}}
$$

\hrule

$$
\small
\scalebox{0.9}{
\infer[\conv]{P \seq{d} G[\pull{c}Q]}{
\infer[;]{P \seq{\unit_X;GF[d];G[\coun_A]} G[\pull{c}Q]}{
 \infer[\unit]{P \seq{\unit_X} GF[P]}{} &
 \infer[G]{GF[P] \seq{GF[d];G[\coun_A]} G[\pull{c}Q]}{
 \infer[R\pull{c}]{F[P] \seq{F[d];\coun_A} \pull{c}Q}{
 \infer[\conv]{F[P] \seq{F[d];\coun_A;c} Q}{
 \infer[;]{F[P] \seq{F[d];FG[c];\coun_B} Q}{
  \infer[;]{F[P] \seq{F[d];FG[c]} FG[Q]}{ 
  \infer[F]{F[P] \seq{F[d]} FG[\pull{c}Q]}{
  P \seqd{\eta}{d} G[\pull{c}Q]} &
 \infer[F]{FG[\pull{c}Q] \seq{FG[c]} FG[Q]}{
 \infer[G]{G[\pull{c}Q] \seq{G[c]} G[Q]}{\infer[L\pull{c}]{\pull{c}Q \seq{c} Q}{}}}} &
  \infer[\coun]{FG[Q] \seq{\coun_B} Q}{}
}}}}}}}
$$

\hrule

$$
\small
\infer[\conv]{P \seq{d} G[\pull{c}Q]}{
\infer[;]{P \seq{\unit_X;GF[d];G[\coun_A]} G[\pull{c}Q]}{
 \infer[\unit]{P \seq{\unit_X} GF[P]}{} &
 \infer[G]{GF[P] \seq{GF[d];G[\coun_A]} G[\pull{c}Q]}{
 \infer[R\pull{c}]{F[P] \seq{F[d];\coun_A} \pull{c}Q}{
 \infer[\conv]{F[P] \seq{F[d];\coun_A;c} Q}{
 \infer[;]{F[P] \seq{F[d];FG[c];\coun_B} Q}{
  \infer[F]{F[P] \seq{F[d];FG[c]} FG[Q]}{ 
  \infer[;]{P \seq{d;G[c]} G[Q]}{
   P \seqd{\eta}{d} G[\pull{c}Q] &
   \infer[G]{G[\pull{c}Q] \seq{G[c]} G[Q]}{\infer[L\pull{c}]{\pull{c}Q \seq{c} Q}{}}}} &
  \infer[\coun]{FG[Q] \seq{\coun_B} Q}{}
}}}}}}
$$

\end{document}